\documentclass[pdftex,envcountsame,runningheads]{llncs}

\usepackage%
{hyperref}

\usepackage{orcidlink}
\usepackage{anyfontsize}
\usepackage{amsmath,amsfonts,latexsym}%
\usepackage{clrscode3e}
\usepackage{mathpartir}
\usepackage{multirow}

\usepackage[bbgreekl]{mathbbol}
\usepackage{amssymb} %
\DeclareSymbolFontAlphabet{\mathbb}{AMSb}
\DeclareSymbolFontAlphabet{\mathbbl}{bbold}

\usepackage[]{todonotes} %
\usepackage[inline]{enumitem}		%

\usepackage{xifthen}%

\usepackage{tikz}
\usetikzlibrary{calc,arrows} %
\usetikzlibrary{shapes.multipart}

\usepackage{amsfonts}
\usepackage{amssymb}
\usepackage{graphicx}
\usepackage[T1]{fontenc} 							%
\usepackage[utf8]{inputenc} 							%

\usepackage{mdframed}
\usepackage{comment}
\definecolor{shadecolor}{gray}{0.9}

{\begin{mdframed}[backgroundcolor=red!60] \noindent\textbf{Note.\ifempty{#1}{}{~(#1)}~}}%
{\end{mdframed}}

\providecommand*{\ifempty}[3]{\ifthenelse{\isempty{#1}}{#2}{#3}}

\newcommand{\ie}{\textit{i.e.}}
\newcommand{\eg}{\textit{e.g.}}
\newcommand{\cf}{\textit{cf.}}

\usepackage{mathrsfs} %
\DeclareMathOperator{\flows}{\mathcal{F}}%
\DeclareMathOperator{\cc}{\mathcal{N}}%
\DeclareMathOperator{\hc}{\hat{\cc}}
\DeclareMathOperator{\hflows}{\hat{\flows}}

\newcommand{\V}{\mathbb{V}}
\newcommand{\D}{\mathbb{D}}
\newcommand{\A}{\mathbb{A}}
\newcommand{\T}{\mathbb{T}}
\newcommand{\E}{\mathcal{E}}
\newcommand{\ora}[1][]{\ifempty{#1}{O}{O^{#1}}}		%
\newcommand{\rest}{T}		%

\newcommand{\e}{\varepsilon}

\DeclareMathOperator{\var}{Var}
\DeclareMathOperator{\args}{Arg}

\newcommand{\lift}{L}

\newcommand{\argUnion}[1]{\overline{#1}}

\title{\texorpdfstring{Computing Fixed Points \\ using Dependency Oracles}{Computing Fixed Points using Dependency Oracles}} %

\author{Giorgio Bacci%
    \orcidlink{0000-0003-4004-6049}
	\and Giovanni Bacci%
    \orcidlink{0000-0001-8529-0681}
	\and Kim G. Larsen%
    \orcidlink{0000-0002-5953-3384}
	\and Daniele Toller%
    \orcidlink{0000-0003-3007-576X}
}

 \authorrunning{G. Bacci, G. Bacci, K.\,G. Larsen, and D. Toller} 

\institute{Department of Computer Science, Aalborg University, Denmark \\ \texttt{\{grbacci,giovbacci,klg,danieleto\}@cs.aau.dk }}

\begin{document}

\maketitle

\begin{abstract}
We present global and local algorithms for solving systems of equations over Noetherian posets with a bottom element, a general setting underlying many verification problems. 
Our algorithms compute the solution of a selected variable by restricting exploration to those parts of the system required to determine its value.
We achieve this by computing variable dependencies by means of \emph{dependency oracles}. Oracles guide the exploration of the system and provide sound termination criteria for local fixed-point computation. A key advantage of our approach is its flexibility: oracles can be customized, composed, or over-approximated, offering a principled way to trade precision for performance without compromising correctness.

We evaluate our solution against existing algorithms from the literature and show that our prototype implementation is competitive and often outperforms specialized solutions, while remaining simple and adaptable across diverse application domains.
\end{abstract}

\section{Introduction}

Fixed point computation underlies a wide range of foundational techniques in computer science, from program semantics to verification. In static analysis, for example, control-flow invariants correspond to fixed points of systems of equations over abstract domains; in model checking, verifying temporal properties reduces to computing fixed points (greatest for safety, least for liveness).

The simplest, yet most used~\cite{Apt1988,KamUllman1977,PaigeTarjan1987,Puterman1994}, method for computing fixed points is Kleene iteration, named after Kleene’s celebrated theorem. 
Given a monotone function $F$ over a poset with a least element $\bot$, the least fixed point 
$\mu F$ can be computed by iteratively applying $F$, starting from $\bot$. 
If the poset is Noetherian, 
the chain $\bot \sqsubseteq F(\bot) \sqsubseteq F ( F(\bot) ) \sqsubseteq \dots$ stabilizes after finitely many steps, guaranteeing termination. 

In practice, most applications can be expressed as solving a system of equations, where an application of $F$ updates all variables simultaneously. However, in many cases one is interested in the solution of a specific variable $\tilde x$ (target), which allows in certain situations to explore only a subset of the equations, making it unnecessary to update every variable at each step. A more efficient approach performs single updates via a function $F_x$ that updates the variable $x$ and leaves all others unchanged. To make this effective, updates must follow precise selection criteria that identify the variables relevant to $\tilde x$ and prune the search space, much like the locality principles of local model-checking algorithms like~\cite{And94,ELS22,LiuS98,VL94}. 

\begin{example} \label{ex:locality}
    Consider the following system of equations on the Boolean lattice $\{\mathbf{ff},\mathbf{tt}\}$, where $\wedge$ and $\vee$ are ordinary logical conjunction and disjunction.
    \begin{align*}
        x = y \vee z \,, &&
        y = y \wedge z \,, &&
        z = \mathbf{tt} \,, &&
        w = \mathbf{tt} \vee w \,.
    \end{align*}
    Assume $x$ is the target. The updates that would increase the initial assignment $\bot = (x\colon \mathbf{ff}; y\colon \mathbf{ff}; z\colon \mathbf{ff}; w\colon \mathbf{ff})$ are for the variables $z$ and $w$. However, an easy inspection of the equations reveals that $w$ has no influence on $x$, so should be avoided and the assignment updated to $F_z(\bot) = (x\colon \mathbf{ff}; y\colon \mathbf{ff}; z\colon \mathbf{tt}; w\colon \mathbf{ff})$. Now $z$ is solved. By exploiting the laws of Boolean algebras (namely, $\mathbf{tt} = y \vee \mathbf{tt}$), one discovers that $y$ has no influence on $x$.
    Now, the only useful update is on $x$, which gives the assignment $F_x \big( F_z(\bot) \big) = (x\colon \mathbf{tt}; y\colon \mathbf{ff}; z\colon \mathbf{tt}; w\colon \mathbf{ff})$, solving $x$. A smart algorithm should terminate without any further updates of the variables.
    
\end{example}

In this paper, we propose \emph{global} and \emph{local} algorithms for computing the solution of a target variable $\tilde x$, that exploit dependency analyses on the variables of the system of equations.
These analyses serve a dual purpose: they guide the selection of the variables which are relevant to solve $\tilde x$, and they provide a precise criterion for determining when $\tilde x$ has reached its least solution.

Our main theoretical contribution is the design of a general framework for analyzing dependencies, both semantically and syntactically, on a system of equations interpreted over a Noetherian poset with a bottom element.

We identify two key relations, \emph{flow} and \emph{now}, which are used to describe how updates to one variable may influence others within the system. Intuitively, if $x$ is related to $y$, then updating $x$ can affect the future value of $y$.
These two relations support different analysis strategies.
The now relation characterizes variable dependencies by focusing on immediate influence: updating $x$ now has a direct effect on $y$. The flow relation focuses on eventual influence: a future update of $x$ has an impact on the value of $y$.
Our interest in these relations is motivated by the fact that they can be used to determine whether a variable is solved independently of the others. This property is used to provide sound termination criteria for our algorithms. 
The termination condition for the global algorithm is derived from the now relation; the one for the local algorithm from the flow relation. 
The difference is that for the former, we assume the equations are
known in advance, while in the latter, equations are discovered incrementally during computation, and only partial information is available.

Another key contribution is the introduction of \emph{dependency oracles}: functions that refine dependency analysis combining information from the current assignment and the knowledge from the system of equations. An oracle is sound if it preserves the correctness of the analysis, \ie, if it does not remove necessary dependencies. The use of oracles allows us to improve performance by avoiding the overhead of computing exact dependency analysis, while preserving correctness.

Sound oracles are closed under composition, intersection, and union, allowing for more precise analyses to be systematically derived.  
Oracles combine flexibility and simplicity as they can be defined at varying levels of precision, reused across domains, or specialized to exploit domain-specific properties.

Our global and local algorithms are implemented in a prototype tool written in Java, which we use to evaluate performance against
competing algorithms, namely, %
WKTool~\cite{JensenLSO16}, CAAL~\cite{AndersenAEHLOSW15}, and ADG~\cite{ELS22}. 
Despite its prototypical nature, our approach achieves substantial speedups across most benchmarks, reaching up to 20 times faster performance.

\paragraph{Related Work.}
Several global and local algorithms have been proposed for fixed-point computation~\cite{And94,AC88,CS93,ELS22,LiuS98,VL94}. The works most closely related to ours are~\cite{LiuS98} and~\cite{ELS22}, which rely on \emph{dependency graphs}: data structures encoding both equations and direct variable dependencies.

The dependency graphs of~\cite{LiuS98} modeled Boolean equation systems following a rigid disjunctive-normal-form structure. Later extensions to more expressive domains~\cite{EnevoldsenJLMS20,JensenLSO16,JensenKLNS19,JensenLS16,MariegaardL17} required ad hoc adaptations. Abstract Dependency Graphs (ADGs)~\cite{EnevoldsenLS19} unified these approaches by supporting assignments over Noetherian posets with customizable edge types and user-defined \emph{ignore functions}.
Our algorithms operate at the same level of generality as ADGs, and our oracles generalize their ignore functions: every sound ignore function is a sound oracle in our sense. Moreover, oracles support richer analyses while providing a systematic framework for establishing soundness.

\paragraph{Outline.}%
Section~\ref{sec:prelim} introduces the preliminaries; Section~\ref{sec:Kleene:Iteration:with:Dependency Analysis} lays out the mathematical foundation behind our algorithms, which are presented in Section~\ref{sec:algorithms}. 
Section~\ref{sec:oracles} is dedicated to oracles and local oracles, where we demonstrate their properties. %
In Section~\ref{sec:extensions}, we focus on oracles designed for systems of equations whose right-hand sides are syntactically expressed as terms over a signature of operations.
Section~\ref{Experimental Results} outlines the experimental results of our Java prototype implementations, and Section~\ref{sec:conclusions} concludes the paper. Full proofs of the results and supplementary material are in the appendix.

\section{Preliminaries and Notation}\label{sec:prelim}
A poset $(P,\sqsubseteq)$ is a set $P$ endowed with a partial order relation $\sqsubseteq$ on it (\ie, reflexive, transitive, and anti-symmetric).
A poset is \emph{Noetherian} if every ascending chain
$p_1 \sqsubseteq p_2 \sqsubseteq p_3 \sqsubseteq \ldots$
eventually stabilizes: there is $n \in \mathbb{N}$ such that $p_n = p_{j}$, for every $j\geq n$. 

A function $f \colon P \to P$ is \emph{monotone} if whenever $p \sqsubseteq p'$, then $f(p) \sqsubseteq f(p')$. By Kleene's fixed point theorem~\cite{Kleene1952-KLEITM}, a monotone function on a Noetherian poset with bottom element $\bot$ has a least fixed point $\mu f$ arising as the stabilising element of the chain $\bot \sqsubseteq f(\bot) \sqsubseteq f(f(\bot)) \sqsubseteq \dots$.
 
We will often work with sets of functions $P^X = \{ f \mid f \colon X \to P \}$ from a set $X$ to a poset $P$, with the point-wise order $f \sqsubseteq f'$ iff $f(x) \sqsubseteq f'(x)$, for all $x \in \V$. The bottom element is the function that assigns the bottom element in $P$ to each $x \in X$. When $X$ is finite, $P^X$ is Noetherian if and only if $P$ is.

\begin{definition}[System of equations] \label{def:sysEq}
Let $\D$ be a Noetherian poset with bottom element.
A \emph{system of equations} $\E$ over $\D$ consists of a finite set $\V$ of variables with an associated equation $x = f_x$, for each $x \in \V$, such that $f_x \colon \D^\V \to \D$ (the \emph{right-hand side}) is monotone.
\end{definition}

An \emph{assignment} for $\E$ is a function $A \colon \V \to \D$ mapping variables to elements in $\D$. An assignment $A$ is a \emph{solution} to $\E$ if $A(x) = f_x(A)$, for all $x \in \V$.

A system $\E$ induces a function $F \colon \D^\V \to \D^\V$ (\emph{update function}) defined as $F(A)(x) = f_x(A)$, which updates an assignment $A \in \D^\V$ by evaluating the right-hand side of each equation. As $f_x$ is monotone for every $x \in \V$, so is $F$. 

Observe that the solutions to $\E$ are exactly the fixed points of $F$. In this paper we are interested in the least solution, $\mu F$, which, under our working assumptions, always exists by Kleene fixed point theorem.
In the rest of the paper, unless stated otherwise, the components of $\E$ are denoted as in \autoref{def:sysEq}.
For instance, $\V$ denotes the set of variables and $f_x$ is the right-hand side of the equation for the variable $x$.

Given $x \in \V$ and a system of equations $\E$, let $F_x \colon \D^\V \to \D^\V$ denote the \emph{partial update} operator associated with $x$. Formally,  for $A \in \D^\V$ and $y \in \V$ 
\begin{equation*}
F_x(A)(y) =
\begin{cases}
f_x(A)  & \text{if } y = x,\\
A(y)& \text{otherwise}.
\end{cases}
\end{equation*}
As $f_x$ is monotone, so is $F_x$. Moreover, %
$F_x(A) (x) = f_x(A) = F(A)(x)$, thus $A = F(A)$ if and only if $A = F_x (A)$ for every $x\in \V$.

We will often consider sequences of updates, so we conveniently extend our notation to strings of variables $\pi = x_1 x_2 \dots x_n \in \V^*$, and define $F_\pi \colon \D^\V \to \D^\V$ as the composition %
$F_{x_n} \cdots F_{x_2} F_{x_1}$ of the single partial updates%
\footnote{Following standard conventions, the composition operator $\circ$ will often be omitted, so for example $F_{x_1 x_2}(A) = F_{x_2} F_{x_1}(A) = F_{x_2} ( F_{x_1}(A) )$. Moreover, as the empty composition is the identity function, for the empty string $\e \in \V^*$ we have $F_\e = id$.}
\emph{performed in the order of appearance in $\pi$}.

The following is a variant of the celebrated theorem by Kleene 
that characterizes the least fixed point in terms of repeated partial updates.
\begin{theorem}[Partial Kleene] \label{th:partialKleene}
For every $\pi \in \V^*$ the following hold true:
\begin{enumerate}[label=(\alph*), ref=\alph*, topsep=0ex]
\item \label{Kleene:intro:item1}
$F_\pi(\bot) \sqsubseteq F_x F_\pi(\bot)$ for every $x \in \V$, and
\item \label{Kleene:intro:item2}
there exists $\pi' \in \V^*$ such that $F_{\pi'} F_\pi(\bot)= \mu F$.
\end{enumerate}
\end{theorem}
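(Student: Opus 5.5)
The plan is to establish two auxiliary invariants by induction on the length of $\pi$: for every $\pi \in \V^*$,
\[
F_\pi(\bot) \sqsubseteq F(F_\pi(\bot)) \qquad\text{and}\qquad F_\pi(\bot) \sqsubseteq \mu F .
\]
Call $A$ \emph{post-fixed} when $A \sqsubseteq F(A)$. Both invariants hold for $\pi = \e$, because $\bot$ is the least element.

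For the inductive step, let $A = F_\pi(\bot)$ satisfy both invariants and set $A' = F_x(A)$.
\begin{itemize}
\item \textbf{$A'$ stays below $\mu F$.} For $y \neq x$ we have $A'(y) = A(y) \sqsubseteq \mu F(y)$. For $y = x$, monotonicity of $f_x$ gives $A'(x) = f_x(A) \sqsubseteq f_x(\mu F) = \mu F(x)$.
\item \textbf{$A'$ is post-fixed.} First, $A \sqsubseteq A'$: at $x$ this is $A(x) \sqsubseteq f_x(A)$, which is exactly post-fixedness of $A$, and elsewhere $A$ and $A'$ agree.
  \begin{itemize}
  \item At $x$: $A'(x) = f_x(A) \sqsubseteq f_x(A') = F(A')(x)$, by monotonicity.
  \item At $y \neq x$: $A'(y) = A(y) \sqsubseteq f_y(A) \sqsubseteq f_y(A')$.
  \end{itemize}
\end{itemize}
Item (a) then follows immediately, since $F_\pi(\bot) \sqsubseteq F_x F_\pi(\bot)$ is precisely the inequality $A \sqsubseteq A'$ derived above.

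For item (b), fix an enumeration $\V = \{x_1,\dots,x_k\}$ (finite by Definition~\ref{def:sysEq}) and let $\sigma = x_1\cdots x_k$. Consider the sequence $A_n = F_{\sigma^n} F_\pi(\bot)$.
\begin{itemize}
\item By (a) applied repeatedly to longer strings, each single update is inflationary, so $A_0 \sqsubseteq A_1 \sqsubseteq \cdots$ is an ascending chain.
\item The chain lives in $\D^\V$, which is Noetherian because $\V$ is finite and $\D$ is Noetherian. Hence it stabilizes: $A_{n+1} = A_n$ for some $n$.
\end{itemize}

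The main obstacle is to turn this stabilization into the statement that $A_n$ is a fixed point of $F$. The concern is that one round of $\sigma$ might leave $A_n$ unchanged even though some single $F_{x_i}$ moves it.
\begin{itemize}
\item This cannot happen, because all the intermediate assignments are ordered: $A_n \sqsubseteq F_{x_1}(A_n) \sqsubseteq \cdots \sqsubseteq A_{n+1} = A_n$.
\item By antisymmetry every intermediate assignment equals $A_n$. Hence $F_{x_i}(A_n) = A_n$ for all $i$, that is, $f_{x_i}(A_n) = A_n(x_i)$.
\item So $A_n = F(A_n)$ is a fixed point of $F$.
\item By the invariant $A_n \sqsubseteq \mu F$ and minimality of $\mu F$, we conclude $A_n = \mu F$.
\end{itemize}
Taking $\pi' = \sigma^n$ proves (b).
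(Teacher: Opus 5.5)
Your proof is correct and follows essentially the paper's route. Your post-fixedness invariant $F_\pi(\bot) \sqsubseteq F(F_\pi(\bot))$ is item (a) quantified over all $x$, proved by the same case split, and (b) is obtained the same way, from the invariant $F_\pi(\bot) \sqsubseteq \mu F$, Noetherian stabilization and minimality of $\mu F$; the only difference is that you make the stabilization step explicit via the round-robin string $\sigma$ and antisymmetry, which the paper leaves implicit.
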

The original Kleene's theorem considers the chain $\bot \sqsubseteq F(\bot) \sqsubseteq F ( F(\bot) ) \sqsubseteq \dots$ of assignments.
The variant above replaces this chain with the set of assignments
\begin{equation}\label{definition:blackboard:A}%
    \A = \{ F_\pi( \bot ) \mid \pi \in \V^* \} \subseteq \D^\V
\end{equation}
obtained from $\bot$ by repeatedly applying partial updates. In \autoref{th:partialKleene}, \eqref{Kleene:intro:item1} says that $F_x$ is extensive on $\A$; and \eqref{Kleene:intro:item2} that regardless of the past choices of variable updates, there is always the possibility to reach the least fixed point $\mu F$ via an appropriate selection of future updates. Thus, $\mu F$ is the top element of $\A$.

\section{Dependency Analysis on Systems of Equations}
\label{sec:Kleene:Iteration:with:Dependency Analysis}

Given a target variable $\tilde{x}$, we can iteratively apply partial updates starting from $\bot$ until $\tilde{x}$ appears to have stabilized. However, \autoref{th:partialKleene} guarantees that this value is the correct solution only when the entire system has stabilized. 

In this section, we establish criteria for certifying that a variable is solved without waiting for the remaining variables to converge.
We do this by means of two relations, \emph{flow} and \emph{now}, which characterize variable dependencies according to whether the influence is eventual or immediate.

Hereafter, we fix a system of equations $\E = \{x=f_x \mid x\in \V \}$ on a Noetherian poset $(\D, \sqsubseteq)$, with variables in $\V$ and least solution $\mu F$. The set $\A$ of assignments is defined as in equation \eqref{definition:blackboard:A}.

\begin{definition}[Flow and Now relations]\label{def:flow}
For an arbitrary $A \in \A$, we define two relations:  
$\flows_A \subseteq \V \times \V$ (\emph{flow relation}) and  $\cc_A \subseteq \V \times \V$ (\emph{now relation}), relating the variables of $\E$ according to how they influence each other 
relative to a given assignment $A$:
\begin{align*}
    \flows_A &= 
    \{ (x,y) \mid \exists \pi, \pi'\in \V^* \text{ such that }
    F_{\pi\pi'}(A)(y) \neq F_{\pi x \pi'}(A) (y) \} \,,
    \\
    \cc_A &= 
    \{ (x,y) \mid \exists \pi \in \V^* \text{ such that }
    F_{\pi}(A)(y) \neq F_{x\pi}(A)(y) \} \,.
\end{align*}
\end{definition} 
Intuitively, $(x,y) \in \flows_A$ if updating $x$ at some later stage can affect the future value of $y$, whereas $(x,y) \in \cc_A$ reflects the influence of an immediate update of $x$ on the future of $y$. 
Thus, the difference is temporal: $\flows_A$ captures complex long-term influence, while $\cc_A$ only captures direct influence. 
\begin{proposition}\label{flows:anti-monotonic}
For every $A \in \A$, it holds that that $\flows_A = \bigcup_{\pi \in \V^*} \cc_{F_\pi(A)}$ (and so $\cc_A \subseteq \flows_A$). Moreover, $\flows_{F_{x}(A)} \subseteq \flows_{A}$ for every $x \in \V$.
\end{proposition}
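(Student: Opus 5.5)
This is a direct unfolding of \autoref{def:flow}. It uses the composition convention $F_{\pi\pi'} = F_{\pi'} F_\pi$, meaning $\pi$ is performed first. No appeal to \autoref{th:partialKleene} is needed, since the claims are purely combinatorial statements about strings of updates. The plan is to prove the equality $\flows_A = \bigcup_{\pi} \cc_{F_\pi(A)}$ by double inclusion, and then derive the other two claims from it.

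For $\subseteq$, take $(x,y) \in \flows_A$, witnessed by $\pi,\pi'$ with $F_{\pi\pi'}(A)(y) \neq F_{\pi x\pi'}(A)(y)$. Put $B = F_\pi(A)$. Then
\[
F_{\pi\pi'}(A) = F_{\pi'}(B), \qquad F_{\pi x\pi'}(A) = F_{x\pi'}(B).
\]
So $\pi'$ witnesses $(x,y) \in \cc_B = \cc_{F_\pi(A)}$.

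For $\supseteq$, suppose $(x,y) \in \cc_{F_\pi(A)}$, witnessed by $\pi'$. The same rewriting read backwards gives $F_{\pi\pi'}(A)(y) \neq F_{\pi x \pi'}(A)(y)$, so $(x,y) \in \flows_A$.

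One side remark: $F_\pi(A)$ again lies in $\A$, because $A = F_\sigma(\bot)$ gives $F_\pi(A) = F_{\sigma\pi}(\bot)$. Hence $\cc_{F_\pi(A)}$ is a legitimate instance of the definition.

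The inclusion $\cc_A \subseteq \flows_A$ is the case $\pi = \e$ of the union, since $F_\e = id$.

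For anti-monotonicity, use the equality just proved, applied both at $F_x(A) \in \A$ and at $A$:
\[
\flows_{F_x(A)} = \bigcup_{\pi} \cc_{F_\pi F_x(A)} = \bigcup_{\pi} \cc_{F_{x\pi}(A)} \subseteq \bigcup_{\sigma} \cc_{F_\sigma(A)} = \flows_A .
\]
The only real obstacle is bookkeeping of the composition order, namely checking that $F_{\pi x \pi'} = F_{\pi'} F_x F_\pi$ and $F_\pi F_x = F_{x\pi}$ under the paper's convention. Once these identities are fixed, each step is immediate.
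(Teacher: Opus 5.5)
Your proof is correct and follows the paper's route. The paper obtains $\flows_A = \bigcup_{\pi \in \V^*} \cc_{F_\pi(A)}$ by unfolding the definitions; your rewriting $F_{\pi x \pi'}(A) = F_{x\pi'}(F_\pi(A))$ is exactly the step it leaves implicit. It then derives $\cc_A \subseteq \flows_A$ and $\flows_{F_x(A)} \subseteq \flows_A$ as corollaries, as you do via $F_\pi F_x = F_{x\pi}$, and your remark that $F_\pi(A) \in \A$ fills in a detail the paper does not spell out.
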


\begin{example} \label{ex:flow}\label{ex:now}
    Consider the system of equations from \autoref{ex:locality}. We provide the flow and now relations for the assignments $\bot$, $F_z(\bot)$ and $F_{zx}(\bot)$:
    \begin{gather*}
        \cc_\bot = \{(z,x), (z,z), (w,w) \} \subseteq \flows_\bot = \{(x,x), (z,x), (z,z), (w,w) \} \,,
        \\
        \begin{gathered}
            \cc_{F_z(\bot)} = \{ (x,x), (w,w) \} = \flows_{F_z(\bot)},
            \qquad  \cc_{F_{zx}(\bot)} = \{ (w,w) \} = \flows_{F_{zx}(\bot)} \,. 
        \end{gathered}
    \end{gather*}
    Note that $(x,x) \notin \cc_\bot$, as $\bot = F_x(\bot)$; moreover, $\cc_{F_{z}(\bot)} \not\subseteq \cc_{\bot}$.
    In contrast, $(x,x) \in \flows_\bot$, as the value of $x$ may change in the future, for instance after the update of $z$ has taken place: indeed, $\bot \neq F_{zx}(\bot)$.
\end{example}

Our interest in these two relations is motivated by the following result, which yields a number of equivalent criteria to establish when a variable $z$ is locally solved at a given assignment $A$ (\ie, $A(z) = \mu F(z)$).

\begin{theorem} \label{th:flowsLocalFix} \label{th:nowLocalFix}
For every $A\in\A$ and $z\in \V$, the following equivalences hold
\begin{align*}
A(z) = \mu F(z) 
&\iff \forall x \in \V.\, (x,z) \notin \flows_A 
\iff \forall y \in \V.\, (z,y) \notin \flows_A \\ 
&\iff \forall x \in \V.\, (x,z) \notin \cc_A
\iff (z,z) \notin \flows_A \,.
\end{align*}
\end{theorem}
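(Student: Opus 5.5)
The plan is to fix $A \in \A$ and $z \in \V$ and label the five conditions in the order written: (i) $A(z)=\mu F(z)$; (ii) no $(x,z)\in\flows_A$; (iii) no $(z,y)\in\flows_A$; (iv) no $(x,z)\in\cc_A$; (v) $(z,z)\notin\flows_A$. I will prove the cycle (i)$\Rightarrow$(ii)$\Rightarrow$(iv)$\Rightarrow$(i), then (i)$\Rightarrow$(iii)$\Rightarrow$(v)$\Rightarrow$(i), together with the trivial (ii)$\Rightarrow$(v).

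\emph{Sandwich lemma.} Everything rests on the following fact: for $A\in\A$ and any $\pi\in\V^*$ we have $F_\pi(A)\in\A$ and
\[
A \sqsubseteq F_\pi(A) \sqsubseteq \mu F .
\]
The lower bound follows by induction on $\pi$ from \autoref{th:partialKleene}\eqref{Kleene:intro:item1}. For the upper bound, argue by induction on strings starting from $\bot$: if $B\sqsubseteq\mu F$ then $F_x(B)\sqsubseteq F_x(\mu F)=\mu F$ by monotonicity of $F_x$. This is the only step requiring real care, and I expect it to be the main (mild) obstacle. Everything afterwards is bookkeeping with it.

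\emph{First cycle.} For (i)$\Rightarrow$(ii), assume $A(z)=\mu F(z)$. The sandwich lemma gives $F_\rho(A)(z)=\mu F(z)$ for every $\rho$, so $F_{\pi\pi'}(A)(z)=F_{\pi x\pi'}(A)(z)$ always, and no pair $(x,z)$ lies in $\flows_A$. The implication (ii)$\Rightarrow$(iv) is immediate from $\cc_A\subseteq\flows_A$ (\autoref{flows:anti-monotonic}). For (iv)$\Rightarrow$(i), the hypothesis says $F_{x\pi}(A)(z)=F_\pi(A)(z)$ for all $x$ and $\pi$. Peeling off the first letter repeatedly gives $F_\sigma(A)(z)=A(z)$ for every $\sigma$. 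By \autoref{th:partialKleene}\eqref{Kleene:intro:item2}, some $\sigma$ satisfies $F_\sigma(A)=\mu F$ (recall $A=F_\rho(\bot)$ for some $\rho$), hence $A(z)=\mu F(z)$.

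\emph{Second cycle.} For (i)$\Rightarrow$(iii), take any $\pi$ and set $B=F_\pi(A)$. Both $B$ and $F_z(B)$ are of the form $F_\rho(A)$, so by the sandwich lemma their $z$-components both equal $\mu F(z)$. Since $F_z$ changes only the $z$-component, $F_z(B)=B$, that is, $F_{\pi z}(A)=F_\pi(A)$. Applying $F_{\pi'}$ to both sides gives $F_{\pi z\pi'}(A)=F_{\pi\pi'}(A)$ on every variable $y$. The implications (iii)$\Rightarrow$(v) and (ii)$\Rightarrow$(v) are instances with $y=z$, respectively $x=z$. For (v)$\Rightarrow$(i) I argue by contraposition. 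Suppose $A(z)\neq\mu F(z)$ and choose $\sigma$ with $F_\sigma(A)=\mu F$. Along the prefixes of $\sigma$ the value at $z$ must change at some point, and only an update of $z$ can change it. So there is a prefix $\pi z$ of $\sigma$ with $F_{\pi z}(A)(z)\neq F_\pi(A)(z)$. Taking $\pi'=\e$ in \autoref{def:flow} witnesses $(z,z)\in\flows_A$.
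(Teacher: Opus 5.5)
Your proof is correct and matches the paper's argument step for step. Your implications are exactly the contrapositives of the paper's chain built from Propositions~\ref{muF:via:flows:1} and~\ref{muF:via:flows:2}, using the same sandwich $A \sqsubseteq F_\pi(A) \sqsubseteq \mu F$ and the same peel-off induction for the now relation; the only cosmetic difference is in (v)$\Rightarrow$(i), where you take the first prefix at which the $z$-value changes instead of the paper's minimal-length path to $\mu F$, and both yield a prefix $\pi$ with $F_\pi(A)(z) \neq F_{\pi z}(A)(z)$.
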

Thus, solved variables are precisely those that can neither be influenced nor influence any other variable, now or in the future.

\section{Partial Kleene Iteration with Dependency Oracles}
\label{sec:algorithms}

We present two algorithms for 
computing the least solution of a target variable based on partial Kleene iteration. The key novelty is the use of \emph{dependency oracles}, which help both to determine when the solution has been reached and to narrow the selection of the updates.

Hereafter, we assume all functions $f_x$ appearing in the system of equations are computable. This is a reasonable assumption that is needed only to prove the termination of the algorithms.

\subsection{Global algorithm}
\label{sec:DependencyKleeneAlgorithm}

The result of a dependency analysis is a relation $R \subseteq \V \times \V$ that relates variables according to how they influence each other. The analysis $R$ is correct relative to the assignment $A$ if $\cc_A \subseteq R$ (\ie, it does not exclude real dependencies).

Our approach uses functions, called oracles, that refine dependency analysis using the information from the system of equations and the current assignment.
\begin{definition}[Oracle] \label{def:oracle}
We call a function $\ora \colon \A \times \wp (\V \times \V) \to \wp (\V \times \V)$ an \emph{oracle} for $\E$. 
An oracle is sound if, 
for every $A \in \A$ and $R \subseteq \V \times \V$,
\begin{equation*}
    \text{$\cc_A \subseteq R$ implies $\cc_A \subseteq \ora(A, R)$} \,.
\end{equation*}
\end{definition}
The soundness condition simply states that every correct dependency analysis $R$, once refined by the oracle, remains correct.

\begin{example} \label{ex:EasyOracles}
The following are some simple examples of sound oracles.
\begin{itemize}%
    \item \emph{Identity oracle.} 
    The function $\mathit{i}(A , R)=R$;

    \item \emph{Trivial oracle.}
    The constant function $\mathit{triv}(A , R) = \V \times \V$;

    \item \emph{Exact oracle.}
    The function $\mathit{now}(A , R) = \cc_A$.

    \item \emph{Not-stuck oracle.}
    The function $\mathit{ns}(A,R) = \{ (x,y) \mid A(x) \neq f_x(A), y\in \V \}$.
    
    \item \emph{Reachable args oracle.} The constant function $\mathit{args}(A, R) = {\to^*}$, given by the reflexive 
    and the transitive closure of the relation $y \to x$ iff $y \in \args(f_x)$, where $\args(f)$ denotes the set of arguments %
    of a function $f \colon \D^\V \to \D$, \ie,  the smallest set $S \subseteq \V$ such that, there exists $g \colon \D^S \to \D$ with $f(A) = g(A|_S)$ for all $A \in \A$.
\end{itemize}

\begin{remark}
While the exact oracle, $\mathit{now}$, could in principle be used, computing it is impractical, as it is as costly as solving the entire system of equations. Using a generic sound oracle $\ora$ is a practical alternative: we replace an exact analysis with an over-approximation. %
In Sections~\ref{sec:oracles}--\ref{sec:extensions} we present more elaborate examples of sound oracles, and a detailed discussion on how to craft them.
\end{remark}
\end{example}

The algorithm $\proc{GlobalK}$ (\autoref{fig:Algorithms}) takes as input a system of equations $\E$, a target variable $\tilde{x}$, and a sound oracle $\ora$. It uses three variables: an assignment $A$,  a dependency analysis $R$, and the set $\id{todo}$ of variables to be considered for an update. 
The analysis $R$ is updated by invoking the oracle on the current assignment $A$ and the trivially correct analysis $\V \times \V$ (lines~\ref{li:initR} and \ref{li:updateR})%
\footnote{%
We always use $\V \times \V$ instead of using the current analysis $R$ because it may happen that $\cc_{F_x(A)} \not\subseteq \ora(F_x(A),R)$ even though $\cc_A \subseteq R$ (\cf~\autoref{ex:now}).%
}.
The analysis is used to guide the selection of the variables to update (line~\ref{li:extract}) by collecting in $\id{todo}$ only the variables that may have an influence on $\tilde{x}$. Initially, $A = \bot$. At each iteration of the while-loop (lines~\ref{li:whilebegin}--\ref{li:updatetodo}), a variable is extracted from $\id{todo}$ to undergo an update. If the assignment increases, the analysis $R$ is updated accordingly (lines~\ref{li:increase}--\ref{li:updatetodo}). 
At the end of each iteration, either the assignment has increased or $\id{todo}$ has become smaller. %
Since the domain of the assignments is Noetherian, $A$ eventually stabilizes, and $\id{todo}$ will become empty. 

By the soundness of $\ora$,
the inclusion $\{ y \mid (y,\tilde{x}) \in \cc_A \} \subseteq \id{todo}$ is satisfied throughout the computation. Thus, by 
\autoref{th:nowLocalFix}, when $\id{todo}$ becomes empty we have that $\tilde{x}$ is solved in $A$.

\begin{figure}[t]
\centering
\begin{minipage}{0.45\textwidth}
\begin{codebox}
\Procname{$\proc{GlobalK}(\E,\tilde x,\ora)$}
\li $A \gets \bot$
\li $R \gets \ora(A,\V \times \V)$ \label{li:initR}
\li $\id{todo} \gets \{ y \mid (y,\tilde x) \in R \}$ \label{li:initTodo}
\li \While $\id{todo} \neq \emptyset$ \label{li:whilebegin}
\Do
\li extract $x$ from $\id{todo}$ \label{li:extract}
\li \If $A(x) \neq f_x(A)$ \label{li:increase}
\Then   
\li $A(x) \gets f_x(A)$   \label{line:updateassign}
\li $R \gets \ora(A,\V \times \V)$ \label{li:updateR}
\li $\id{todo} \gets \{ y \mid (y,\tilde x) \in R \}$ \label{li:updatetodo} 
\End 
\End 		
\li \Return $A(\tilde x)$
\end{codebox}
\end{minipage}
\hfill
\begin{minipage}{0.5\textwidth}
\begin{codebox}
\Procname{$\proc{LocalK}(\E,\tilde{x},\ora)$}
\li $A \gets \bot$, $D = \{ \tilde{x} \}$, $V = \emptyset$ \label{li:LocalKleene:initialization}
\li $R \gets \ora(V, A,\V \times \V)$
\li $\id{todo} \gets \{ y \mid (y,\tilde x) \in R \} \cap D$ \label{li:LocalKleene:initTodo}
\li \While $\id{todo} \neq \emptyset$
\label{li:LocalKleene:while:test}
\Do
\li extract $x$ from $\id{todo}$ \label{li:LocalKleene:extract}
\li $V = V \cup \{x\}$  \label{li:LocalKleene:visited}
\li \If $A(x) \neq f_x(A)$ or $\args(f_x) \not\subseteq D$  \label{li:LocalKleene:Increase}
\Then   
\li $A(x) \gets f_x(A)$  \label{li:LocalKleene:AssUpdate}
\li $D = D \cup \args(f_x)$ \label{li:LocalKleene:DiscoveredUpdate}
\li $R \gets \ora(V, A, R)$  \label{li:LocalKleene:RelationUpdate}
\li $\id{todo} \gets \{ y \mid (y,\tilde x) \in R \} \cap D$
\label{li:LocalKleene:todoUpdate}
\End 
\End 		
\li \Return $A(\tilde x)$
\end{codebox}
\end{minipage}
    \caption{Two Kleene iteration algorithms enhanced with global (on the left) and local (on the right) dependency analysis, performed by a user-provided oracle $\ora$. 
        }
    \label{fig:Algorithms}
\end{figure}

\begin{theorem}%
\label{thm:depKleene:correct}
    If $\ora$ is a sound oracle,
    $\proc{GlobalK}(\E, \tilde{x}, \ora)$ terminates and returns $\mu F(\tilde{x})$.
\end{theorem}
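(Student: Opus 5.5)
We split the argument into an invariant part (partial correctness) and a variant part (termination).

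\emph{Invariants.} At the head of the while-loop, and hence also on exit, we claim:
\begin{enumerate*}[label=(\roman*)]
\item $A \in \A$;
\item $\cc_A \subseteq R$;
\item $\id{todo} \subseteq \{ y \mid (y,\tilde x) \in R\}$.
\end{enumerate*}

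For (i): initially $A = \bot = F_\e(\bot)$. The only modification to $A$ is line~\ref{line:updateassign}, which replaces $A$ by $F_x(A)$. If $A = F_\pi(\bot)$, the result is $F_{\pi x}(\bot) \in \A$.

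For (ii): each time $R$ is assigned, it equals $\ora(A, \V\times\V)$ for the current $A$. Since trivially $\cc_A \subseteq \V \times \V$, soundness of $\ora$ (\autoref{def:oracle}) gives $\cc_A \subseteq R$. Moreover $R$ is recomputed exactly when $A$ changes, so (ii) is preserved across iterations. This is precisely the reason for passing $\V \times \V$ rather than the old $R$.

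For (iii): $\id{todo}$ is either reset to that set, or shrunk by extraction. Extraction only happens in a branch where $A$ does not change, or it is immediately followed by a reset.

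The step needing care is stronger than (iii): we need $\{y \mid (y,\tilde x)\in \cc_A\} \subseteq \id{todo}$, and extraction of $x$ without an update can break it naively. We repair this as follows. Extraction without update happens only when $A(x) = f_x(A)$, i.e.\ $F_x(A) = A$. Then for every $\pi$ we have $F_{x\pi}(A) = F_\pi(A)$, so $(x,y) \notin \cc_A$ for all $y$. Hence a removed variable never belongs to $\{y \mid (y,\tilde x)\in\cc_A\}$, and the strengthened invariant
\[
\{y \mid (y,\tilde x)\in\cc_A\} \subseteq \id{todo}
\]
holds. It holds initially by (ii), after each reset by (ii), and after each non-updating extraction by the observation just made.

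\emph{Correctness on exit.} If the loop exits, $\id{todo} = \emptyset$, so $(y,\tilde x)\notin \cc_A$ for all $y$. Since $A \in \A$ by (i), \autoref{th:nowLocalFix} gives $A(\tilde x) = \mu F(\tilde x)$, which is the returned value.

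\emph{Termination.} Each loop body terminates, because $f_x$ is computable and, we assume, so is $\ora$; otherwise the statement is meaningless, and we make the assumption explicit. $\V$ is finite, so $\id{todo}$ is finite. Consider the sequence of assignments taken by $A$. Whenever $A$ changes, it becomes $F_x(A)$ with $F_x(A) \neq A$. By \autoref{th:partialKleene}(\ref{Kleene:intro:item1}) and (i), $A \sqsubseteq F_x(A)$, so the change is a strict increase in $\D^\V$. Since $\V$ is finite, $\D^\V$ is Noetherian. A strictly ascending chain cannot be infinite, so $A$ changes only finitely often. Between two consecutive changes, each iteration strictly decreases the finite set $\id{todo}$. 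Hence there are at most $|\V|$ iterations between changes, and the loop terminates.

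The main obstacle is the invariant argument for extraction without update, since \autoref{th:nowLocalFix} requires the exact $\cc_A$ to be covered by $\id{todo}$. This is resolved by the fixed-point observation $F_x(A) = A$ above.
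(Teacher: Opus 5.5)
Your proof is correct and follows the paper's argument. The paper uses the same key invariant $\{ y \mid (y,\tilde x) \in \cc_A \} \subseteq \id{todo}$: it holds at each reset by soundness of $\ora$ on $\V\times\V$, and it survives each non-updating extraction because $A(x)=f_x(A)$ forces $(x,\tilde x)\notin\cc_A$. Termination is argued the same way: every change to $A$ is a strict increase in the Noetherian poset $\D^\V$, and between changes $\id{todo}$ strictly shrinks. Your auxiliary invariant (iii) is not needed, and the computability of $\ora$ that you make explicit is left implicit in the paper.
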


\begin{example} \label{ex:global:algorithm:boolean:oracle}
The computation informally described in \autoref{ex:locality} corresponds to the formal execution of $\proc{GlobalK}$ shown below. For clarity, we explicitly report the dependency analysis $R$ used at each iteration, anticipating that it is obtained using the Boolean oracle introduced later in Section~\ref{sec:extensions}.

\medskip
{\small
\centering
\begin{tabular}{|@{\;}c@{\;}|@{\;}c@{\;}|@{\;}c@{\;}|@{\;}c@{\;}|}
\hline
Iteration & $A$ & $R$ & \id{todo} %
\\
\hline\hline
0 & $\bot$ & $\{ (z,z), (w,w) \} {\cup} \big( \{y,z,w\} \times \{x\} \big) {\cup} \big( \{x,z,w\} \times \{y\} \big)$  & $\{y,z,w\}$ \\
1 (extract $z$) & $F_z(\bot)$ & $\{ (x,x), (w,w) \} \cup \big( \{x,z,w\} \times \{y\} \big)$  & $\{x\}$ \\
2 (extract $x$) & $F_{zx}(\bot)$ & $\{ (w,w) \} \cup \big( \{x,z,w\} \times \{y\} \big)$  & $\emptyset$ \\
\hline
\end{tabular}
}
\end{example}

\subsection{Local algorithm}
\label{sec:LocalKleeneAlgorithm}

For some systems, the equations are produced as needed during execution, and the full system need never be constructed. Algorithms that follow this principle are known as \emph{local}. %
In our local algorithm, the exploration proceeds incrementally, much like traversing a graph: the equation %
is constructed only when its variable is \emph{visited}. Once visited, its equation remains available for later use (\eg, for a dependency analysis), and all new variables occurring in the equation are marked as \emph{discovered}.
To ensure locality, the next variable to update is chosen from the set of already discovered variables, rather than the entire system. This restriction adds two challenges. First, the \emph{now} relation is no longer a sound basis for dependency analyses, since variables that may influence $\tilde{x}$ according to \emph{now} might not yet have been discovered (\cf~\autoref{local:soundness:requires flows}); we address this by relying on the flow relation, which provides stronger soundness guarantees.
Second, oracles must perform their analysis using only the information available at the moment they are invoked. To support this, we extend oracles with an additional parameter that specifies which variables have been visited.

\begin{definition}[Local Oracle] \label{def:locOracle}
We call $\ora \colon \wp(\V) \times \A \times \wp (\V \times \V) \to \wp (\V \times \V)$ a \emph{local oracle} for $\E$. 
It is sound if, for every $V \subseteq \V$, $A \in \A$, and $R \subseteq \V \times \V$,
\begin{equation*}
    \flows_A\subseteq R 
    \;\text{ implies }\;
    \flows_A  \subseteq \ora(V,A, R) \,.
\end{equation*}
\end{definition}

Apart from the difference in type, $\mathit{i}$ and $\mathit{triv}$ are sound local oracles, whereas $\mathit{now}$ is not. The local variant of the exact oracle is $\mathit{flow}(V, A , R) = \flows_A$.

\smallskip

For the local exploration, $\proc{LocalK}$ (in \autoref{fig:Algorithms}) uses three sets: $\id{D}$ contains the \emph{discovered} variables; $\id{V}$ the \emph{visited} ones; and $\id{todo}$ the variables that the local oracle $\ora$ suggests to consider for the next update. Observe that the inclusion $\id{todo} \subseteq D$ is enforced throughout the entire computation.

Initially (line~\ref{li:LocalKleene:initialization}), only $\tilde x$ is known ($\id{D} = \{ \tilde{x} \}$) but its equation is not yet available (thus, $\id{V} = \emptyset$). Consequently, $\id{todo}$ is either empty or contains only $\tilde{x}$.
In the first case, the algorithm terminates immediately, as $\tilde x$ is already solved (\autoref{th:flowsLocalFix}). Otherwise, $\tilde x$ is extracted (line~\ref{li:LocalKleene:extract}) and its equation is revealed (line~\ref{li:LocalKleene:visited}). If this update increases the assignment $A$, or if the equation introduces previously undiscovered variables (line~\ref{li:LocalKleene:Increase}), $\id{todo}$ is recomputed through a new dependency analysis using the local oracle $\ora$ (line~\ref{li:LocalKleene:todoUpdate}) by refining the analysis $R$ from the previous iteration (line~\ref{li:LocalKleene:RelationUpdate}).
This process is iterated until $\id{todo}$ becomes empty (line~\ref{li:LocalKleene:while:test}).
The reuse of previous analyses (line~\ref{li:LocalKleene:RelationUpdate}) is a feature which is absent in $\proc{GlobalK}$, and it is based on the last part of~\autoref{flows:anti-monotonic}.

\begin{theorem}\label{local:version:thm:depKleene:correct}
If $\ora$ is a sound local oracle,  $\proc{LocalK}(\E, \tilde{x}, \ora)$ terminates and returns $\mu F(\tilde{x})$.
\end{theorem}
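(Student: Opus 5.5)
The proof splits into termination and correctness, as for \autoref{thm:depKleene:correct}, but the invariant now uses the flow relation together with the discovered set $D$.

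\emph{Termination.} The assignment $A$ stays in $\A$, since it changes only by partial updates $F_x$ starting from $\bot$. By \autoref{th:partialKleene}\eqref{Kleene:intro:item1}, every update on line~\ref{li:LocalKleene:AssUpdate} either leaves $A$ unchanged or strictly increases it. The sets $V$ and $D$ only grow and are bounded by the finite set $\V$. I will use the lexicographic measure $(A, D, |\id{todo}|)$: increasing in $\D^\V$ (Noetherian) and in $\wp(\V)$ for the first two components, decreasing for the third. Each loop iteration either strictly increases $A$, or leaves $A$ fixed and strictly enlarges $D$ (the case $\args(f_x)\not\subseteq D$), or fails the test on line~\ref{li:LocalKleene:Increase} and just removes $x$ from $\id{todo}$. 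Since ascending chains in $\D^\V$ and in $\wp(\V)$ stabilize, there can be only finitely many iterations of the first two kinds. Between them, $\id{todo}$ strictly shrinks, so the loop exits. Computability of the $f_x$ (and of the oracle calls) makes each iteration finite.

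\emph{Correctness invariant.} I will maintain: $\flows_A \subseteq R$, and, writing $I_A=\{y \mid (y,\tilde x)\in\flows_A\}$, every $y\in I_A$ with $y\notin\id{todo}$ either lies outside $D$ or satisfies ``$A(y)=f_y(A)$ and $\args(f_y)\subseteq D$''.

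\begin{itemize}
\item The inclusion $\flows_A\subseteq R$ holds initially by soundness applied to $\V\times\V$. After an update $A\mapsto F_x(A)$, \autoref{flows:anti-monotonic} gives $\flows_{F_x(A)}\subseteq\flows_A\subseteq R$, so soundness of $\ora$ on line~\ref{li:LocalKleene:RelationUpdate} yields $\flows_{F_x(A)}\subseteq \ora(V,F_x(A),R)$.
\item Consequently, after each recomputation $\id{todo}\supseteq I_A\cap D$, and the second part of the invariant holds trivially there.
\item When $x$ is extracted without an update, it satisfies exactly the stated condition, and neither $A$ nor $D$ changes, so the invariant is preserved.
\end{itemize}

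\emph{Conclusion at exit, the main obstacle.} When $\id{todo}=\emptyset$ we must show $A(\tilde x)=\mu F(\tilde x)$, and the difficulty is that undiscovered variables may still be unsolved. Let $S = D\cap I_A$, the variables that are discovered and influence $\tilde x$. By the invariant, every $y\in S$ is stuck in $A$ with $\args(f_y)\subseteq D$. If $\tilde x\notin I_A$ we conclude directly by \autoref{th:flowsLocalFix}, using the criterion $(\tilde x,\tilde x)\notin\flows_A$. Otherwise $\tilde x\in S$, and I aim for a contradiction.

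The key lemma I would prove is the following: if $(y,\tilde x)\in \flows_A$ and $y\in D$ with $\args(f_y)\subseteq D$, then every $u\in\args(f_y)$ that actually matters again satisfies $(u,\tilde x)\in\flows_A$, so $\args(f_y)\cap I_A\subseteq S$. Hence $S$ is closed under relevant arguments. I would then show, by induction along any sequence $\pi$, that no variable of $S$ ever changes under $F_\pi(A)$. A variable $y\in S$ can change only if some argument of $f_y$ changes first. Arguments outside $I_A$ cannot affect $\tilde x$-relevant values by the definition of $\flows_A$, and this needs care: I would use that $y$'s value only matters insofar as it flows to $\tilde x$. Arguments in $S$ are unchanged by induction.

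It follows that $\tilde x$ never changes from $A$. By \autoref{th:partialKleene}\eqref{Kleene:intro:item2} some $\pi'$ reaches $\mu F$, so $A(\tilde x)=\mu F(\tilde x)$. Formalizing the step ``arguments outside $I_A$ are irrelevant'' precisely through the definition of $\flows_A$ is the part I expect to require the most work.
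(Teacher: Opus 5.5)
Your termination argument and both invariants are correct. Reading off the exit condition from $S=D\cap I_A$, with $I_A=\{y\mid (y,\tilde x)\in\flows_A\}$, handles both ways $\id{todo}$ can become empty at once. When $\tilde x\in S$, your $S$ is an almost self-closed set for $\tilde x$ and $A$, so what remains is exactly \autoref{thm:almost:self-closed}. The paper invokes that theorem, using the last recomputed $\id{todo}$ as the almost self-closed set.

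That remaining step is where your proposal has a genuine gap. The induction ``along any sequence $\pi$, no variable of $S$ changes'' does not go through. A variable $y\in S$ may have arguments in $\args(f_y)\setminus I_A$, and your induction hypothesis does not control them. An arbitrary $\pi$ may update them, and a subsequent update of $y$ can then change $y$. The fact $(u,\tilde x)\notin\flows_A$ only says that inserting or deleting an update of $u$ never alters the final value of $\tilde x$; it says nothing about $y$. Concretely, on the chain $0<1<2$ take $\tilde x=g(y)$ with $g(0)=g(1)=0$, $g(2)=1$, $y=\max(u,v)$, $u=1$, $v=2$, and $A=\bot$. Then $y\in I_A$ is stuck, and its only argument in $I_A$ is $v$, which $\pi=uy$ leaves unchanged. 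Moreover $u\notin I_A$, and yet $F_{uy}(\bot)(y)=1\neq 0$. Your ``key lemma'' about arguments that actually matter is also not needed: the inclusion $\args(f_y)\cap I_A\subseteq S$ is immediate from $\args(f_y)\subseteq D$.

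The missing idea is to restrict to sequences in $I_A^*$ before inducting.
\begin{itemize}
\item By \autoref{th:partialKleene}\eqref{Kleene:intro:item2}, take $\pi$ of minimal length with $F_\pi(A)(\tilde x)=\mu F(\tilde x)$.
\item If $\pi=\pi_1u\pi_2$ with $u\notin I_A$, the definition of $\flows_A$ gives $F_{\pi_1\pi_2}(A)(\tilde x)=F_{\pi_1u\pi_2}(A)(\tilde x)$. This contradicts minimality, so $\pi\in I_A^*$.
\item Along a sequence in $I_A^*$, the variables outside $I_A$ are never updated. So by induction on prefixes, $F_{\pi'}(A)$ agrees with $A$ on $S\cup(\V\setminus I_A)$.
\item In the inductive step, when the last letter is $y\in S$, we have $\args(f_y)\subseteq S\cup(\V\setminus I_A)$. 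Factoring $f_y$ through its arguments yields $f_y(F_{\pi''}(A))=f_y(A)=A(y)$.
\end{itemize}
Hence $A(\tilde x)=F_\pi(A)(\tilde x)=\mu F(\tilde x)$. This is precisely Steps~1--2 of the paper's proof of \autoref{thm:almost:self-closed}, and with it your argument is complete.
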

Termination follows similarly to $\proc{GlobalK}$. Establishing correctness, however, is more challenging. 
When $\proc{LocalK}$ terminates, $\id{todo}$ is empty, and this can occur in one of two ways: either because line~\ref{li:LocalKleene:initTodo} or~\ref{li:LocalKleene:todoUpdate} clears $\id{todo}$, or because repeated removals at line~\ref{li:LocalKleene:extract} eventually exhaust it.
In the first case, \autoref{th:flowsLocalFix} guarantees that $\tilde{x}$ is solved.
In the second case, $\id{todo}$ contained only variables that failed the test in line~\ref{li:LocalKleene:Increase}, meaning that updating them neither increased the assignment nor did their equations allow the discovery of new variables. Some of these variables may be unsolved (see \autoref{ex:early-termination}), so why should $\tilde{x}$ itself be solved? 

To answer this, we introduce the notion of \emph{almost self-closed set}. A set $X \subseteq \V$ is almost self-closed for $\tilde{x}$ and $A \in \A$, if  $\tilde{x} \in X$ and every $x \in X$ satisfies:
\begin{equation*}
    A(x) = f_x(A) \quad\text{and}\quad \{ y \mid (y, \tilde{x}) \in \flows_A \} \cap \args(f_x) \subseteq X \,.
\end{equation*}
where $\args(f)$ are the arguments of $f$.
Intuitively, an almost self-closed set for $\tilde{x}$ contains all the variables that possibly influence $\tilde{x}$ but none of them can increase. %

\begin{example}\label{ex:self-closed:set}
Let $x$ be the target variable of the Boolean equation system below:
\begin{align*}
    x = y \wedge z \,, &&
    y = y \vee w \,, &&
    z = \mathbf{ff} \,, &&
    w = \mathbf{tt} \,.
\end{align*}
All the almost self-closed sets for $x$ and $\bot$ are $\{x\}$, $\{x,y\}$, $\{x,z\}$ and $\{x,y,z\}$. Indeed, by \autoref{th:flowsLocalFix} $\{ v \mid (v,x) \in \flows_{\bot} \} = \emptyset$, as $\mu F(x) = \mathbf{ff}$.
\end{example}

\begin{theorem}
\label{thm:almost:self-closed}
If $\tilde{x}$ and $A$ admit an almost self-closed set, then $A(\tilde{x}) = \mu F(\tilde{x})$.
\end{theorem}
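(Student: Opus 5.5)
The plan is to show directly that $\mu F(\tilde{x})$ is reached from $A$ by a sequence of updates that never changes the value of $\tilde{x}$. Fix an almost self-closed set $X$ for $\tilde{x}$ and $A$, and let $S = \{ y \mid (y,\tilde{x}) \in \flows_A \}$. By \autoref{th:partialKleene}\eqref{Kleene:intro:item2}, since $A = F_\rho(\bot)$ for some $\rho$, there is $\pi' \in \V^*$ with $F_{\pi'}(A) = \mu F$. I will first replace $\pi'$ by a sequence that uses only variables of $S$, and then show that such a sequence cannot move any variable of $X$.

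\textbf{Step 1 (pruning).} Suppose $\pi'$ contains an occurrence of some $y \notin S$, so $\pi' = \pi y \pi''$. Since $(y,\tilde{x}) \notin \flows_A$, \autoref{def:flow} gives, for every choice of context,
\[
F_{\pi y \pi''}(A)(\tilde{x}) = F_{\pi\pi''}(A)(\tilde{x}).
\]
Each such deletion is again an instance of the definition relative to the same fixed assignment $A$. Iterating therefore removes all occurrences of variables outside $S$ and yields $\sigma \in S^*$ with
\[
F_\sigma(A)(\tilde{x}) = F_{\pi'}(A)(\tilde{x}) = \mu F(\tilde{x}).
\]

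\textbf{Step 2 (invariance along $\sigma$).} By induction on prefixes $\tau$ of $\sigma$, I will show that $B = F_\tau(A)$ satisfies $B(v) = A(v)$ for all $v \in X \cup (\V \setminus S)$.

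\emph{Variables outside $S$.} These are never updated, since $\sigma \in S^*$, so their values stay as in $A$.

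\emph{Updating $x \in S \setminus X$.} Such an update does not touch $X$, so the invariant is preserved.

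\emph{Updating $x \in X$.} Almost self-closedness gives $\args(f_x) \cap S \subseteq X$, hence $\args(f_x) \subseteq X \cup (\V \setminus S)$. By the induction hypothesis, $B$ agrees with $A$ on $\args(f_x)$. By the defining property of $\args$ (note $B \in \A$), this gives $f_x(B) = f_x(A)$. Almost self-closedness also gives $f_x(A) = A(x)$, so $f_x(B) = A(x)$ and the update leaves $x$ unchanged.

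Since $\tilde{x} \in X$, this yields $\mu F(\tilde{x}) = F_\sigma(A)(\tilde{x}) = A(\tilde{x})$, which is the claim.

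The step I expect to need the most care is the pruning step. One must check that deleting occurrences one at a time is legitimate: the non-membership $(y,\tilde{x}) \notin \flows_A$ is universal over all contexts $\pi, \pi''$, always relative to the same $A$, so the deletions compose without any change of base assignment. A naive attempt that tries to show $S \subseteq X$ fails. The point of the pruning is precisely that variables outside $S$ may be arguments of equations in $X$ and yet are harmless, because after pruning they are never updated.
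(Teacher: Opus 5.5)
Your proof is correct and follows essentially the same route as the paper. The paper likewise shows, by the same induction using $\args$, that words over $\{y \mid (y,\tilde{x}) \in \flows_A\}$ leave $A$ unchanged on $X$ and on the complement of that set, and it reduces a word reaching $\mu F(\tilde{x})$ to such a word. The differences are minor: the paper takes a minimal-length word and derives a contradiction instead of deleting occurrences one by one, and it concludes via $f_{\tilde{x}}$ rather than directly from $\tilde{x} \in X$.
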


Theorem~\ref{thm:almost:self-closed} justifies the correctness of \proc{LocalK} in the second case of termination discussed earlier: when $\id{todo}$ is set for the last time in line~\ref{li:LocalKleene:todoUpdate}, it becomes an almost self-closed set for $\tilde{x}$ and $A$. Indeed, for all $x \in \id{todo}$, $A(x) = f_x(A)$ and
\begin{equation*}
    \{ y \mid (y, \tilde{x}) \in \flows_A \} \cap \args(f_x) \subseteq \{ y \mid (y, \tilde{x}) \in R\} \cap D = \id{todo}
\end{equation*} 
because $\flows_A \subseteq R$ and $\args(f_x) \subseteq D$ for all $x \in \id{todo}$. 

An advantage of almost self-closed sets is that \proc{LocalK} can terminate before all visited variables are fully solved. Example~\ref{ex:early-termination} illustrates this behavior.

\begin{example} \label{ex:early-termination}
Consider the system of equations from \autoref{ex:self-closed:set} and take $x$ as target variable. Pick a local oracle $\ora$ such that $\{ v \mid (v,x) \in\ora(V,\bot,R) \} = \{x,y,z\}$ for every $V \subseteq \V$ and $R \subseteq \V\times \V$. 
Below are shown the details of an execution of $\proc{LocalK}$:
{\small
\begin{center}
\begin{tabular}{|@{\;}c@{\;}|@{\;}c@{\;}|@{\;}c@{\;}|@{\;}c@{\;}|@{\;}c@{\;}|}
\hline
Iteration & \, $A$ \, & \id{D} (\emph{discovered}) & \id{V} (\emph{visited}) &  \id{todo} (\emph{to update}) \\
\hline\hline
0 & $\bot$ & $\{x\}$ & $\emptyset$ & $\{x\}$ \\
1 (extract $x$) & $\bot$ & $\{x, y, z\}$ & $\{x\}$ & $\{x,y,z\}$ \\
2 (extract $y$) & $\bot$ & $\{x,y,z, w \}$ & $\{x,y\}$ & $\{x,y,z\}$ \\
3 (extract $z$) & $\bot$ & $\{x,y,z, w \}$ & $\{x,y,z\}$ & $\{x,y\}$ \\
4 (extract $y$) & $\bot$ & $\{x,y,z, w \}$ & $\{x,y,z\}$ & $\{x\}$ \\
5 (extract $x$) & $\bot$ & $\{x,y,z, w \}$ & $\{x,y,z\}$ & $\emptyset$ \\
\hline
\end{tabular}
\end{center}
}
In the last three iterations of the while-loop, $x$, $y$ and $z$ are extracted from $\id{todo}$ by failing the test in line~\ref{li:LocalKleene:Increase}. From \autoref{ex:self-closed:set}, we know that previous to these extractions, $\id{todo}$ is an almost self-closed set for $x$ and $\bot$. Observe that, while the algorithm terminates correctly solving  $x$, the variable $y$ is not solved yet. 
In comparison to the local algorithm based on dependency graphs from~\cite{ELS22}, $\proc{LocalK}$ terminates earlier, because the ``early termination'' in~\cite{ELS22} still requires all the variables in $\id{todo}$ to be solved.
\end{example}

\begin{remark}\label{local:soundness:requires flows}
Next we show that the now relation fails to capture the necessary dependencies, thereby justifying the use of the flow relation for local oracles.

Suppose, for the sake of contradiction, that we execute $\proc{LocalK}$ on the system of equations from \autoref{ex:locality} with target variable $x$, using $\mathit{now}$ as oracle, which is sound in the sense of \autoref{def:oracle}.
At the start of the execution, $A = \bot$, and  $\id{D} = \{x\}$. From \autoref{ex:now}, we know that $\cc_\bot = \{(z, x), (z, z), (w, w)\}$. As a result, $\proc{LocalK}$ (line~\ref{li:LocalKleene:initTodo}) initializes $\id{todo} = \{z\} \cap \id{D} = \emptyset$. This causes the algorithm to terminate immediately, incorrectly returning $\mathbf{ff}$ as the least solution for $x$.
\end{remark}

\section{A Compositional Theory of Oracles}
\label{sec:oracles}

This section demonstrates the flexibility and ease of working with oracles and their local variants, which stem from their compositional properties.
While technically distinct, global and local oracles share similar foundations.
We will spend most of the space describing global oracles. Once these are understood, it is easy to adapt them to their local variants, which we do at the end of the section.

One wishes to obtain new oracles from existing ones by composing them via operations. The easiest operations one can think of  are union, intersection, and composition. Formally, for 
$\ora, \ora' \colon \A \times \wp(\V \times \V) \to \wp(\V \times \V)$ oracles, define
\begin{align*}
    (\ora \cup \ora')(A,R) &= \ora(A,R) \cup \ora'(A,R) \,,
    \tag{\textsc{union}} \\
    (\ora \cap \ora')(A,R) &= \ora(A,R) \cap \ora'(A,R) \,,
    \tag{\textsc{intersection}} \\
    (\ora \circ \ora')(A,R) &= \ora(A,\ora'(A,R)) \,.
    \tag{\textsc{composition}}
\end{align*}
It is clear that the operations of union and intersection above can be generalized to an arbitrary (possibly infinite) number of oracles.

An oracle $\ora$ is \emph{more accurate than} $\ora'$, written $\ora \preceq \ora'$, if $\ora(A,R) \subseteq \ora'(A,R)$ holds for all $A \in \A$ and $R \subseteq \V \times \V$.

\begin{proposition}[Closure properties] \label{prop:oracles:properties}
\label{lem:closureProp} ~
\begin{enumerate}[topsep=0.5ex]
    \item \label{upclosure}
    Sound oracles are upward closed, \ie, if $\ora \preceq \ora'$ and $\ora$ is sound, so is $\ora'$.
    \item  \label{operation-closure}
    Sound oracles are closed under union, intersection, and composition.
\end{enumerate}
\end{proposition}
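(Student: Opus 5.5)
The plan is to unfold the soundness condition of \autoref{def:oracle} pointwise: fix an arbitrary $A \in \A$ and $R \subseteq \V \times \V$, assume $\cc_A \subseteq R$, and show $\cc_A$ is contained in the output of the new oracle. Everything then reduces to elementary set-theoretic reasoning about inclusions of relations, and no property of the flow or now relations beyond the definition is needed.

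For item~\ref{upclosure}, suppose $\ora \preceq \ora'$ with $\ora$ sound. From $\cc_A \subseteq R$, soundness of $\ora$ gives $\cc_A \subseteq \ora(A,R)$. Since $\ora \preceq \ora'$, we have $\ora(A,R) \subseteq \ora'(A,R)$. Chaining the two inclusions gives $\cc_A \subseteq \ora'(A,R)$, so $\ora'$ is sound.

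For item~\ref{operation-closure}, let $\ora, \ora'$ be sound and assume $\cc_A \subseteq R$.
\begin{itemize}
\item \emph{Intersection.} Both $\cc_A \subseteq \ora(A,R)$ and $\cc_A \subseteq \ora'(A,R)$ hold, so $\cc_A$ lies in their intersection. The same argument works for an arbitrary family of sound oracles, since $\cc_A$ is contained in each member.
\item \emph{Union.} It suffices that $\ora \preceq \ora \cup \ora'$ and then apply item~\ref{upclosure}. For an arbitrary nonempty family, the argument is the same, since each member lies below the union. The empty union is the constant $\emptyset$ oracle, which is not sound in general, so the nonempty assumption should be noted.
\item \emph{Composition.} Apply soundness of $\ora'$ to get $\cc_A \subseteq \ora'(A,R)$. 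This inclusion is exactly the hypothesis needed to apply soundness of $\ora$ with second argument $R' = \ora'(A,R)$. That yields $\cc_A \subseteq \ora(A,\ora'(A,R)) = (\ora \circ \ora')(A,R)$.
\end{itemize}

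None of these steps is a real obstacle. The only point needing care is composition, where soundness is a conditional statement quantified over all $R$. One must observe that the intermediate relation $\ora'(A,R)$ again satisfies the premise, and that the first argument $A$ is shared by both oracles, so the same $\cc_A$ appears in both soundness conditions.
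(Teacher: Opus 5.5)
Your proposal is correct and follows essentially the same approach as the paper: unfold soundness at a fixed $A$ and $R$ with $\cc_A \subseteq R$, chain inclusions for upward closure and intersection, and apply soundness of $\ora'$ and then of $\ora$ (with second argument $\ora'(A,R)$) for composition. Two of your points go beyond the paper's proof. You derive closure under union from upward closure, which the paper leaves implicit. You also observe that an empty union gives the constant $\emptyset$ oracle, which is not sound in general; this is a correct caveat to the paper's remark that these operations extend to arbitrary families.
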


The closure properties can be used to derive new sound oracles from existing ones, improve the accuracy, or even help prove their soundness. %
For example, \autoref{prop:oracles:properties} implies that every sound oracle $O$ can be made deflationary (\ie, such that $O(A,R)\subseteq R$ for all $A,R$) by intersecting it with the identity oracle: $O \cap \mathit{i}$. The resulting oracle computes $O(A,R)\cap R$.
As this operation does not add any computational cost, it is applied implicitly in all our experiments. 
Moreover, we can define the \emph{downward closure} of an oracle $\ora$, 
as $\ora^{\downarrow} = \bigcap_n \ora^n$, where $\ora^0 = \mathit{i}$ and $\ora^{n+1} = \ora \circ \ora^n$. If $\ora$ is sound, so is its downward closure.
This operation is evidently more costly and, if used, its application will be stated explicitly.

Next we present a selection of oracles, chosen to illustrate different levels of complexity and practical use. A broader collection is provided in~Appendix~\ref{sec:moreOracles}. %

\paragraph{Maximality Oracles.}%

Every poset $\D$ that is Noetherian has \emph{maximal elements}, \ie, elements $m \in \D$ such that if $m \sqsubseteq s$, then $s = m$.

Since $\mu F$ is the top element in $\A$, it follows that if a variable $z$ is assigned a maximal element, this must be its solution, $\mu F(z)$. 
According to this intuition, we define two oracles, respectively called \emph{maximality oracle} and \emph{simplified maximality oracle}, which depend only on the assignment $A \in \A$:
\begin{align*}
    &\mathit{max}(A, R) = \{ (x,x) \mid A(x) \neq f_x(A) \} \cup \{ (x,y) \mid \text{$A(x)$ and $f_y(A)$ not maximal} \} \,, 
    \\
    &\mathit{smax}(A, R) = \{ (x,y) \mid \text{$A(x)$ and $A(y)$ not maximal} \} \,.
\end{align*}
The oracle $\mathit{smax}$ discards variables that are assigned a maximal value, as updating them would have no influence on other variables. The oracle $\mathit{max}$ is similar but more sophisticated: if $y$ is not solved (\ie, $A(y) \neq f_y(A)$) but $f_y(A)$ is maximal, then $y$ is just one step away from being solved, and its value will not be influenced by any other variable apart from itself.

\begin{example}\label{example:continued}
    Consider the system of equations from \autoref{ex:locality}. For the assignment $F_z(\bot) = (x\colon \mathbf{ff}; y\colon \mathbf{ff}; z\colon \mathbf{tt}; w\colon \mathbf{ff})$ and an arbitrary $R \subseteq \V \times \V$, we have
    \begin{align*}
        \mathit{smax}(F_z(\bot), R) 
        &= \{ x,y,w\} \times \{ x,y,w \} \,,
        \\
        \mathit{max}(F_z(\bot), R) &= \{ (x,x), (x,y), (y,y), (w,y), (w,w) \} \,.
    \end{align*}
    Both oracles identify that $z$ is solved, removing it from the dependencies. The oracle $\mathit{max}$ is smarter: it identified $x$ and $w$ to be independent of other variables. 
\end{example}

The closure properties are especially useful for reducing the complexity of soundness proofs: by \autoref{prop:oracles:properties}\eqref{upclosure}, it suffices to establish soundness for a more accurate oracle. For this reason, we define the following oracles
\begin{align*}
    &\mathit{max}_l(A, R) = \{ (x,x) \mid A(x) \neq f_x(A) \} 
    \cup \{ (x,y) \mid x \neq y,  \text{ $A(x)$ not maximal} \} \,,
    \\
    &\mathit{max}_r(A, R) = \{ (x,x) \mid A(x) \neq f_x(A) \} 
    \cup \{ (x,y) \mid x \neq y,  \text{ $f_y(A)$ not maximal} \} \,.
\end{align*}

\begin{proposition} \label{lem:soundnessMaxL/R}
    $\mathit{max}_l$ and $\mathit{max}_r$ are sound, and
    $\mathit{max}_l \cap \mathit{max}_r \preceq \mathit{max} \preceq \mathit{smax}$.
\end{proposition}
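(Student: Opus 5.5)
Two things need proving. First, soundness of $\mathit{max}_l$ and $\mathit{max}_r$: since both ignore $R$, it suffices to show $\cc_A \subseteq \mathit{max}_l(A,R)$ and $\cc_A \subseteq \mathit{max}_r(A,R)$ for every $A \in \A$. Second, the two inclusions between oracles, which are pointwise set comparisons.

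\emph{Diagonal pairs.} Let $(x,x) \in \cc_A$. If $A(x) = f_x(A)$, then $F_x(A) = A$, so $F_{x\pi}(A) = F_\pi(A)$ for every $\pi$. This contradicts $(x,x) \in \cc_A$. Hence $A(x) \neq f_x(A)$, and $(x,x)$ lies in the first component of both $\mathit{max}_l$ and $\mathit{max}_r$. The same argument shows that every $(x,y) \in \cc_A$ satisfies $A(x) \neq f_x(A)$, which I will reuse below.

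\emph{Off-diagonal pairs, $\mathit{max}_l$.} Let $x \neq y$ with $(x,y) \in \cc_A$, and suppose for contradiction that $A(x)$ is maximal. By \autoref{th:partialKleene}\eqref{Kleene:intro:item1}, $A(x) \sqsubseteq F_x(A)(x) = f_x(A)$. Maximality then forces $f_x(A) = A(x)$, so $F_x(A) = A$, which again contradicts $(x,y) \in \cc_A$.

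\emph{Off-diagonal pairs, $\mathit{max}_r$.} Let $x \neq y$ with $(x,y) \in \cc_A$, and suppose $f_y(A)$ is maximal. Write $m = f_y(A)$ and $B = F_y(A) \in \A$, so $B(y) = m$. Every assignment $C$ reachable from $B$ satisfies $B \sqsubseteq C \sqsubseteq \mu F$ (monotonicity of $F$ on $\A$ and $\mu F$ being the top of $\A$), hence $C(y) = m = \mu F(y)$. So $y$ is solved in $B$ and in $F_x(B)$.

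The step I expect to be the main obstacle is transferring this back to $A$ itself, since at $A$ the value of $y$ may still change. My approach is to use $\mu F(y) = m$ to show $y$ is solved already at $A$. Since $A \sqsubseteq \mu F$, we get $f_y(A) \sqsubseteq f_y(\mu F) = \mu F(y)$. Since $f_y(A) = m$ is maximal, this gives $\mu F(y) = m$.

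Now fix any $\pi$ and compare $F_\pi(A)(y)$ with $F_{x\pi}(A)(y)$. Both assignments lie in $\A$, so by \autoref{th:partialKleene}\eqref{Kleene:intro:item1} and monotonicity their $y$-values are the last value written to $y$, or $A(y)$ if none. Take the last occurrence of $y$ in $\pi$, with prefix $\pi_1$. Then the two $y$-values are $f_y(F_{\pi_1}(A))$ and $f_y(F_{x\pi_1}(A))$. Both dominate $f_y(A) = m$ by extensivity of updates on $\A$ and monotonicity of $f_y$, and both are bounded by $\mu F(y) = m$, so they are equal. If $y$ does not occur in $\pi$, both values equal $A(y)$, because $x \neq y$ means the leading $F_x$ does not touch $y$. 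In either case the values agree, contradicting $(x,y) \in \cc_A$. Hence $f_y(A)$ is not maximal.

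\emph{Inclusions.} For $\mathit{max}_l \cap \mathit{max}_r \preceq \mathit{max}$: a pair in the intersection is either diagonal with $A(x) \neq f_x(A)$, hence in $\mathit{max}$, or has $x \neq y$ with $A(x)$ and $f_y(A)$ both non-maximal, hence in $\mathit{max}$.

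For $\mathit{max} \preceq \mathit{smax}$, there are two cases.
\begin{itemize}
\item A pair $(x,x)$ with $A(x) \neq f_x(A)$: then $A(x) \sqsubset f_x(A)$, so $A(x)$ is not maximal, and the pair is in $\mathit{smax}$.
\item A pair $(x,y)$ with $A(x)$ and $f_y(A)$ non-maximal: since $A(y) \sqsubseteq f_y(A)$ by \autoref{th:partialKleene}\eqref{Kleene:intro:item1}, if $A(y)$ were maximal it would equal $f_y(A)$, contradicting the non-maximality of $f_y(A)$. So $A(y)$ is not maximal, and the pair is in $\mathit{smax}$.
\end{itemize}
By \autoref{prop:oracles:properties}, soundness of $\mathit{max}$ and $\mathit{smax}$ then follows as a corollary.
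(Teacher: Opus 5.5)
Your proof is correct and follows the paper's argument. The diagonal and $\mathit{max}_l$ cases match directly. Your last-occurrence-of-$y$ argument for $\mathit{max}_r$ is an inlined version of the paper's minimal-witness step (Lemma~\ref{remark:finally:used} and Proposition~\ref{prop:pairsINflows}), with the same extensivity-plus-monotonicity bound $f_y(A) \sqsubseteq f_y(F_{\pi_1}(A))$, and the two inclusions are proved as in the paper. The detour through $B = F_y(A)$ and $\mu F(y) = m$ is unnecessary, since maximality of $m$ alone forces both $y$-values to equal $m$. Likewise, the phrase ``$y$ is solved already at $A$'' is inaccurate, because $A(y)$ may lie strictly below $m$, but neither point affects the argument.
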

We note that $\mathit{max}_l \cap \mathit{max}_r$ is strictly more accurate than $\mathit{max}$ %
(see~\autoref{ex:max'<max}), making it a natural choice when higher accuracy is desired.

\subsection{Local Oracles}\label{sec:local:oracles}

To be compatible with a local exploration, local oracles are allowed to use only the equations of the variables that have been visited. If the equation is not available, they should produce an over-approximated analysis to preserve correctness without breaking locality.

The example below shows how to obtain local oracles as straightforward variants of their non-local counterparts. Let $V$ denote the set of variables whose equation is available, and let $V^c = \V \setminus V$. We define the local variant of $\mathit{max}_r$ as
\begin{align*}
\mathit{locmax}_r(V, A, R) = {}
&\big( \V \times V^c \big) \cup 
\{ (x,y) \mid y \in V, \ x \neq y, \text{ $f_y(A)$ not maximal} \}\cup {} 
\\
&\{ (x,x) \mid x \in V, \big( A(x) \neq f_x(A) \text{ or } f_x(A) \text{ not maximal} \big) \}  \,.
\end{align*}
The first component $\V \times V^c$ reflects that we conservatively assume that every variable may influence an unvisited variable. The last two come from restricting the use of $f_y$ only to the variables $y \in V$ in the definition of $\mathit{max}_r$. Although $\{ (x,x) \mid x \in V , A(x)\neq f_x(A) \}$ seems a more natural choice than the third component, its use is unsound, as $A(x) \neq \mu F(x) \iff (x,x)\in\flows_A$ (by~\autoref{th:flowsLocalFix}).

\begin{proposition}\label{locmaxr:sound}
    $\mathit{locmax}_r$ is sound.
\end{proposition}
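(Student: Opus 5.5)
Soundness of $\mathit{locmax}_r$ means: whenever $\flows_A \subseteq R$, we have $\flows_A \subseteq \mathit{locmax}_r(V,A,R)$. The oracle ignores $R$, so it suffices to show $\flows_A \subseteq \mathit{locmax}_r(V,A,R)$ for every $V$ and every $A \in \A$. Take $(x,y) \in \flows_A$ and split into cases according to whether $y \in V^c$, $y \in V$ with $x \neq y$, or $x = y \in V$. The first case is immediate, since $\V \times V^c$ is part of the oracle. The remaining two cases amount to showing:
\begin{enumerate*}[label=(\roman*)]
\item if $(x,y) \in \flows_A$ with $x \neq y$, then $f_y(A)$ is not maximal;
\item if $(x,x) \in \flows_A$, then $A(x) \neq f_x(A)$ or $f_x(A)$ is not maximal.
\end{enumerate*}
Neither claim depends on $V$.

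For (ii), argue by contraposition. Suppose $A(x) = f_x(A)$ and $f_x(A)$ is maximal. Then $A(x)$ is maximal. By \autoref{th:partialKleene}\eqref{Kleene:intro:item2}, $\mu F$ is the top element of $\A$, so $A(x) \sqsubseteq \mu F(x)$. Maximality then forces $A(x) = \mu F(x)$, and \autoref{th:flowsLocalFix} gives $(x,x) \notin \flows_A$.

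For (i), also argue by contraposition. Suppose $x \neq y$ and $f_y(A)$ is maximal. The plan is to show $(x,y) \notin \flows_A$ directly from \autoref{def:flow}, comparing $F_{\pi\pi'}(A)(y)$ with $F_{\pi x \pi'}(A)(y)$. We distinguish two subcases on whether $y$ occurs in $\pi\pi'$.
\begin{itemize}
\item If $y$ occurs in $\pi\pi'$, let $B$ be the assignment just before the first update of $y$ along the string. Since $x \neq y$, the inserted $x$-update never touches $y$ directly. By \autoref{th:partialKleene}\eqref{Kleene:intro:item1}, the assignments in $\A$ only increase along partial updates, and $f_y$ is monotone. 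Hence after the first update of $y$, its value is $f_y(B)$ with $A \sqsubseteq B$, so $f_y(A) \sqsubseteq f_y(B)$, and maximality gives $f_y(B) = f_y(A)$. Every later update of $y$ also yields $f_y(A)$, by the same monotonicity argument. This holds in both runs, so $y$ ends at $f_y(A)$ in both.
\item If $y$ never occurs, then $y$ keeps the value $A(y)$ in both runs, because $x \neq y$.
\end{itemize}
Either way the two values agree, contradicting $(x,y) \in \flows_A$.

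The main obstacle is making the monotonicity step in (i) rigorous. It requires that every intermediate assignment lies in $\A$ and dominates $A$. This follows because $A = F_\sigma(\bot)$ for some $\sigma$, so each intermediate is $F_{\sigma\tau}(\bot) \in \A$, and \autoref{th:partialKleene}\eqref{Kleene:intro:item1} gives extensivity at each step. Alternatively, one can shortcut both (i) and (ii) by invoking the already established soundness of $\mathit{max}_r$ (\autoref{lem:soundnessMaxL/R}). That result was presumably proved by exactly this argument for $\cc_{A'}$ at all $A' = F_\pi(A)$. Combined with $\flows_A = \bigcup_\pi \cc_{F_\pi(A)}$ (\autoref{flows:anti-monotonic}) and the fact that maximality of $f_y$ persists upward, this lifts from now to flow.
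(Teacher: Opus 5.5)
Your proof is correct. It uses the same three-way case split as the paper ($y\notin V$; $y\in V$ with $x\neq y$; $x=y\in V$), but you argue the two nontrivial cases differently.

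\textbf{The paper's route.} It first picks $\pi$ with $(x,y)\in\cc_{F_\pi(A)}$, using $\flows_A=\bigcup_\pi\cc_{F_\pi(A)}$ (\autoref{flows=unionOFfutureNows}), and then applies \autoref{prop:pairsINflows}.
For $x\neq y$ this gives two assignments above $A$ on which $f_y$ differs, so $f_y(A)\sqsubseteq f_y(F_{\pi'}F_\pi(A))\sqsubset f_y(F_{\pi'}F_xF_\pi(A))$.
For $x=y$ with $A(x)=f_x(A)$ it gives $f_x(A)=A(x)\sqsubseteq F_\pi(A)(x)\sqsubset f_x(F_\pi(A))$.
This is essentially the alternative you sketch at the end: now-soundness at $F_\pi(A)$, pushed down to $A$ by monotonicity.

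\textbf{Your main route.} You argue contrapositively.
For $x\neq y$ you work directly from the definition of $\flows_A$. You show that $y$ ends with the same value in both runs: $A(y)$ if $y$ is never updated, and $f_y(A)$ otherwise. This amounts to re-deriving inline the ``last update of $y$'' content of \autoref{remark:finally:used}.
For $x=y$ you combine $A\sqsubseteq\mu F$ with the characterization $(x,x)\in\flows_A\iff A(x)\neq\mu F(x)$ from \autoref{th:flowsLocalFix}.

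\textbf{Comparison.} The paper's version is shorter given the machinery already in place, and it exhibits a positive witness of non-maximality. Yours is more self-contained in the $x\neq y$ case, needing only extensivity of partial updates and monotonicity of $f_y$. In the $x=y$ case yours relies on the characterization theorem rather than on \autoref{prop:pairsINflows}.
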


Oracles that do not use equations in their definitions, \eg, $\mathit{smax}$, are automatically local in the sense described above and can be used without variations.

Local oracles satisfy the same closure properties as stated in~\autoref{lem:closureProp}. %
Additional examples of local oracles are presented in~Appendix~\ref{sec:more:Local:Oracles}.

\section{Syntax-driven Oracles}
\label{sec:extensions}

In most applications, the right-hand sides of the equations are described using expressions. Although many dependency analyses can be carried out directly on the syntax of the expressions (\eg, by skipping the evaluation of sub-terms that are irrelevant to computing the least solution), the oracles we have seen so far are not designed to do so. In this section, we see how to obtain such oracles and establish general techniques to prove their soundness.

Let $\T := \T(\Sigma, \V)$ be the set of terms over a signature $\Sigma$ and variables $\V$. Denote by $\var(t) \subseteq \V$ the set of variables appearing in $t\in \T$. We assume every $n$-ary operation $\sigma \in \Sigma$ to have a monotonic interpretation $\sigma \colon \D^n \to \D$. 
The interpretation of a term $t \colon \D^\V \to \D$ is just the homomorphic extension of $A \in \D^\V$, thus is also monotonic. The generic system of equations we work with in this section has the form $\E = \{x = t_x \mid x\in \V\}$ where $t_x$ are terms.

We generalize the notion of now relation to terms as follows:
\begin{equation*}
    \hc_A =
    \{(x,t) \mid \exists \pi' \in \V^* 
        \text{ such that }
        t (F_{\pi'}(A)) \neq t (F_{x \pi'} (A)) \} \,.
\end{equation*}
Intuitively, $(x,t) \in \hc_A$ if the update of $x$ influence the future evaluation of the term $t$. Observe that $\hc_A$ coincides with $\cc_A$ on variables. The next result motivates the use of $\hc_A$ for the analysis of dependencies (compare it to \autoref{th:nowLocalFix}).
\begin{theorem}\label{improved:lemmaGR}
For $t \in \T$ and $A\in\A$, %
$
t(A) = t(\mu F) \iff 
\forall x \in \V.\, (x,t)\notin \hc_A
$.
\end{theorem}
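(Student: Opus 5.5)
The plan is to prove the two implications separately. Both use \autoref{th:partialKleene} applied to assignments $A \in \A$. The direction ($\Rightarrow$) uses monotonicity of the term interpretation. The direction ($\Leftarrow$) uses a telescoping argument along a sequence of updates reaching $\mu F$.

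First I record two preliminary facts for $A = F_\sigma(\bot) \in \A$ and any $\pi \in \V^*$.
\begin{enumerate}[label=(\roman*)]
\item Iterating \autoref{th:partialKleene}\eqref{Kleene:intro:item1} along $\pi$ gives $A \sqsubseteq F_\pi(A)$. This works because each intermediate assignment $F_{\sigma\rho}(\bot)$ again lies in $\A$.
\item Every $B \in \A$ satisfies $B \sqsubseteq \mu F$. Indeed, by \autoref{th:partialKleene}\eqref{Kleene:intro:item2} there is $\pi''$ with $F_{\pi''}(B) = \mu F$, and fact (i) gives $B \sqsubseteq F_{\pi''}(B)$.
\end{enumerate}

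For ($\Rightarrow$), assume $t(A) = t(\mu F)$, and fix $x \in \V$ and $\pi' \in \V^*$. Both $F_{\pi'}(A)$ and $F_{x\pi'}(A)$ belong to $\A$. By facts (i) and (ii) they lie between $A$ and $\mu F$. Since the interpretation of $t$ is monotone, we get $t(A) \sqsubseteq t(F_{\pi'}(A)) \sqsubseteq t(\mu F) = t(A)$, and antisymmetry forces equality. The same holds for $F_{x\pi'}(A)$. Hence $t(F_{\pi'}(A)) = t(F_{x\pi'}(A))$, so $(x,t) \notin \hc_A$.

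For ($\Leftarrow$), assume $(x,t) \notin \hc_A$ for all $x$. By \autoref{th:partialKleene}\eqref{Kleene:intro:item2}, pick $\pi = x_1 x_2 \cdots x_n$ with $F_\pi(A) = \mu F$. For $1 \le i \le n$, set $\pi'_i = x_{i+1}\cdots x_n$, so that $\pi'_n = \e$. By the convention that updates are performed in order of appearance, $F_{x_i \pi'_i}(A) = F_{x_i x_{i+1}\cdots x_n}(A)$. Applying the hypothesis with $x = x_i$ and $\pi' = \pi'_i$ gives
\[
t(F_{x_i x_{i+1}\cdots x_n}(A)) = t(F_{x_{i+1}\cdots x_n}(A)).
\]
Chaining these equalities for $i = 1, \dots, n$ telescopes to $t(\mu F) = t(F_{x_1\cdots x_n}(A)) = t(F_\e(A)) = t(A)$.

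The only delicate point is the ($\Leftarrow$) direction. The definition of $\hc_A$ only lets us prepend a single update $x$ applied directly to $A$, so a naive step-by-step chain $A, F_{x_1}(A), F_{x_1x_2}(A), \dots$ does not fit the hypothesis. Peeling off the first letter of successive suffixes of $\pi$ is exactly what makes every step an instance of the hypothesis, and no monotonicity is needed in this direction.
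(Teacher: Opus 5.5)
Your proof is correct and follows the paper's own argument. In the ($\Rightarrow$) direction you sandwich $F_{\pi'}(A)$ and $F_{x\pi'}(A)$ between $A$ and $\mu F$ using \autoref{th:partialKleene} and monotonicity of $t$, exactly as the paper does. In the ($\Leftarrow$) direction, your telescoping over suffixes of $\pi$ is the unrolled form of the paper's induction on $|\pi|$ with the decomposition $\pi = y\pi'$, followed by an appeal to \autoref{th:partialKleene}\eqref{Kleene:intro:item2}.
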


The same considerations that lead us to the definition of oracles support the introduction of the concept of \emph{extension} and the oracles induced from it.
\begin{definition}[Extension]\label{def:property:NEW:liftings}
We call a function $\lift \colon \A \times \wp(\V \times \V) \to \wp(\V \times \T)$ an extension for $\E$. An extension
is sound if $\cc_A \subseteq R$ implies $\hc_A \subseteq \lift(A, R)$, for every $A\in\A$ and $R\subseteq \V \times \V$. 
The \emph{oracle induced by} $\lift$ is 
\begin{equation*}
\ora[\lift](A,R) = \{ (x,x) \mid A(x) \neq t_x(A) \} \cup \{(x,y) \mid x \neq y, \ (x,t_y) \in \lift(A,R) \}.
\end{equation*}
\end{definition}

An extension takes a dependency analysis $R$ on variables and produces a refined one extended on terms ---hence the name.

\begin{proposition}\label{prop:sound:oracles:from:liftings}
A sound extension $\lift$ induces a sound oracle $\ora[\lift]$.
\end{proposition}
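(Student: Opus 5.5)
We must show: if $\cc_A \subseteq R$, then $\cc_A \subseteq \ora[\lift](A,R)$. Fix $A \in \A$ and $R$ with $\cc_A \subseteq R$; by soundness of $\lift$ we get $\hc_A \subseteq \lift(A,R)$. Take $(x,y) \in \cc_A$. We split into the two cases that match the two components of $\ora[\lift]$, namely $x \neq y$ and $x = y$.

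\emph{Case $x \neq y$.} It suffices to show $(x,t_y) \in \hc_A$. By definition of $\cc_A$ there is $\pi$ with $F_\pi(A)(y) \neq F_{x\pi}(A)(y)$. If $y$ does not occur in $\pi$, both sides equal $A(y)$ and $F_x(A)(y) = A(y)$ respectively (since $x\neq y$), which contradicts the inequality. So $y$ occurs in $\pi$. Write $\pi = \pi_1 y \pi_2$ with $y$ not occurring in $\pi_2$. Then $F_\pi(A)(y) = t_y(F_{\pi_1}(A))$ and $F_{x\pi}(A)(y) = t_y(F_{x\pi_1}(A))$. These differ, so $\pi' = \pi_1$ witnesses $(x,t_y) \in \hc_A \subseteq \lift(A,R)$. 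Hence $(x,y) \in \ora[\lift](A,R)$.

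\emph{Case $x = y$.} We need $A(x) \neq t_x(A)$. Suppose instead $A(x) = t_x(A)$. Then $F_x(A) = A$, so $F_{x\pi}(A) = F_\pi(A)$ for every $\pi$, and therefore $(x,x) \notin \cc_A$, a contradiction.

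The only real subtlety is the decomposition at the last occurrence of $y$ in the first case. That step is what converts a difference in the value of the variable $y$ into a difference in the evaluation of the term $t_y$ after a common prefix, which is exactly the form required by $\hc_A$. The assumption $x\neq y$ is needed to rule out the possibility that $y$ never occurs in $\pi$.
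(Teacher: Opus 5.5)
Your proof is correct and follows the paper's argument. The paper handles $x=y$ via \autoref{char:xxINflows:xxINnow}(\ref{char:xxINflows:xxINnow:item2}), and handles $x\neq y$ via \autoref{prop:pairsINflows}(\ref{prop:pairsINflows:neq}) to obtain $(x,t_y)\in\hc_A$; you prove both facts inline instead, and your last-occurrence decomposition of $\pi$ is the same step used in \autoref{remark:finally:used}, without the minimality assumption, which is not needed here.
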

For $\ora[\lift]$ to be sound, note that the first part in its definition, although independent of $\lift$, is necessary, since it is not difficult to see that $A(x) \neq t_x(A)$ if and only if $(x,x) \in \cc_A$ (see~\autoref{char:xxINflows:xxINnow}(\ref{char:xxINflows:xxINnow:item2})).

The accuracy relation $\lift \preceq \lift'$ on extensions is defined as expected, and entails $\ora[\lift] \preceq \ora[\lift']$. 
Extensions can be combined by means of the following operations:
\begin{gather*}
    (\lift \cap \lift')(A,R) = \lift(A,R) \cap \lift'(A,R) \,,
    \tag{\textsc{intersection}}
    \\
    \argUnion{\lift}(A,R) = \{ (x, t ) \mid (x,y) \in \lift(A,R) \text{ for some } y \in \var(t) \}.
    \tag{\textsc{Arg-Union}}
\end{gather*}

\begin{proposition}\label{lem:extension:closure:prop}
Sound extensions are closed under intersection and arg-union.
\end{proposition}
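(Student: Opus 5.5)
Two closure properties need proving. Intersection is immediate; arg-union needs a small lemma relating the term-level now relation $\hc_A$ to the variable-level relation $\cc_A$.

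\emph{Intersection.} Let $\lift$ and $\lift'$ be sound, and fix $A \in \A$ and $R$ with $\cc_A \subseteq R$. Soundness of each extension gives $\hc_A \subseteq \lift(A,R)$ and $\hc_A \subseteq \lift'(A,R)$, so $\hc_A \subseteq (\lift\cap\lift')(A,R)$. The same argument works for arbitrary families of extensions.

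\emph{Arg-union.} Assume $\lift$ is sound and $\cc_A \subseteq R$, so that $\hc_A \subseteq \lift(A,R)$. Take $(x,t) \in \hc_A$. I must find some $y \in \var(t)$ with $(x,y) \in \lift(A,R)$. It suffices to find $y \in \var(t)$ with $(x,y) \in \hc_A$. Since $\hc_A$ restricted to variables is exactly $\cc_A$, this is the same as $(x,y) \in \cc_A$, and then $(x,y) \in \lift(A,R)$ by soundness.

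The key lemma is therefore: if $(x,t) \in \hc_A$, then $(x,y) \in \cc_A$ for some $y \in \var(t)$. I prove the contrapositive. Suppose $(x,y) \notin \cc_A$ for every $y \in \var(t)$. By the definition of $\cc_A$, for every $\pi'$ and every $y \in \var(t)$ we have $F_{\pi'}(A)(y) = F_{x\pi'}(A)(y)$. So the assignments $F_{\pi'}(A)$ and $F_{x\pi'}(A)$ agree on $\var(t)$. The interpretation of $t$ is the homomorphic extension of an assignment, so it depends only on the values of the variables in $\var(t)$; this follows by a routine induction on $t$. Hence $t(F_{\pi'}(A)) = t(F_{x\pi'}(A))$ for every $\pi'$, that is, $(x,t) \notin \hc_A$.

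The quantifier order is the one delicate point: the same $\pi'$ must work across all of $\var(t)$. The contrapositive handles this correctly, because "not in $\cc_A$" quantifies over all $\pi'$. In the direct direction one would fix the witnessing $\pi'$ for $(x,t)$ and find a variable $y$ on which the two assignments differ, which also works. The argument is otherwise routine, and it never uses Theorem~\ref{improved:lemmaGR}.
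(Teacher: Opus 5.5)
Your proof is correct and follows the same route as the paper. Intersection is immediate, and arg-union reduces to the lemma that $(x,t) \in \hc_A$ implies $(x,y) \in \cc_A$ for some $y \in \var(t)$, together with the fact that $\hc_A$ coincides with $\cc_A$ on variables. The only difference is cosmetic: the paper proves that lemma by structural induction on $t$ directly, whereas you prove its contrapositive using the equally inductive fact that a term's value depends only on the values of its variables.
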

An immediate consequence of~\autoref{lem:extension:closure:prop} is that the accuracy of every extension $\lift$ can be improved without compromising soundness by using $\lift' = \lift \cap \argUnion{\lift}$.

Next we provide a paradigmatic example of extension, tailored to Boolean equation systems. Further examples are presented in~Appendix~\ref{sec:additional:extensions}.

\paragraph{Boolean Extension.}
We turn our attention to Boolean equation systems, where variables are interpreted over the Boolean lattice $\{\mathbf{ff},\mathbf{tt}\}$ and the right-hand sides are positive Boolean formulas%
\footnote{Observe that negation is not permitted. This ensures that all terms generated by the grammar above have a monotonic interpretation.} over those variables:
\begin{align*}
    t := y \mid \mathbf{tt} \mid \mathbf{ff} \mid t_1 \land t_2 \mid t_1 \lor t_2 \,.
    &&
    (y \in \V)
\end{align*}
For these systems, we introduce the extension $\mathit{bool}$, which refines dependency analyses by exploiting the algebraic properties of Boolean expressions: 
$t \wedge \mathbf{ff} = \mathbf{ff}$ and $t \vee \mathbf{tt} = \mathbf{tt}$ (\emph{absorbing elements}); $t \wedge \mathbf{tt} = t$ and $t \vee \mathbf{ff} = t$ (\emph{null elements}).

For $A\in\A$ and $R \subseteq \V \times \V$, we define $\mathit{bool}(A,R)$ as the smallest set $S \subseteq \V \times \T$ closed under the following rules of inference:
\begin{mathpar}
\frac{(x,y)\in R \quad A(y)=\mathbf{ff}}{(x,y)\in S}
\and
\frac{(x,t_i)\in S \quad (z,t_{\neg i})\in S \quad t_0(A)=\mathbf{ff}=t_1(A)}{(x,t_0 \wedge t_1)\in S}
\and
\frac{(x,t_i)\in S \quad t_i(A)=\mathbf{ff} \quad t_{\neg i}(A)=\mathbf{tt}}{(x,t_0 \wedge t_1)\in S}
\and
\frac{(x,t_i)\in S \quad t_0(A)=\mathbf{ff}=t_1(A)}{(x,t_0 \vee t_1)\in S}
\end{mathpar}
where $i \in \{0,1\}$ and $\neg i = (i+1 \mod{2})$ flips indexes of the arguments.
\begin{proposition}\label{prop:Boolean:Ext:sound}
    The extension $\mathit{bool}$ is sound.
\end{proposition}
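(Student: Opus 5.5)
We must show: if $\cc_A \subseteq R$, then $\hc_A \subseteq \mathit{bool}(A,R)$. Equivalently, for every term $t$ and variable $x$ with $(x,t)\in\hc_A$, the pair $(x,t)$ is derivable by the rules. The proof goes by structural induction on $t$, for a fixed $A\in\A$ and $x$. Two preliminary facts about the Boolean lattice are used throughout.

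The first fact is monotonicity along updates. By \autoref{th:partialKleene}, $F_{\pi'}(A)\sqsubseteq F_{x\pi'}(A)$: indeed $A\sqsubseteq F_x(A)$, and monotonicity of $F_{\pi'}$ transfers this. Hence $t(F_{\pi'}(A)) \neq t(F_{x\pi'}(A))$ means $t(F_{\pi'}(A))=\mathbf{ff}$ and $t(F_{x\pi'}(A))=\mathbf{tt}$.

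The second fact is that $(x,t)\in\hc_A$ forces $t(A)=\mathbf{ff}$. Since $A\sqsubseteq F_{\pi'}(A)$, if $t(A)=\mathbf{tt}$ then $t(F_{\pi'}(A))=\mathbf{tt}$ as well, and no witness can exist. These two facts supply the side conditions $A(y)=\mathbf{ff}$ and $t_i(A)=\mathbf{ff}$ that appear in the rules.

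The induction proceeds as follows.

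\emph{Constants.} For $t=\mathbf{tt}$ or $t=\mathbf{ff}$, the value of $t$ never changes, so $(x,t)\notin\hc_A$ and there is nothing to prove.

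\emph{Variable $t=y$.} Here $\hc_A$ coincides with $\cc_A$, so $(x,y)\in\cc_A\subseteq R$. The second fact gives $A(y)=\mathbf{ff}$, and the first rule yields $(x,y)\in S$.

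\emph{Disjunction $t=t_0\vee t_1$.} Take a witness $\pi'$. Then $t_0(F_{\pi'}A)=t_1(F_{\pi'}A)=\mathbf{ff}$, and some $t_i$ becomes $\mathbf{tt}$ after the extra update of $x$. So $\pi'$ also witnesses $(x,t_i)\in\hc_A$, and the induction hypothesis gives $(x,t_i)\in S$. Moreover $t_0(A)=t_1(A)=\mathbf{ff}$ by monotonicity, so the disjunction rule applies.

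\emph{Conjunction $t=t_0\wedge t_1$.} Take a witness $\pi'$. After the update both conjuncts are $\mathbf{tt}$; before it, some $t_i$ is $\mathbf{ff}$. That $t_i$ therefore changes, so $(x,t_i)\in\hc_A$ and $(x,t_i)\in S$ by the induction hypothesis. The second fact gives $t_i(A)=\mathbf{ff}$. We split on $t_{\neg i}(A)$.
\begin{itemize}
\item If $t_{\neg i}(A)=\mathbf{tt}$, the third rule applies directly.
\item If $t_{\neg i}(A)=\mathbf{ff}$, we need some $z$ with $(z,t_{\neg i})\in S$. Since $t_{\neg i}(F_{x\pi'}A)=\mathbf{tt}$ and $F_{x\pi'}(A)\in\A$, we get $t_{\neg i}(\mu F)=\mathbf{tt}\neq t_{\neg i}(A)$, using that $\mu F$ is the top of $\A$. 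By \autoref{improved:lemmaGR}, some $z$ satisfies $(z,t_{\neg i})\in\hc_A$. The induction hypothesis, applied to the smaller term $t_{\neg i}$ for all variables, gives $(z,t_{\neg i})\in S$, and the second rule applies.
\end{itemize}

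The main obstacle is this last subcase, because the witness $z$ need not equal $x$. It is handled by stating the induction hypothesis for all variables simultaneously, that is, inducting on $t$ with the claim ``for every $x$, $(x,t)\in\hc_A$ implies $(x,t)\in S$,'' and by obtaining $z$ through \autoref{improved:lemmaGR}. A secondary point to check is that $F_{x\pi'}(A)$ really lies in $\A$; this holds because $A=F_\pi(\bot)$ for some $\pi$.
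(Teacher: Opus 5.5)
Your proof is correct and takes essentially the same route as the paper. Both argue by structural induction on $t$ with the claim quantified over all variables, use monotonicity along partial updates to obtain the side conditions, and use \autoref{improved:lemmaGR} together with $\mu F$ being the top of $\A$ to supply the witness $z$ in the conjunction subcase. The only difference is cosmetic: the paper handles that subcase by contradiction, assuming $\mathit{bool}(A,R)(t_{\neg i})=\emptyset$ and deriving $t(F_\pi F_x(A))=\mathbf{ff}$, whereas you argue directly from $t_{\neg i}(\mu F)=\mathbf{tt}$.
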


To see that the extension $\mathit{bool}$ formally internalizes the algebraic properties of Boolean expressions, denote by $t \cong_{A,R} t'$ the equality 
of the set of dependencies $\{ x \mid (x,t) \in \mathit{bool}(A,R) \} = \{ x \mid (x,t') \in \mathit{bool}(A,R) \}$.
Then, one obtains 
\begin{mathpar}
t \wedge \mathbf{ff} \cong_{A,R} \mathbf{ff}
\and
t \vee \mathbf{tt} \cong_{A,R} \mathbf{tt}
\and
t \wedge \mathbf{tt} \cong_{A,R} t
\and
t \vee \mathbf{ff} \cong_{A,R} t \,.
\end{mathpar}
In fact, the following more general inferences are valid
\begin{equation*}
\frac{t'(A) = \mathbf{ff}
\quad t' \cong_{A,R} \mathbf{ff}
    }{ t \wedge t' \cong_{A,R} \mathbf{ff} }
\;\;\;\;
\frac{t'(A) = \mathbf{tt}
    }{ t \vee t' \cong_{A,R} \mathbf{tt} }
\;\;\;\;
\frac{t'(A) = \mathbf{tt}
    }{ t \wedge t' \cong_{A,R} t }
\;\;\;\;
\frac{t'(A) = \mathbf{ff}
\quad t' \cong_{A,R} \mathbf{ff}
    }{ t \vee t' \cong_{A,R} t }
\end{equation*}
The asymmetry in the assumptions arises because $t'(A) = \mathbf{tt}$ directly implies $t' \cong_{A,R} \mathbf{tt}$. In contrast, when $t'(A) = \mathbf{ff}$, assuming $t' \cong_{A,R} \mathbf{ff}$ is necessary to ensure that $t'(\mu F) = t'(A) = \mathbf{ff}$ (by \autoref{improved:lemmaGR} and \autoref{prop:Boolean:Ext:sound}).

\subsection{Local Extensions}

Also local oracles can be induced from extensions, but their soundness needs to be adapted. To this end, define the relation
\[
{\hflows_A} = \{(x,t) \mid \exists \pi, \pi' \in \V^* \text{ such that } 
t \big( F_{\pi \pi'} (A) \big) \neq t \big( F_{\pi x \pi'} (A) \big) \}.
\]

\begin{definition}
An extension $\lift$ is \emph{locally sound} if $\flows_A \subseteq R$ implies $\hflows_A \subseteq \lift(A, R)$, for every $A\in\A$ and $R\subseteq \V \times \V$. 
The local oracle induced by $\lift$ is
\begin{align*}
\ora[\lift](V,A,R) = {}
&\big( \V \times V^c \big) \cup 
    \{(x,y) \mid y \in V, \ x \neq y, \ (x,t_y) \in \lift(A,R) \} \cup {} 
\\
&\{ (x,x) \mid x \in V, \big( A(x) \neq t_x(A) \text{ or } \exists z . (z,t_x) \in \lift(A,R) \big) \} \,.
\end{align*}
\end{definition}

\begin{proposition}\label{LOCAL:prop:sound:oracles:from:liftings}
A locally sound extension $\lift$ induces a sound local oracle $\ora[\lift]$.
\end{proposition}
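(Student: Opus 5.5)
Fix $V \subseteq \V$, $A \in \A$ and $R$ with $\flows_A \subseteq R$. By local soundness of $\lift$ we have $\hflows_A \subseteq \lift(A,R)$. Take $(x,y) \in \flows_A$; we show it lies in $\ora[\lift](V,A,R)$. The plan is to split into cases according to whether $y$ is visited and whether $x = y$.

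\emph{Case $y \notin V$.} Then $(x,y) \in \V \times V^c$, which is contained in $\ora[\lift](V,A,R)$ by definition, so there is nothing to prove.

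\emph{Case $y \in V$ and $x \neq y$.} By definition of $\flows_A$ there are $\pi,\pi'$ with $F_{\pi\pi'}(A)(y) \neq F_{\pi x \pi'}(A)(y)$. We would like to turn this into a difference in $t_y$. Consider the strings $\pi\pi' y$ and $\pi x \pi' y$. Because $x \neq y$ and values of $y$ only change by updates of $y$, it is cleaner to argue as follows. If the last occurrence of $y$ in $\pi'$ exists, write $\pi' = \rho\, y\, \rho'$ with $y \notin \rho'$. Then $F_{\pi\pi'}(A)(y) = t_y(F_{\pi\rho}(A))$ and $F_{\pi x\pi'}(A)(y) = t_y(F_{\pi x\rho}(A))$, so $(x,t_y) \in \hflows_A$. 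If $y$ does not occur in $\pi'$, then the $y$-component equals $F_\pi(A)(y)$ in both runs, since $x \neq y$ and no update of $y$ follows; this contradicts the assumed difference. In either surviving case $(x,t_y) \in \hflows_A \subseteq \lift(A,R)$, hence $(x,y) \in \ora[\lift](V,A,R)$.

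\emph{Case $y \in V$ and $x = y$.} We must show that $A(x) \neq t_x(A)$ or $(z,t_x) \in \lift(A,R)$ for some $z$. By \autoref{th:flowsLocalFix}, $(x,x) \in \flows_A$ means $A(x) \neq \mu F(x)$. Assume $A(x) = t_x(A)$. Since $\mu F$ is a fixed point, $\mu F(x) = t_x(\mu F)$, and so $t_x(A) \neq t_x(\mu F)$. By \autoref{th:partialKleene}(b) there is $\rho$ with $F_\rho(A) = \mu F$. Take $\rho$ of minimal length. Then along the chain $A = F_\varepsilon(A), F_{\rho_1}(A), \dots, F_\rho(A)$ the value of $t_x$ changes at some step. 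Pick the first index $k$ where $t_x(F_{\rho_{<k}}(A)) \neq t_x(F_{\rho_{<k}z}(A))$ with $z = \rho_k$. By the definition of $\hflows_A$, with $\pi = \rho_{<k}$ and $\pi' = \varepsilon$, this gives $(z,t_x) \in \hflows_A \subseteq \lift(A,R)$, as required.

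The main obstacle is this last case: soundness hinges on converting "$x$ unsolved but stuck" into some term-level flow into $t_x$. The chain-difference argument above should resolve it, with $\hflows_A$ supplying exactly the needed witness through its free prefix $\pi$. The middle case also needs care in the bookkeeping of where the last $y$-update occurs.
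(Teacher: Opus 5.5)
Your proof is correct. The first two cases follow the paper. For $x \neq y \in V$ the paper takes $\pi'$ of minimal length and invokes \autoref{remark:finally:used} to get $\pi' = \pi'' y$; you instead isolate the last occurrence of $y$ in $\pi'$ directly. That is the same idea written out inline, and it needs no minimality.

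The real difference is the case $x = y \in V$. The paper argues contrapositively. It assumes $(z,t_x) \notin \lift(A,R)$ for all $z$ and uses local soundness to get $(z,t_x) \notin \hflows_A$. It then appeals to the auxiliary characterization \autoref{improved:lemmaGR-flow} to conclude $t_x(A) = t_x(\mu F)$, hence $A(x) \neq t_x(A)$. The relevant direction of that characterization is itself obtained from $\hc_A \subseteq \hflows_A$ and the inductive proof of \autoref{improved:lemmaGR}.

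You instead prove the needed implication $t_x(A) \neq t_x(\mu F) \Rightarrow \exists z.\,(z,t_x) \in \hflows_A$ directly. You telescope along a path $\rho$ from $A$ to $\mu F$ and read off a witness with prefix $\rho_{<k}$ and empty suffix. This uses exactly the free prefix that distinguishes $\hflows_A$ from $\hc_A$, so your argument is shorter and self-contained. The paper's route costs more here but reuses a general characterization of when a term has stabilized, which it relies on elsewhere, for example in the Boolean inference rules. Choosing $\rho$ of minimal length and taking the first such index $k$ are harmless but unnecessary: any step where $t_x$ changes gives a witness.
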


We conclude this section by identifying a natural condition on extensions under which soundness entails local soundness.
\begin{proposition}\label{extension:sound+anti-mon:inAssign:implies:locally:sound}
Let $\lift$ be an extension satisfying $\lift(F_\pi(A),R) \subseteq \lift(A,R)$ for every $A$, $\pi$ and $R$ such that $\flows_A \subseteq R$. 
If $\lift$ is sound, then it is also locally sound.
\end{proposition}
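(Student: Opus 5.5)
We must show: if $\flows_A \subseteq R$, then $\hflows_A \subseteq \lift(A,R)$. The plan is to reduce local soundness to ordinary soundness at the later assignments $F_\pi(A)$, and then use anti-monotonicity to transport the conclusion back to $A$.

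The first step is the analogue of \autoref{flows:anti-monotonic} for terms:
\[
\hflows_A = \bigcup_{\pi\in\V^*} \hc_{F_\pi(A)}.
\]
This follows directly from the definitions. A pair $(x,t)$ lies in $\hflows_A$ iff there are $\pi,\pi'$ with $t(F_{\pi'}(F_\pi(A))) \neq t(F_{x\pi'}(F_\pi(A)))$. That is exactly the statement $(x,t)\in\hc_{F_\pi(A)}$, since $F_{\pi\pi'}(A)=F_{\pi'}F_\pi(A)$ and $F_{\pi x\pi'}(A)=F_{x\pi'}F_\pi(A)$ by the left-to-right composition convention. We also note that $F_\pi(A)\in\A$ whenever $A\in\A$, so $\lift$ may be applied at $F_\pi(A)$.

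Now fix $(x,t)\in\hflows_A$. By the first step, $(x,t)\in\hc_{B}$ for $B=F_\pi(A)$ and some $\pi$. To apply soundness of $\lift$ at $B$ we need $\cc_B\subseteq R$. By \autoref{flows:anti-monotonic}, applied to $A$,
\[
\cc_B \subseteq \bigcup_{\rho}\cc_{F_\rho(A)} = \flows_A \subseteq R.
\]
Soundness then gives $\hc_B\subseteq\lift(B,R)$, so $(x,t)\in\lift(F_\pi(A),R)$. The hypothesis $\flows_A\subseteq R$ allows the anti-monotonicity assumption with this $A$ and $\pi$, so $\lift(F_\pi(A),R)\subseteq\lift(A,R)$. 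Hence $(x,t)\in\lift(A,R)$, which concludes the proof.

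The main point of care is the first step: checking that the composition order in $F_{\pi x\pi'}$ really matches the now-relation at $F_\pi(A)$. The rest is direct chaining of earlier results.
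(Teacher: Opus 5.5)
Your proof is correct and matches the paper's argument step for step. You decompose $\hflows_A$ as $\bigcup_\pi \hc_{F_\pi(A)}$, use $\cc_{F_\pi(A)} \subseteq \flows_A \subseteq R$ to apply soundness of $\lift$ at $F_\pi(A)$, and transfer the membership back to $\lift(A,R)$ via the anti-monotonicity hypothesis. Your explicit check of the composition order in $F_{\pi x \pi'}$ spells out what the paper leaves as ``by definition''.
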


This condition is not only theoretically well-motivated but also satisfied by all the extensions introduced in this work (see~\autoref{list:locally:sound:extensions}).

\section{Experimental Results}
\label{Experimental Results}
We implemented a Java prototype%
\footnote{A virtual machine to reproduce the experiments is available at \url{https://homes.cs.aau.dk/~giovbacci/tools/LocalK/VM.zip}.}
of both the global and local algorithms. The implementation provides several built-in oracles for both local and global solvers, which can be combined as shown in Section \ref{sec:oracles} to obtain finer oracles.

\paragraph{Comparison with state-of-the-art.}
We consider a number of problems, including %
bisimulation checking and model-checking.
We show experimental evidence that our approach is competitive with both domain-specific implementations presented in~\cite{%
AndersenAEHLOSW15,JensenLSO16} and the general framework of ADGs~\cite{EnevoldsenLS19}.

For each benchmark, we report the average execution time over ten runs. 
We enforced a timeout policy, interrupting the execution after 25 minutes (indicated as TO in the tables). All experiments were conducted on a Xeon W-1250P 4.10 GHz processor with 16GB RAM and a 512GB SSD, running Linux Ubuntu.

\setlength{\tabcolsep}{3pt}
\begin{table}[t]
\centering
\scriptsize

\begin{tabular}{|c|c|c|c|}
    \hline
    \multirow{2}{*}{B} & \multicolumn{3}{c|}{Time (s)} \\ \cline{2-4}
    & \texttt{ADG} & \texttt{LocalK} & \texttt{CAAL}\\ \hline
    \multicolumn{4}{|c|}{Lossy ABP -- nonbisimilar} \\ \hline
    4 & 10.25 & \textbf{0.12} & 50.79 \\ \hline
    5 & 9.49 & \textbf{0.65} & TO\\ \hline
    6 & 10.77 & \textbf{3.26} & TO\\ \hline\hline
    \multicolumn{4}{|c|}{Lossy ABP -- bisimilar} \\ \hline
    2 & 8.58 & 2.54 & \textbf{2.02}\\ \hline
    3 & \textbf{48.23} & 417.03 & 113.00 \\ \hline
    4 & \textbf{1263.2} & TO & TO\\ \hline
\end{tabular}
\begin{tabular}{|c|c|c|c|}
    \hline
    \multirow{2}{*}{N} & \multicolumn{3}{c|}{Time (s)} \\ \cline{2-4}
    & \texttt{ADG} & \texttt{LocalK} & \texttt{CAAL}\\ \hline
    \multicolumn{4}{|c|}{Leader -- nonbisimilar} \\ \hline
    8 & 8.54 & \textbf{0.68} & 3.62 \\ \hline
    9 & 14.83 & \textbf{5.14} & 50.23 \\ \hline
    10 & 68.82 & \textbf{42.81} & 1317.05\\ \hline \hline
    \multicolumn{4}{|c|}{Leader -- bisimilar} \\ \hline
    8 & 17.06 & \textbf{2.44} & 7.67\\ \hline
    9 & 100.25 & \textbf{29.24} & 116.39 \\ \hline
    10 & 954.87 & \textbf{332.80} &  TO \\ \hline
\end{tabular}
\begin{tabular}{|c|c|c|c|c|c|}
    \hline
    \multicolumn{3}{|c|}{\multirow{2}{*}{Param}} & \multicolumn{2}{c|}{Time (s)} \\ \cline{4-5}
    \multicolumn{3}{|c|}{} & \texttt{~~WKTool~~} & \texttt{LocalK} \\ \hline
    B & M & X & \multicolumn{2}{c|}{ABP -- $\text{EF}[{\leq}X] (\mathit{del} {=} M)$} \\ \hline
    5 & 7 & 35 & 3.492 & \textbf{0.240} \\ \hline
    5 & 8 & 40 & 2.064 & \textbf{0.734} \\ \hline
    6 & 5 & 30 & 3.831 & \textbf{3.806} \\ \hline
    \hline
    \multicolumn{2}{|c|}{N} & X & 
    \multicolumn{2}{c|}{Leader -- $\text{EF}[{\leq}X]\mathit{leader}$} \\ \hline
    \multicolumn{2}{|c|}{10} & 10 & 2.202 & \textbf{1.451} \\ \hline
    \multicolumn{2}{|c|}{11} & 11 & 10.029 & \textbf{3.377} \\ \hline
    \multicolumn{2}{|c|}{12} & 12 & 42.841 & \textbf{10.778} \\ \hline
\end{tabular}
\vspace{2ex}
\caption{Comparisons with ADG~\cite{EnevoldsenLS19}, CAAL~\cite{AndersenAEHLOSW15}, and \texttt{WKTool}~\cite{JensenLSO16}. The parameters $B$, $N$, $M$, and $X$ respectively indicate: the buffer size, the number of processes, the number of messages, and the weight bound in the query $\text{EF}[{\leq} X]\varphi$.   
}
\label{tab:comparison1}
\end{table}

\begin{description}[fullwidth, itemsep=1ex, topsep=1ex]

\item[Bisimulation checking.]
We encoded weak bisimulation checking of CCS processes as a Boolean equation system in our framework and compared the performance of our local algorithm with the ADG implementation from~\cite{ELS22} and CAAL~\cite{AndersenAEHLOSW15} on benchmarks from~\cite{ELS22,dalsgaard2018distributed}. 
Like~\cite{dalsgaard2018distributed}, our encoding follows~\cite{10.1007/978-3-319-47677-3_13}.
The experiments used both the $\mathit{bool}$ extension-induced oracle and $\mathit{smax}$, though only results for $\mathit{smax}$ are reported due to negligible differences. 
\autoref{tab:comparison1} shows that our local algorithm often greatly outperformed both ADG and CAAL. %
An exception was the bisimilar-ABP (Alternating Bit Protocol) benchmarks, where our oracles failed to prune the state space effectively, and over $95\%$ of the total runtime was spent generating the equations system.

\item[Model-Checking WCTL.]
We encoded model checking of weighted CTL over weighted Kripke structures following the approach of~\cite{JensenLSO16}, using equation systems over the poset $(\mathbb{N} \cup {\infty}, \geq)$.
We compared the execution times of our local algorithm using the $\mathit{smax}$ oracle with WKTool %
on a subset of benchmarks from~\cite{JensenLSO16}. 
\autoref{tab:comparison1}(right) shows that \texttt{LocalK} outperforms WKTool on all benchmarks. In particular, on the Leader Election benchmark, we achieve speedups up to $300\%$.

\end{description}

\paragraph{A Comparison of Oracles: precision vs overhead.}
We evaluated the performance of the local algorithm under different local oracles on a list of benchmarks consisting of $1000$ randomly generated Boolean equation systems, each containing $3000$ equations. To emulate conditions in which a local exploration takes place, we introduced an artificial delay of 10 milliseconds for the generation of each equation. For each experiment, we measured the execution time and the number of visited equations $|V|$. 

\autoref{fig:cactus-oracles} reports the data for the local oracles $\mathit{locmax}$, $\mathit{smax}$, $\ora^{\mathit{bool}}$, and $\mathit{comp}$ (that we define below).
The plot to the left shows that $\mathit{locmax}$ and $\mathit{smax}$ are indistinguishable, and they solve all benchmarks within $35$ seconds. Nonetheless, $\ora^{\mathit{bool}}$ solves about $60\%$ of the benchmarks significantly faster than them. This behaviour is explained by looking at the plot to the right: while $\mathit{locmax}$ and $\mathit{smax}$ require exploring the entire system to achieve a $50\%$ coverage, $\ora^{\mathit{bool}}$ achieves $60\%$ coverage by visiting fewer than $80$ variables, yielding substantial savings in the generation of the equations. 
Beyond this threshold, the additional precision of $\ora^{\mathit{bool}}$ is outweighed by its computational overhead. Based on this observation, we define the following composed local oracle, combining the strengths of $\ora^{\mathit{bool}}$ and $\mathit{smax}$:
\begin{equation*}
    \mathit{comp}(V,A,R) = \begin{cases}
        (\ora^{\mathit{bool}} \circ \mathit{smax})(V,A,R) & \text{if $|V| \leq 80$}, \\
        \mathit{smax}(V,A,R) & \text{otherwise}.
    \end{cases}
\end{equation*}
This local oracle is sound by \autoref{prop:oracles:properties}. As shown in \autoref{fig:cactus-oracles},  $\mathit{comp}$ achieves the best runtime performance while remaining conservative in its exploration, showcasing the benefits of our compositional framework.

\begin{figure}[t]
    \centering
    \includegraphics[width=0.47\textwidth]{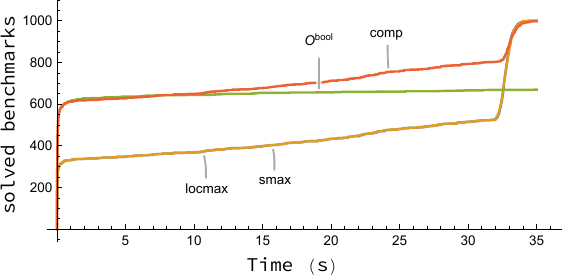}
    \hspace{1ex}
    \includegraphics[width=0.47\textwidth]{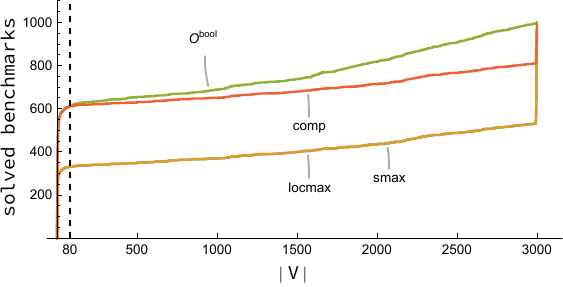}
    \vspace{-10pt}
    \caption{Cactus plots comparing different local oracles w.r.t. the runtime (left) and the number of visited variables $|V|$ (right) during the execution of \proc{LocalK}. } 
    \label{fig:cactus-oracles}
\end{figure}

\paragraph{Global vs Local.}
We conclude with a comparison of \proc{GlobalK} and \proc{LocalK} using different oracles. 
\autoref{tab:localglobal} reports execution time, number of iterations, and number of visited variables on a set of small benchmarks for weak bisimilarity checking. 
The results show that a lazy exploration can drastically reduce the number of variables visited by \proc{LocalK} when a counterexample exists. 
Conversely, \proc{GlobalK} benefits from the full knowledge of the system, often requiring fewer iterations. 
However, as system size increases, the overhead of additional local iterations is typically offset by avoiding the costly construction of the full system of equations.

\begin{table}[t]
    \centering
    \scriptsize
    \begin{tabular}{|c|l|c|c|c|c|c|c|c|c|c|}
    \hline
    &
    \multirow{2}{*}{Model} & \multirow{2}{*}{$|\V|$} & \multirow{2}{*}{$\cong$} & \multicolumn{3}{c|}{\proc{LocalK}} & \multicolumn{4}{c|}{\proc{GlobalK}} \\ \cline{5-11}
   & & & & \text{Time (ms)} & \text{\# Iter} & \text{|V|} & \text{Loop (ms)} & \text{Gen (ms)}& \text{Time (ms)} & \text{\# Iter} \\\hline
 \multirow{5}{*}{\rotatebox[origin=c]{90}{$\mathit{smax}$}} 
 & \text{BasicBuffer} & 24 & yes & 4 & 53 & 24 & 1 & 3 & 4 & 24 \\ \cline{2-11}
 & \text{Peterson} & 148 & no & 4 & 4 & 2 & 1 & 14 & 15 & 3  \\\cline{2-11}
 & \text{Dekker} & 290 & yes & 48 & 2316 & 290 & 1 & 23 & 24 & 290  \\\cline{2-11}
 & \text{ABP(B4)} & 1356 & no & 151 & 4 & 2 & 1 & 6232 & 6233 & 3  \\\cline{2-11}
 & \text{ABP(B3)} & 3568 & yes & 229527 & 8165 & 3568 & 1 & 229003 & 229004 & 3568  \\\hline
 \multirow{5}{*}{\rotatebox[origin=c]{90}{$\ora^{\mathit{bool}}$}}
 & \text{BasicBuffer} & 24 & yes & 4 & 38 & 24 & 1 & 4 & 3 & 24  \\ \cline{2-11}
 & \text{Peterson} & 148 & no & 4 & 4 & 2 & 26 & 12 & 38 & 2  \\ \cline{2-11}
 & \text{Dekker} & 290 & yes & 90 & 532 & 290 & 29 & 26 & 55 & 85  \\ \cline{2-11}
 & \text{ABP(B4)} & 1356 & no & 168 & 4 & 2 & 423 & 6302 & 6725 & 2  \\ \cline{2-11}
 & \text{ABP(B3)} & 3568 & yes & 243058 & 4048 & 3568 & 10406 & 239686 & 294315 & 399  \\ \hline
\end{tabular}
\vspace{1ex}
    \caption{Comparison between local and global algorithms. For \proc{GlobalK}, we also report both the time spent in the main loop and that spent to generate the  system.}
    \label{tab:localglobal}
\end{table}

\section{Conclusions and Future Work}
\label{sec:conclusions}
We introduced a general framework for the analysis of variable dependencies in fixed-point computations, supported by new global and local algorithms and the notion of oracles as a unifying and compositional mechanism for refining dependency information. 
We offer a theoretical framework where soundness emerges through the composition of simpler oracles, and proofs become easier.
Our results show that our theoretical framework retains practical efficiency also when compared with domain-specific tools. Nevertheless, one needs to find a good balance between precision and the computational overhead of the chosen oracle.

Future work will focus on optimizing the implementation to eliminate avoidable overheads, refining strategies for variable extraction from the $\id{todo}$ set, and developing additional classes of domain-specific oracles.
Another research direction is to generalize the framework beyond Noetherian posets, enabling fixed-point computation over continuous domains (e.g., the unit interval $[0,1]$), which are relevant in the verification of quantitative and probabilistic systems.

\begin{credits}
\subsubsection{\ackname} This work was partially supported by the S4OS Villum Investigator Grant nr. 37819 from Villum Fonden.

\subsubsection{\discintname}
The authors have no competing interests to declare that are relevant to the content of this article.
\end{credits}

\bibliography{biblio}
\bibliographystyle{splncs04} %

\appendix

\section{A Wider Selection of Oracles and Local Oracles}
\label{sec:appendix}

We present additional oracles that, due to space limitations, were not included in the main text. These oracles and their local variants further illustrate the flexibility of our framework.

\subsection{Additional Oracles} \label{sec:moreOracles}

Recall that $\args(f)$ denotes the set of arguments of a function $f \colon \D^\V \to \D$, \ie, the smallest set $S \subseteq \V$ such that, there exists $g \colon \D^S \to \D$ with $f(A) = g(A|_S)$ for all $A \in \A$.

\begin{example} \label{ANTICIPATED:ex:argOracle}
    Consider the system of equations from \autoref{ex:locality}. %
    It is easy to see that the arguments of the functions associated to $x,y,z$ are $\args(f_x) = \args(f_y) = \{y,z\}$ and $\args(f_z) = \emptyset$. 
    The only non-trivial fact we focus on here is that $\args(f_w)= \emptyset$, as $f_w$ is the constant function $\mathbf{tt}$. %
\end{example}

\subsubsection{Reachable Arguments Oracle.}
Clearly, the evaluation of $f_x$ can be influenced only by updating the variables among its arguments.
This observation induces an obvious oracle, called \emph{reachable arguments oracle}, defined as follows, for $A \in \A$ and $R \subseteq \V \times \V$
\begin{align*}
    \mathit{args}(A, R) &= \big( \{(x,y) \mid x \in \args(f_y)\} \big)^* \,,
\end{align*}
where $(-)^*$ denotes the reflexive and transitive closure of a relation. 

Intuitively, $(x,y) \in \mathit{args}(A, R)$ if and only if either $x=y$ or there is a sequence of equations $(z_0 = f_{z_0}), \dots, (z_k = f_{z_k})$ in $\E$, such that $x = z_0$, $y = z_k$, and $z_{i} \in \args(f_{ z_{i+1} })$ for all $0 \leq i < k$. Note that the oracle $\mathit{args}$ is a constant map, depending neither on the assignment nor on the given relation.

\begin{example} \label{ex:argOracle}
    Consider the system of equations from \autoref{ex:locality}. 
    Referring to \autoref{ANTICIPATED:ex:argOracle}, 
    for arbitrary $A$ and $R$,
    \begin{align*}
        \mathit{args}(A, R) &= \big( \{(y,x), (z,x), (y,y), (z,y) \} \big)^* 
        \\
        &= \{(x,x), (y,x), (z,x), (y,y), (z,y), (z,z), (w,w) \} \,.
    \end{align*}
    Although simple, the use of this oracle in combination with other oracles provides us with a surprisingly good dependency analysis. For example $(\mathit{max}_l \cap \mathit{max}_r \cap \mathit{args})(F_z(\bot), R) = \cc_{F_z(\bot)}$, meaning that this oracle is perfectly accurate on the assignment $F_z(\bot)$.
\end{example}

\begin{proposition}\label{args:oracle:sound}
    The oracle $\mathit{args}$ is sound.
\end{proposition}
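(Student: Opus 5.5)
We must show that $\cc_A \subseteq \mathit{args}(A,R)$ for every $A \in \A$ and every $R$. Soundness then holds trivially, since the hypothesis $\cc_A \subseteq R$ is not even needed. Using \autoref{flows:anti-monotonic}, we have $\cc_A \subseteq \flows_A$, so it suffices to prove the stronger inclusion $\flows_A \subseteq \mathit{args}(A,R)$. We do this by contraposition. Fix $(x,y) \notin \mathit{args}(A,R)$, i.e.\ $x \neq y$ and $y$ is not reachable from $x$ along the relation $z \to z'$ iff $z \in \args(f_{z'})$. Let
\[
Y = \{ v \in \V \mid (x,v) \notin \mathit{args}(A,R) \},
\]
the set of variables not reachable from $x$. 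Then $y \in Y$ and $x \notin Y$ (by reflexivity). Moreover, $Y$ is closed under arguments: if $v \in Y$ and $u \in \args(f_v)$, then $u \in Y$. Otherwise $x \to^* u \to v$ would make $v$ reachable.

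The key step is an invariant. For all $\pi,\pi' \in \V^*$, the assignments $B = F_{\pi\pi'}(A)$ and $B' = F_{\pi x \pi'}(A)$ agree on $Y$. We prove it by induction on the length of $\pi'$. In the base case, $B = F_\pi(A)$ and $B' = F_x F_\pi(A)$ differ at most on $x \notin Y$. For the inductive step, suppose $B|_Y = B'|_Y$ and apply $F_z$. If $z \notin Y$, nothing in $Y$ changes. If $z \in Y$, then $\args(f_z) \subseteq Y$, so $B$ and $B'$ agree on $\args(f_z)$. By the definition of $\args$ there is $g$ with $f_z(C) = g(C|_{\args(f_z)})$, hence $f_z(B) = f_z(B')$. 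So the new values of $z$ coincide, and the invariant is preserved.

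Taking the component $y \in Y$ of the invariant gives $F_{\pi\pi'}(A)(y) = F_{\pi x\pi'}(A)(y)$ for all $\pi,\pi'$. Hence $(x,y) \notin \flows_A$, and therefore $(x,y) \notin \cc_A$.

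The only delicate point is that $\args(f)$ is defined through the factorization $f(A)=g(A|_S)$ for $A \in \A$ only. The invariant is applied to $F_{\pi\pi'}(A)$ and $F_{\pi x\pi'}(A)$, both of which lie in $\A$ because $A \in \A$ and $\A$ is closed under partial updates. So the factorization applies exactly where it is needed, and this is the step to double-check.
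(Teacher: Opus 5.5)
Your proof is correct, but it takes a different route from the paper's.

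\textbf{The paper's route.} It argues directly. Take a witness $\pi$ of minimal length for $F_\pi(A)(y) \neq F_\pi F_x(A)(y)$. By \autoref{remark:finally:used} one may write $\pi = \pi' y$. The factorization of $f_y$ through $\args(f_y)$ then yields some $z \in \args(f_y)$ that already differs after $\pi'$. Strong induction on $|\pi|$ gives $(x,z) \in \mathit{args}(A,R)$, and transitivity gives $(x,y)$.

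\textbf{Your route.} You argue by contraposition. The set $Y$ of variables unreachable from $x$ is closed under arguments. Agreement on $Y$ between $F_{\pi\pi'}(A)$ and $F_{\pi x \pi'}(A)$ is preserved by every partial update. This is essentially the technique of Step~1 in the proof of \autoref{thm:almost:self-closed}.

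\textbf{Comparison.} Your version avoids the minimal-witness lemma and its length bookkeeping. It also yields $\flows_A \subseteq \mathit{args}(A,R)$ in one stroke. This is not a real gain in strength, though: since $\mathit{args}$ does not depend on $A$, the same inclusion follows from the paper's $\cc_A$-version via $\flows_A = \bigcup_{\pi} \cc_{F_\pi(A)}$. The paper's argument is more constructive: from a witness of influence it extracts an explicit chain of arguments leading from $x$ to $y$.

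Your caveat is right: the factorization defining $\args$ is only needed on assignments in $\A$, and your intermediate assignments lie in $\A$. The paper's proof relies on this in exactly the same way.
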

\begin{proof}%
Soundness of $\mathit{args}$ follows by showing that, for all 
$A \in \A$ and $R \subseteq \V \times \V$, if $F_\pi(A)(y) \neq F_\pi F_x(A)(y)$ for a
$\pi \in \V^*$ of minimal length, then $(x,y) \in \mathit{args}(A,R)$. 
We prove this implication by induction on $|\pi| \geq 0$.
\begin{trivlist}
    \item \textsc{Base Case.} $\pi$ is empty. Then $A(y) \neq F_x(A)(y)$, which implies $x = y$. By reflexive closure, $(x,y) \in \mathit{args}(A,R)$.
    \item \textsc{Inductive Step.} $\pi$ is non-empty. By \autoref{remark:finally:used}, we can assume $\pi = \pi'y$. Then our assumption becomes $f_y(F_{\pi'}(A)) \neq f_y(F_{\pi'} F_x(A))$, which in turn implies $F_{\pi'}(A) \neq F_{\pi'} F_x(A)$.
    
    Next we prove that there is $z \in \args(f_y)$ such that
    $F_{\pi'}(A)(z) \neq F_{\pi'} F_x(A)(z)$. By contradiction,
    assume that $F_{\pi'}(A)(z) = F_{\pi'} F_x(A)(z)$ for all $z \in \args(f_y)$, 
    that is, $F_{\pi'}(A)|_{\args(f_y)} = F_{\pi'} F_x(A)|_{\args(f_y)}$. 
    By definition of $\args(f_y)$, there exists $g \colon \D^{\args(f_y)} \to \D$ such that 
    \begin{align*}
        f_y(F_{\pi'}(A)) = g(F_{\pi'}(A)|_{\args(f_y)})
        &&\text{and}&&
        f_y(F_{\pi'} F_x(A)) = g(F_{\pi'} F_x(A)|_{\args(f_y)}) \,.
    \end{align*}
    As we assumed $F_{\pi'}(A)|_{\args(f_y)} = F_{\pi'} F_x(A)|_{\args(f_y)}$, from the above equalities we obtain that $f_y(F_{\pi'}(A)) = f_y(F_{\pi'} F_x(A))$, a contradiction.

    Now, pick $z \in \args(f_y)$ such that $F_{\pi'}(A)(z) \neq F_{\pi'} F_x(A)(z)$ ---the existence of such a variable has been just proven. 
    There is $\pi''$ of minimal length such that $F_{\pi''}(A)(z) \neq F_{\pi''} F_x(A)(z)$, and so $|\pi''| \leq |\pi'| = |\pi|-1$. %
    By inductive hypothesis on $\pi''$, we have $(x,z) \in \mathit{args}(A,R)$; moreover, $z \in \args(f_y)$ yields $(z,y) \in \mathit{args}(A,R)$, so we conclude that $(x,y) \in \mathit{args}(A,R)$.%
\end{trivlist}
\end{proof}

\subsubsection{Dependency and Triggering Oracles.}
The dependency analyses discussed so far have been informally based on the idea that updates to a variable may be triggered by updates to the variables it depends on. The definitions of the following oracles, referred to respectively as the \emph{dependency} and \emph{trigger} oracles, are based on this basic intuitive concept. 
\begin{align*}
    \mathit{dep}(A, R) &= \{ (x,y) \mid (z,x)\in R, \text{ for some } z\in\V \} \,, \\
    \mathit{trg}(A,R) &= \{ (x,y) \mid (z,x)\in R \text{ and } A(z) \neq f_z(A), \text{ for some } z\in\V \} \,. %
\end{align*}
Essentially, $(x,y) \in \mathit{dep}(A,R)$ captures the idea that $x$ may trigger the update of $y$ if the given dependency analysis $R$ indicates that $x$ may change its value due to the update of a third variable $z$. Similarly for $(x,y)\in \mathit{trg}(A,R)$, with the exception that the update on $z$ is tested against the assignment $A$.

\begin{proposition} \label{prop:trigdevsound}
    The oracles $\mathit{dep}$ and $\mathit{trg}$ are sound.
\end{proposition}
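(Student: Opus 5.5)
Since $\mathit{trg}(A,R) \subseteq \mathit{dep}(A,R)$ for all $A$ and $R$ (the defining condition of $\mathit{trg}$ strengthens that of $\mathit{dep}$), we have $\mathit{trg} \preceq \mathit{dep}$. By \autoref{prop:oracles:properties}(\ref{upclosure}) it is therefore enough to prove that $\mathit{trg}$ is sound. So fix $A \in \A$ and $R \subseteq \V \times \V$ with $\cc_A \subseteq R$, and take $(x,y) \in \cc_A$. We must exhibit a $z \in \V$ with $(z,x) \in R$ and $A(z) \neq f_z(A)$. The plan is to take $z = x$ itself.

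\emph{Step 1: $x$ is not stuck at $A$.} By definition of $\cc_A$ there is $\pi \in \V^*$ with $F_\pi(A)(y) \neq F_{x\pi}(A)(y) = F_\pi(F_x(A))(y)$. Suppose instead that $A(x) = f_x(A)$. Then $F_x(A) = A$, because $F_x$ changes only the $x$-component and sets it to $f_x(A) = A(x)$. Hence $F_{x\pi}(A) = F_\pi(A)$, which contradicts the choice of $\pi$. Therefore $A(x) \neq f_x(A)$.

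\emph{Step 2: $(x,x) \in R$.} Take the empty string $\e$ in the definition of $\cc_A$. Step~1 gives $F_\e(A)(x) = A(x) \neq f_x(A) = F_x(A)(x) = F_{x\e}(A)(x)$, so $(x,x) \in \cc_A$. The hypothesis $\cc_A \subseteq R$ then yields $(x,x) \in R$.

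Combining the two steps, $z = x$ satisfies both $(z,x) \in R$ and $A(z) \neq f_z(A)$, so $(x,y) \in \mathit{trg}(A,R)$. Thus $\cc_A \subseteq \mathit{trg}(A,R)$, so $\mathit{trg}$ is sound, and $\mathit{dep}$ is sound by upward closure. The only real content is the observation in Step~1 that a variable appearing on the left of a now-dependency must be unstable at $A$. This is the same fact recalled before \autoref{prop:sound:oracles:from:liftings}: $A(x)\neq f_x(A)$ iff $(x,x)\in\cc_A$.
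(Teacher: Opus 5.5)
Your proof is correct, but it takes a different route from the paper's.

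\textbf{The paper's route.} The paper argues contrapositively. It rewrites $\mathit{trg}(A,R)=\{(x,y)\mid (z,x)\in(\mathit{ns}\cap\mathit{i})(A,R)\text{ for some }z\}$. So if $(x,y)\notin\mathit{trg}(A,R)$, soundness of $\mathit{ns}\cap\mathit{i}$ gives that no pair $(z,x)$ lies in $\cc_A$. \autoref{th:nowLocalFix} then yields $A(x)=\mu F(x)$, hence $F_x(A)=A$ and $(x,y)\notin\cc_A$.

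\textbf{Your route.} You produce the explicit witness $z=x$, using only the definitions of $F_x$ and $\cc_A$; this is essentially \autoref{char:xxINflows:xxINnow}(\ref{char:xxINflows:xxINnow:item2}). In effect you show $\cc_A\subseteq\mathit{ns}(A,R)\subseteq\mathit{trg}(A,R)$ whenever $\cc_A\subseteq R$, so soundness is inherited from $\mathit{ns}$ directly. This is shorter and more elementary. It never invokes \autoref{th:nowLocalFix}, and so needs neither the Partial Kleene theorem nor Noetherianity.

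\textbf{What the paper's route buys.} It survives the passage to the local setting. For $\mathit{loctrg}$ (\autoref{loctrig:sound}) one only knows $(x,y)\in\flows_A$, and $x$ may be stuck at $A$ yet unsolved. For example, $(x,x)\in\flows_\bot$ in \autoref{ex:flow} although $\bot=F_x(\bot)$. There the witness $z=x$ fails whenever $x$ is visited. By contrast, the paper's key step, ``$x$ unsolved implies $(z,x)\in\cc_A$ for some $z$'', carries over with only a case split on whether that $z$ has been visited.

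The reduction of $\mathit{dep}$ to $\mathit{trg}$ via $\mathit{trg}\preceq\mathit{dep}$ and upward closure is the same in both.
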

\begin{proof}%
We start by proving soundness of $\mathit{trg}$. Let $A \in \A$
and $R \subseteq \V \times \V$ such that $\cc_A \subseteq R$.
We want to show that $\cc_A \subseteq \mathit{trg}(A,R)$. We will do that by proving the contra-variant membership implication. Let $(x,y) \notin \mathit{trg}(A,R)$. 
Note that 
\begin{equation}
    \mathit{trg}(A,R) = 
    \{ (x,y) \mid (z,x) \in (\mathit{ns} \cap \mathit{i})(A,R)
    , \text{ for some } z\in\V \} \,.
    \label{eq:trigalt}
\end{equation}
Therefore, for all $z \in \V$, $(z,x) \notin (\mathit{ns} \cap \mathit{i})(A,R)$. By soundness of the oracle $\mathit{ns} \cap \mathit{i}$, we further have that $(z,x) \notin \cc_A$. By \autoref{th:nowLocalFix}, we get $A(x) = \mu F(x)$, thus $A = F_x(A)$. By definition of $\cc_A$, we conclude that $(x,y) \notin \cc_A$.

The soundness of $\mathit{dep}$ follows from the above and \autoref{prop:oracles:properties}\eqref{upclosure}, after noticing that $\mathit{trg} \preceq \mathit{dep}$.
\end{proof}

We observe that these oracles are better used in combination with other oracles as illustrated in~\autoref{ex:dep&trgOracles}.

\begin{example} \label{ex:dep&trgOracles}
    Consider the system of equations from \autoref{ex:locality} and assume to use as $R = \mathit{args}(\bot, \V\times\V)$ as starting dependency analysis (\cf~\autoref{ex:argOracle}). We examine two oracles: $O_1 = \mathit{trg} \circ \mathit{args}$ and $O_2 = O_1 \cap \mathit{args}$.
    Clearly, the second oracle is is more accurate than the first one.
    \begin{align*}
        \begin{aligned}
            O_1(\bot,R) &= \V \times \V \\
            O_1(F_z(\bot),R) &= \{ x,w \} \times \V \\
            O_1(F_x F_z(\bot),R) &= \{ (w,w) \}
        \end{aligned}
        &&
        \begin{aligned}
            O_2(\bot,R) &= \mathit{args}(\bot, \V\times\V) \\
            O_2(F_z(\bot),R) &= \{ (x,x), (w,w) \} \\
            O_2(F_x F_z(\bot),R) &= \{ (w,w) \} \,.
        \end{aligned}
    \end{align*}
    Observe that these oracles perform very well on this particular system of equations. Indeed, from the second assignment, $O_2$ is as precise as the now relation:
    \begin{align*}
        \cc_{ F_z(\bot) } &= O_2(F_z(\bot),R) \subsetneq O_1(F_z(\bot),R)\\
        \cc_{ F_x F_z(\bot) } &= O_2( F_x F_z(\bot),R) = O_1( F_x F_z(\bot),R) \,.
    \end{align*}
    This demonstrates that by leveraging the specific characteristics of the equation system, one can achieve effective dependency analysis even with computationally inexpensive oracles tailored to the system.
\end{example}

\subsubsection{Deflationary Oracles.}
The next oracles exploit a basic order-theoretic property of the equation system: if $f_x$ is deflationary (\ie, $f_x(A) \sqsubseteq A(x)$ for all $A \in \D^\V$), then the least solution for $x$ must be the bottom element in $\D$.

Building on this observation, we define the \emph{deflationary oracles} with left and
right variant, as follows:
\begin{align*}
    \mathit{dfl}_l(A,R) = \{ (x,y) \mid \text{$f_x$ not deflationary} \} \,, \\
    \mathit{dfl}_r(A,R) = \{ (x,y) \mid \text{$f_y$ not deflationary} \}.
\end{align*}

\begin{proposition}\label{deflationary:oracles:soundness}
    The oracles $\mathit{dfl}_l$ and $\mathit{dfl}_r$ are sound.
\end{proposition}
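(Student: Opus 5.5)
The idea is that a variable whose equation is deflationary can never move away from $\bot$ along updates from $\bot$. I will first prove this as a small lemma, and then show soundness of $\mathit{dfl}_l$ and $\mathit{dfl}_r$ contrapositively: if a pair is excluded by the oracle, it is not in $\cc_A$. Since the conclusion $\cc_A \subseteq \mathit{dfl}(A,R)$ will hold outright, the hypothesis $\cc_A \subseteq R$ is not even needed.

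\emph{Lemma.} If $f_x$ is deflationary, then $B(x) = \bot$ for every $B \in \A$, and in particular $\mu F(x) = \bot$.

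The proof is by induction on $\pi$ for $B = F_\pi(\bot)$. The base case is immediate. For the step, take $F_{\pi z}(\bot) = F_z(F_\pi(\bot))$. If $z \neq x$, the value at $x$ is unchanged. If $z = x$, the new value is $f_x(F_\pi(\bot)) \sqsubseteq F_\pi(\bot)(x) = \bot$, hence it equals $\bot$. Since $\mu F \in \A$ (it is the top of $\A$ by \autoref{th:partialKleene}), we get $\mu F(x) = \bot$.

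\emph{Soundness of $\mathit{dfl}_r$.} Let $(x,y) \notin \mathit{dfl}_r(A,R)$, so $f_y$ is deflationary. By the lemma, every assignment in $\A$ gives $y$ the value $\bot$. Both $F_\pi(A)$ and $F_{x\pi}(A)$ lie in $\A$, so they agree at $y$, and therefore $(x,y) \notin \cc_A$.

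\emph{Soundness of $\mathit{dfl}_l$.} Let $(x,y) \notin \mathit{dfl}_l(A,R)$, so $f_x$ is deflationary. The lemma gives $A(x) = \bot = \mu F(x)$. Two routes then work. By \autoref{th:flowsLocalFix}, no pair $(x,y')$ lies in $\flows_A \supseteq \cc_A$. More directly, $f_x(A) \sqsubseteq A(x) = \bot$ gives $F_x(A) = A$, so $F_{x\pi}(A) = F_\pi(A)$ for every $\pi$, and again $(x,y) \notin \cc_A$.

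The only delicate point is the lemma. Deflationarity must be applied to the intermediate assignments $F_\pi(\bot)$, which is allowed because the definition quantifies over all of $\D^\V$. We must also make sure the relevant assignments stay in $\A$, which holds because $\A$ is closed under partial updates.
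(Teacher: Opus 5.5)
Your proof is correct and follows essentially the paper's approach. In both, a deflationary right-hand side pins its variable at $\bot$ throughout $\A$ (so $\mu F(x)=\bot$), and the excluded pairs are then shown, contrapositively, not to lie in $\cc_A$. The differences are cosmetic: you obtain the pinning by a direct induction on update sequences rather than by combining deflationarity with the extensiveness of $F_x$ on $\A$ from \autoref{th:partialKleene}, and you spell out the $\mathit{dfl}_r$ case, which the paper only calls ``similar''.
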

\begin{proof}%
We start by proving that the relation $\mathit{dfl}_l(A,R)$ satisfies the following:
\[
\text{if $(x,y) \notin \mathit{dfl}_l(A,R)$, then $\mu F(x) = \bot_\D$.}
\]
Let $(x,y) \notin \mathit{dfl}_l(A,R)$, and consider an arbitrary $B \in \A$. 
By~\autoref{th:partialKleene}\eqref{Kleene:intro:item1}, we have that $B \sqsubseteq F_x(B)$, so $(x,y) \notin \mathit{dfl}_l(A,R)$ implies that $B(x) = f_x(B)$, \ie, $B = F_x(B)$. In other words, every $B \in \A$ is a fixed point of $F_x$, and then one can easily verify that $\mu F(x) = \bot_\D$. 

To show that $\cc_A \subseteq \mathit{dfl}_l(A,R)$, let $(x,y) \in \cc_A$. Then $A \neq F_x(A)$ by the definition of $\cc_A$, and so $A \sqsubset F_x(A) \sqsubseteq \mu F$ by~\autoref{th:partialKleene}\eqref{Kleene:intro:item1}. In particular, $\bot \sqsubset \mu F(x)$, and then $(x,y) \in \mathit{dfl}_l(A,R)$ by the first part of the proof. 

The soundness of $\mathit{dfl}_r$ follows similarly, using the fact that $(x,y) \notin \mathit{dfl}_r(A,R)$ implies $\mu F(y) = \bot_\D$.
\end{proof}

\begin{example}
Consider the system of equations in \autoref{ex:locality}. The function $f_y$ is clearly deflationary, as $f_y(A) = A(y) \wedge A(z) \sqsubseteq A(y)$. Consequently, the oracle $\mathit{defl}_l \cap \mathit{defl}_r$ eliminates all occurrences of the variable $y$ from every dependency analysis, including the trivial one $\V \times \V$.
As a result, running $\proc{GlobalK}$ with $y$ as the target variable would terminate immediately, even before the iteration begins (\cf~concluding remarks of \autoref{ex:locality}).
\end{example}

\begin{remark}
We note that the definition of the deflationary oracles assumes the user is aware of which functions are deflationary in advance, as verifying this property directly would be too costly.

Furthermore, while some functions may not be inherently deflationary, they can exhibit deflationary behavior when their definitions are unfolded. For instance,
in the system of equations below
\begin{align*}
    x = y \wedge z \,, &&
    y = x \wedge w \,, &&
    w = \dots
\end{align*}
$f_x$ is not deflationary, but $f_x \circ F_y$ is. Our deflationary oracles do not cover these cases.
\end{remark}

\subsection{Additional Extensions}
\label{sec:additional:extensions}

For an extension $\lift \colon \A \times \wp(\V \times \V) \to \wp(\V\times \T)$, and $(A,R) \in \A \times \wp(\V \times \V)$, it is convenient to describe the relation $\lift(A,R)$ through its preimages, \ie, for every term $t\in\T$ we define the set $\lift(A,R)(t) = \{ x \in \V \mid (x,t) \in \lift(A,R) \}$ of the dependencies of $t$. This will be formally carried out by (equivalently) considering the extensions as functions $\lift \colon \A \times \wp(\V \times \V) \to (\T \to \wp(\V))$.

\subsubsection{Reachable Variables Extension.}
Next we revisit the reachable arguments oracle $\mathit{args}$ from the point of view of extensions. For a term $t$, the main difference will be to use variables $\var(t)$, a syntactic concept, instead of arguments $\args(t)$, a semantic concept%
\footnote{For example, for a variable $x$, the Boolean term $t = x \vee \mathbf{tt}$ has $\args(t) = \emptyset$ and $\var(t) = \{x\}$ (see also \autoref{ANTICIPATED:ex:argOracle}). Moreover, $\args(t) \subseteq \var(t)$ holds for every term~$t$.}.

The proposed extension, called \emph{variable extension}, is defined as
\begin{equation*}
    \mathit{vars}(A,R)(t) =
    \{ x \mid (x,z) \in R \text{ for some } z\in\var(t) \}\,.
\end{equation*}

\begin{proposition} \label{prop:vars:ext:sound}
The extension $\mathit{vars}$ is sound.
\end{proposition}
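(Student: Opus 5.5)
We must show: if $\cc_A \subseteq R$, then $\hc_A \subseteq \mathit{vars}(A,R)$. So we take $(x,t)\in\hc_A$ and aim to produce a variable $z\in\var(t)$ with $(x,z)\in\cc_A$; then $(x,z)\in R$ and hence $x\in\mathit{vars}(A,R)(t)$. The argument mirrors the proof of \autoref{args:oracle:sound}, using the syntactic fact that a term depends only on the values of its variables: if $B|_{\var(t)} = B'|_{\var(t)}$ then $t(B)=t(B')$. This follows by a routine induction on the structure of $t$, since the interpretation of $t$ is the homomorphic extension of the assignment.

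First, from $(x,t)\in\hc_A$ we fix $\pi'\in\V^*$ with $t(F_{\pi'}(A)) \neq t(F_{x\pi'}(A))$. Second, by the contrapositive of the fact above, the assignments $F_{\pi'}(A)$ and $F_{x\pi'}(A)$ must disagree on some $z\in\var(t)$, that is, $F_{\pi'}(A)(z)\neq F_{x\pi'}(A)(z)$. By the definition of $\cc_A$ (\autoref{def:flow}) with witness $\pi'$, this says exactly that $(x,z)\in\cc_A$. Third, the hypothesis $\cc_A\subseteq R$ gives $(x,z)\in R$ with $z\in\var(t)$, so $(x,t)\in\mathit{vars}(A,R)$.

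We should check the notational convention carefully: in $\hc_A$ the string $x\pi'$ means $x$ is updated first, consistent with $\cc_A$, so the witnesses align directly. There is no real obstacle. The only step needing any care is the structural-induction lemma that terms depend only on $\var(t)$. It holds trivially for variables and constants, and for $\sigma(t_1,\dots,t_n)$ it holds because each $t_i$ has $\var(t_i)\subseteq\var(t)$.
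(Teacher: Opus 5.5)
Your proof is correct and is essentially the paper's argument. The paper runs the structural induction directly on the implication $(x,t)\in\hc_A \Rightarrow x\in\mathit{vars}(A,R)(t)$, at each step passing to a subterm $t_i$ on which the same witness $\pi'$ still separates $F_{\pi'}(A)$ and $F_{x\pi'}(A)$. You instead package that induction as the standalone fact that $t(B)$ depends only on $B|_{\var(t)}$, which together with your witness-alignment observation amounts to the paper's \autoref{property:of:hc} followed by one application of $\cc_A\subseteq R$.
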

\begin{proof}%
Let $A \in\A$ and $R \subseteq \V\times\V$ such that 
$\cc_A \subseteq R$.
We prove by induction on the structure of $t\in\T$ that $(x,t) \in \hc_A$ implies $x \in \mathit{vars}(A,R)(t)$.
\begin{trivlist}
\item \textsc{Base Case} ($t = y \in \V$).
Assume $(x,y) \in \hc_A$. Then we have
\begin{align*}
(x,y) \in \hc_A
    &\iff (x,y) \in \cc_A \tag{def. $\hc_A$} \\
    &\implies (x,y) \in R \tag{$\cc_A \subseteq R$} \\
    &\iff x \in \mathit{vars}(A,R)(y) \tag{def. $\mathit{vars}$}
\end{align*}
\item \textsc{Inductive Step} ($t = \sigma(t_1,\dots,t_n)$). 
Assume $(x, t) \in \hc_A$. Then we have
\begin{align*}
    (x, t) \in \hc_A &\iff \exists \pi'.\, t(F_{\pi'}(A)) \neq t(F_{x\pi'}(A)) \tag{def. $\hc_A$}\\
    &\implies \exists i \in \{1\dots n \} .\, \exists \pi'.\, t_i(F_{\pi'}(A)) \neq t_i(F_{x\pi'}(A)) \tag{$\sigma$ function} \\
    &\iff \exists i \in \{1\dots n \} .\, (x, t_i) \in \hc_A \tag{def. $\hc_A$} \\
    &\implies \exists i \in \{1\dots n \} .\, x\in \mathit{vars}(A,R)(t_i) 
    \tag{ind. hp} \\
     &\iff \exists i \in \{1\dots n \} .\, \exists y \in \var(t_i) .\, (x, y) \in R 
    \tag{def. $\mathit{vars}$} \\
    &\iff \exists y \in \var(t) .\, (x, y) \in R 
    \tag{$\var(t) = \bigcup_i \var(t_i)$} \\
    &\iff x\in \mathit{vars}(A,R)(t) 
    \tag{def. $\mathit{vars}$}
\end{align*}
\end{trivlist}  
\end{proof}

\subsubsection{Maximality Extension.}
We revisit maximality oracles through the lens of extensions, now allowing the evaluation of sub-terms in right-hand sides of the equations. This yields a more precise dependency analysis. This is achieved naturally through the following extension, which we call \emph{maximality extension}, defined inductively on terms as follows
\begin{align*}
\mathit{maxExt}(A,R)(x) 
    &= \{ y \mid \text{$A(x)$ and $A(y)$ not maximal}\} \,, \\
\mathit{maxExt}(A,R)(\sigma(t_1,\twodots,t_n))
    &= \left\{ y \;\left|\;
    \begin{array}{c}
         \text{$\sigma(t_1,\twodots,t_n)(A)$ not maximal}\\
         \text{and }\exists i.~
         y \in \mathit{maxExt}(A,R)(t_i)
    \end{array}
    \right.\right\} \,.
\end{align*}

Next we formally establish what was announced: 
the oracle induced by $\mathit{maxExt}$ is more accurate than each of the maximality oracles, and in fact strictly more accurate (see \autoref{ex:oracle:maxExt:better:than:maxl:and:R}).

\begin{proposition}\label{maximality:extension:sound}
The extension $\mathit{maxExt}$ is sound, and $\ora[\mathit{maxExt}] \preceq \mathit{max}_l \cap \mathit{max}_r$.
\end{proposition}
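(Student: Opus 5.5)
Soundness will be proved by structural induction on $t$, in the style of the proof of \autoref{prop:vars:ext:sound}. Fix $A \in \A$ and $R$ with $\cc_A \subseteq R$. The claim is that $(x,t) \in \hc_A$ implies $x \in \mathit{maxExt}(A,R)(t)$; the relation $R$ itself will not actually be used. The key auxiliary fact is:
\begin{center}
if $t(A)$ is maximal, then $(x,t) \notin \hc_A$ for every $x$.
\end{center}
To see this, note that every $B = F_{\pi'}(A)$ and $F_{x\pi'}(A)$ lies in $\A$ above $A$ by \autoref{th:partialKleene}\eqref{Kleene:intro:item1}. Monotonicity of $t$ then gives $t(A) \sqsubseteq t(B)$, and maximality forces $t(B) = t(A)$. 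So both evaluations agree and no dependency exists.

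\emph{Base case} $t = y$. If $(x,y) \in \hc_A = \cc_A$, then $A(y)$ is not maximal, by the fact above. We also need $A(x)$ not maximal. Suppose $A(x)$ were maximal. Since $A \sqsubseteq F_x(A) \sqsubseteq \mu F$, we get $A(x) = f_x(A)$, hence $F_x(A) = A$. Then $F_{x\pi'}(A) = F_{\pi'}(A)$, so $(x,y) \notin \cc_A$, a contradiction. Therefore $x \in \mathit{maxExt}(A,R)(y)$.

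\emph{Inductive step} $t = \sigma(t_1,\dots,t_n)$. Suppose $(x,t) \in \hc_A$. By the fact above, $t(A)$ is not maximal. Exactly as in \autoref{prop:vars:ext:sound}, some $t_i$ must differ on the two assignments, so $(x,t_i) \in \hc_A$. The induction hypothesis gives $x \in \mathit{maxExt}(A,R)(t_i)$, and hence $x \in \mathit{maxExt}(A,R)(t)$.

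For the accuracy claim, unfold $\ora[\mathit{maxExt}]$. The diagonal part $\{(x,x) \mid A(x) \neq t_x(A)\}$ appears verbatim in both $\mathit{max}_l$ and $\mathit{max}_r$. Now take an off-diagonal pair $(x,y)$ with $x \neq y$ and $x \in \mathit{maxExt}(A,R)(t_y)$.
\begin{itemize}
\item For $\mathit{max}_r$ we need $f_y(A) = t_y(A)$ not maximal. In both clauses of the definition, membership requires the evaluation of the term at $A$ to be non-maximal (for a variable term $t_y = z$ this is the condition that $A(z)$ is not maximal).
\item For $\mathit{max}_l$ we need $A(x)$ not maximal. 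A short induction shows that $x \in \mathit{maxExt}(A,R)(t)$ implies $x \in \mathit{maxExt}(A,R)(z)$ for some variable $z \in \var(t)$. The base clause then gives $A(x)$ not maximal.
\end{itemize}

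I expect the main obstacle to be the base case, specifically justifying that the left variable $A(x)$ is not maximal. This needs the observation that a maximal value at $A$ is already a fixed value, which relies on $\mu F$ being the top element of $\A$ and on extensiveness (\autoref{th:partialKleene}).
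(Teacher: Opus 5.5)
Your proof is correct and follows essentially the same route as the paper. It uses structural induction on $t$, splitting on whether $t(A)$ is maximal (in the base case, on whether $A(x)$ or $A(y)$ is maximal), and then unfolds $\ora[\mathit{maxExt}]$ in the same way for the accuracy claim. The differences are cosmetic: you argue directly rather than by contraposition, and you derive ``maximal value implies no dependency'' straight from extensiveness and monotonicity, whereas the paper goes through $t(A)=t(\mu F)$ and Theorem~\ref{improved:lemmaGR}.
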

\begin{proof}%
We first prove that the extension $\mathit{maxExt}$ is sound.

Let $A\in\A$ and $R \subseteq \V\times\V$. It suffices to verify that $y\notin\mathit{maxExt}(A,R)(t)$ implies $(y,t)\notin\hc_A$, for all $y\in\V$ and $t\in\T$.
We proceed by induction on $t$.
\begin{trivlist}
\item \textsc{Base Case} ($t = x\in\V$).
Assume $y\notin\mathit{maxExt}(A,R)(x)$. This can happen in the following two cases:
\begin{itemize}
\item $A(x)$ is maximal. Then by \autoref{th:partialKleene} $A(x) = \mu F(x)$, which in turn gives $(y,x)\notin\cc_A$ by \autoref{th:nowLocalFix}.

\item $A(y)$ maximal. Then by \autoref{th:partialKleene} 
$A(y) = \mu F (y)$ which implies $(y,x)\notin\cc_A$ by definition of $\cc_A$.
\end{itemize}
In both cases we obtained that $(y,x)\notin\cc_A$. As $\hc_A$ coincides with $\cc_A$ on variables, we have $(y,x)\notin\hc_A$.

\item \textsc{Inductive Step} ($t = \sigma(t_1,\twodots,t_n)$). Let $y\notin\mathit{maxExt}(A,R)(t)$. %
By definition of $\mathit{maxExt}$ extension, only two cases are possible:
\begin{enumerate}
\item\label{item1:sound:max:ext} $t(A)$ is maximal. Then, as $A\sqsubseteq\mu F$ (\autoref{th:partialKleene}) and since $t$ has monotonic interpretation, by maximality we get $t(A) = t(\mu F)$. By \autoref{improved:lemmaGR}, we deduce that $(y,t) \notin \hc_A$.

\item %

For all $i \in \{1,\dots,n\}$ we have $y\notin\mathit{maxExt}(A,R)(t_i)$. Then for all $i \in \{1,\dots,n\}$ we deduce $(y,t_i) \notin\hc_A$ by inductive hypothesis. As
\begin{align*}
    (y,t)\in\hc_A 
    &\iff \exists \pi. t(F_\pi(A)) \neq t(F_{y\pi}(A)) \tag{def.~$\hc_A$}\\
    &\implies \exists i \in \{1,\dots,n\}. 
        \exists \pi. t_i(F_\pi(A)) \neq t_i(F_{y\pi}(A)) 
        \tag{$\sigma$ function}\\
    &\iff \exists i\in \{1,\dots,n\}. (y,t_i)\in\hc_A
     \tag{def.~$\hc_A$}
\end{align*}
it must be that $(y,t)\notin\hc_A$.%
\end{enumerate}
\end{trivlist}

To conclude, we show that $\ora[\mathit{maxExt}] \preceq \mathit{max}_l \cap \mathit{max}_r$.

Fix $A \in \A$ and $R \subseteq \V \times \V$, and let $(x,y) \in \ora[\mathit{maxExt}](A,R)$.
If $x=y$, then $A(x) \neq t_x(A)$, and so $(x,x) \in \mathit{max}_l \cap \mathit{max}_r$. Now assume $x\neq y$. As $x \in \mathit{maxExt}(A,R)(t_y)$, the definition of $\mathit{maxExt}$ implies that:
\begin{itemize}
\item $t_y(A)$ is not maximal, and so $(x,y) \in \mathit{max}_r(A,R)$;

\item $A(x)$ is not maximal (this is easily verified by induction on $t_y$), and so $(x,y) \in \mathit{max}_l(A,R)$.%
\end{itemize}
\end{proof}

\begin{example}\label{ex:oracle:maxExt:better:than:maxl:and:R}
We compare the maximality oracles %
with the oracle induced by the $\mathit{maxExt}$ extension. Consider the following system of equations on the Boolean lattice $\{\mathbf{ff},\mathbf{tt}\}$, with $\wedge$ and $\vee$
interpreted as expected:
\begin{align*}
    x = y \wedge (z \vee w) 
    &&
    y = \mathbf{ff}
    &&
    z = \mathbf{tt}
    &&
    w = \mathbf{tt}
\end{align*}
For the assignment $A = (x\colon \mathbf{ff}; y\colon \mathbf{ff}; z\colon \mathbf{tt}; w\colon \mathbf{ff})$
and a preliminary dependency analysis $R = \mathit{args}(A,\V\times\V) = \{ (x,x),(y,x),(z,x),(w,x),(y,y),(w,w)\}$ we obtain the following refined analysis
\begin{align*}
    (\mathit{max}\cap\mathit{i})(A,R) 
    &= \{(x,x),(y,x),(w,x),(y,y),(w,w)\} \\
    (\mathit{max}_l \cap \mathit{max}_r \cap\mathit{i})(A,R) 
    &= \{(y,x),(w,x),(w,w)\} \\
    \ora[\mathit{maxExt}\cap\mathit{i}](A,R) 
    &= \{(y,x),(w,w)\}.
\end{align*}
As the sub-term $z\vee w$ of $t_x$ evaluates to $(z\vee w)(A) = \mathbf{tt}$, the maximal element, both $z$ and $w$ have no influence on $x$ after reaching the assignment $A$.
However, neither $\mathit{max}$ nor $\mathit{max}_l \cap \mathit{max}_r$ could fully spot this independence.
\end{example}

We conclude by observing that, since extensions are only applicable to algebraic systems of equations, the maximality oracles introduced before %
remain relevant.

\subsection{Additional Local Oracles} \label{sec:more:Local:Oracles}

The local oracles presented here are obtained as the local variants of the oracles from Section~\ref{sec:moreOracles}.

\subsubsection{Local Maximality Oracles.}
We present the local variants of the maximality oracles discussed in the main text. 

We start by recalling definition of the local variant of $\mathit{max}_r$:
\begin{align*}
\mathit{locmax}_r(V, A, R) = {}
&\big( \V \times V^c \big) \cup 
\{ (x,y) \mid y \in V, \ x \neq y, \text{ $f_y(A)$ not maximal} \}\cup {} 
\\
&\{ (x,x) \mid x \in V, \big( A(x) \neq f_x(A) \text{ or } f_x(A) \text{ not maximal} \big) \}  \,.
\end{align*}

\begin{proof}[of \autoref{locmaxr:sound}]
Fix $A \in \A$ and $R \subseteq \V \times \V$ such that $\flows_A \subseteq R$. Let $(x,y) \in \flows_A$, for which we verify that $(x,y) \in \mathit{locmax}_r(V,A, R)$.

If $y \notin V$, then $(x,y) \in \mathit{locmax}_r(V,A, R)$, so assume $y \in V$. 
From $(x,y) \in \flows_A$ and Proposition~\ref{flows=unionOFfutureNows}, there exists $\pi\in \V^*$ %
such that $(x,y) \in \cc_{F_\pi(A)}$. 

If $x \neq y$, then Proposition~\ref{prop:pairsINflows}\eqref{prop:pairsINflows:neq} yields $f_y ( F_{\pi'}( F_\pi(A) ) ) \neq f_y ( F_{\pi'} F_x( F_\pi(A) ) )$ for some $\pi'\in \V^*$. Then, \autoref{th:partialKleene}\eqref{Kleene:intro:item1} gives
\[
f_y ( A ) \sqsubseteq f_y ( F_{\pi'}( F_\pi(A) ) ) \sqsubset f_y ( F_{\pi'} F_x( F_\pi(A) ) ),
\]
so $f_y ( A )$ is not maximal.

Now consider the case when $x = y \in V$. If $A(x) \neq f_x(A)$, then $(x,x) \in \mathit{locmax}_r(V,A, R)$. Finally, assume that $A(x) = f_x(A)$.
As $(x,x) \in \cc_{F_\pi(A)}$, Proposition~\ref{prop:pairsINflows}\eqref{prop:pairsINflows:=} implies that $F_\pi(A)(x) \neq f_x ( F_\pi(A) )$. Then
\[
f_x ( A ) = A(x) \sqsubseteq F_\pi(A)(x) \sqsubset f_x ( F_\pi(A) ),
\]
so $f_x ( A )$ is not maximal, and then $(x,x) \in \mathit{locmax}_r(V,A, R)$.
\end{proof}

A similar argument gives the following sound local variant of $\mathit{max}_l$:
\begin{align*}
\mathit{locmax}_l(V, A, R) = {}
&\big( \V \times V^c \big) \cup \{ (x,x) \mid x \in V \} \cup {} \\
&\{ (x,y) \mid x \neq y \text{ and $A(x)$ not maximal} \} \,.
\end{align*}

\subsubsection{Local Deflationary Oracles.}
The deflationary oracles $\mathit{dfl}_l$, $\mathit{dfl}_r$ and their intersection  $\mathit{dfl} = \mathit{dfl}_l \cap \mathit{dfl}_r$ require to test whether the right-hand sides of the equations are deflationary functions. As this requires knowledge of the equations, it is relevant to offer their local variants as follows:

\begin{align*}
\mathit{locdfl}_l(V,A,R) &= \big( V^c \times \V \big) \cup \{ (x,y) \mid x \in V \text{ and $f_x$ not deflationary} \} ,
\\
\mathit{locdfl}_r(V,A,R) &= \big( \V \times V^c \big) \cup \{ (x,y) \mid y \in V \text{ and $f_y$ not deflationary} \} ,
\\
\mathit{locdfl}(V,A,R) &= \big( V^c \cup \{ x \in V \mid \text{$f_x$ not deflationary} \} \big)^2 .
\end{align*} 

\begin{proposition}%
The local oracles $\mathit{locdfl}_l$, $\mathit{locdfl}_r$, and $\mathit{locdfl}$ are sound.
\end{proposition}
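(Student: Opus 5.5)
Fix $V \subseteq \V$, $A \in \A$ and $R$ with $\flows_A \subseteq R$, and let $(x,y) \in \flows_A$. I will show, by contraposition, that $(x,y)$ belongs to each of the three local oracles. The whole argument rests on one key fact, proved in the same way as in the proof of \autoref{deflationary:oracles:soundness}: if $f_z$ is deflationary, then $\mu F(z) = \bot_\D$. Indeed, for every $B \in \A$, \autoref{th:partialKleene}\eqref{Kleene:intro:item1} gives $B(z) \sqsubseteq f_z(B)$, and deflationarity gives $f_z(B) \sqsubseteq B(z)$. Hence $B = F_z(B)$ for all $B \in \A$. Taking the $\pi'$ with $F_{\pi'}(\bot) = \mu F$, each update leaves $z$ at $\bot$, so $\mu F(z) = \bot$. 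Since $\bot \sqsubseteq A(z) \sqsubseteq \mu F(z)$, this yields $A(z) = \mu F(z)$.

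Next I apply \autoref{th:flowsLocalFix} to the solved variable $z$. This gives $(z,y') \notin \flows_A$ for every $y'$, and $(x',z) \notin \flows_A$ for every $x'$. In other words, a variable with a deflationary equation appears in no pair of $\flows_A$, on either side.

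Now the case analysis. For $\mathit{locdfl}_l$: if $x \in V^c$, the pair $(x,y)$ lies in $V^c \times \V$. Otherwise $x \in V$, and $f_x$ cannot be deflationary, since otherwise $(x,y) \notin \flows_A$; so $(x,y)$ lies in the second component. The case of $\mathit{locdfl}_r$ is symmetric, splitting on whether $y \in V$ and using the other half of \autoref{th:flowsLocalFix}. For $\mathit{locdfl}$, I must show that $x$ and $y$ both belong to $V^c \cup \{ z \in V \mid f_z \text{ not deflationary}\}$. Each of them is either unvisited, or visited with a non-deflationary equation, by the two halves of the key fact.

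Alternatively, I can note that $\mathit{locdfl} = \mathit{locdfl}_l \cap \mathit{locdfl}_r$. This holds because a pair lies in the intersection exactly when $x$ is unvisited or non-deflationary, and likewise $y$. Soundness of $\mathit{locdfl}$ then follows from closure under intersection, which the paper states holds for local oracles as in \autoref{lem:closureProp}.

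The only mildly delicate step is the key fact. There the appeal to \autoref{th:partialKleene} must be over all of $\A$, not only over $A$, so that the bottom value of $z$ persists up to $\mu F$. No locality issue arises, because the oracle only inspects $f_x$ for visited $x$, and the argument never needs to inspect unvisited equations.
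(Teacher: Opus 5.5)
Your proof is correct and follows essentially the paper's route. The paper only says that $\mathit{locdfl}_l$ and $\mathit{locdfl}_r$ are sound by a straightforward modification of the proof of Proposition~\ref{deflationary:oracles:soundness} (a deflationary $f_z$ forces $\mu F(z)=\bot_\D$), which your key fact combined with \autoref{th:flowsLocalFix} spells out, and it obtains $\mathit{locdfl}$ exactly as in your alternative, from $\mathit{locdfl}=\mathit{locdfl}_l\cap\mathit{locdfl}_r$ and closure under intersection.
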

\begin{proof}%
The local oracles $\mathit{locdfl}_l$ and $\mathit{locdfl}_r$ are sound by a straightforward modification of the proof of Proposition~\ref{deflationary:oracles:soundness}. 
As $\mathit{locdfl} = \mathit{locdfl}_l \cap \mathit{locdfl}_r$, also $\mathit{locdfl}$ is sound by \autoref{prop:oracles:properties}(\ref{operation-closure}). 
\end{proof}

\subsubsection{Local Reachable Arguments Oracle.}
Although we could define a local variant for
the oracle $\mathit{args}$, such an oracle would not be meaningful in practice. This is not a limitation of our framework but rather a consequence of the fact that, by its nature, the local algorithm already governs and restricts the exploration of the system. Hence, a further restriction to the reachable arguments would be redundant.

\subsubsection{Local Triggering Oracle.}
The following is the proposed local variant of $\mathit{trg}$
\begin{align*}
\mathit{loctrg}(V,A,R)&= \{ (x,y) \mid (z,x)\in R \text{ and } A(z) \neq f_z(A), \text{ for some } z\in V \} \ \cup \\
    &\phantom{{}={}} \{ (x,y) \mid (z,x)\in R \text{ for some } z\in \V \setminus V \} .
\end{align*}

\begin{proposition}\label{loctrig:sound}
The local oracle $\mathit{loctrg}$ is sound.
\end{proposition}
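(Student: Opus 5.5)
We have to show: if $\flows_A \subseteq R$, then $\flows_A \subseteq \mathit{loctrg}(V,A,R)$. We argue contrapositively, as in the proof of \autoref{prop:trigdevsound}, but using the flow relation and \autoref{th:flowsLocalFix} in place of the now relation.

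Fix $V$, $A$ and $R$ with $\flows_A \subseteq R$, and let $(x,y) \notin \mathit{loctrg}(V,A,R)$. Unfolding the definition, for every $z \in \V$:
\begin{itemize}
\item if $z \in V$, then $(z,x) \notin R$ or $A(z) = f_z(A)$;
\item if $z \notin V$, then $(z,x) \notin R$.
\end{itemize}

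The goal is to show that $x$ is already solved, i.e.\ $A(x) = \mu F(x)$. By \autoref{th:flowsLocalFix} it suffices to show that $(z,x) \notin \flows_A$ for all $z$. Suppose instead that $(z,x) \in \flows_A$. Then $(z,x) \in R$, so the only remaining possibility is $z \in V$ with $A(z) = f_z(A)$. So we must exclude the case where $(z,x) \in \flows_A$ holds for some $z$ that is currently stuck, i.e.\ $A = F_z(A)$.

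This is the main obstacle. Unlike the now relation, a stuck $z$ can still flow to $x$. For instance, in \autoref{ex:now}, $(x,x) \in \flows_\bot$ although $\bot = F_x(\bot)$. The plan is to handle it through the set $Z = \{ z \mid (z,x) \in \flows_A \}$, which we may assume nonempty. Every $z \in Z$ lies in $R$-relation to $x$, and hence, by the hypothesis on $(x,y)$, satisfies $A(z) = f_z(A)$. If some $z \in Z$ were not stuck, the contradiction would be immediate. I expect to argue that $Z$ must contain a non-stuck variable: if every variable that flows into $x$ is stuck at $A$, then no sequence of updates can ever change the value of $x$. I would try to prove this by taking the shortest $\pi\pi'$ witnessing some flow into $x$, and locating the first update in it that changes the assignment. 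That first changing update is at a variable $w$ with $A(w) \neq f_w(A)$, and by a minimality argument in the style of \autoref{args:oracle:sound} (together with \autoref{flows:anti-monotonic}) it should also satisfy $(w,x) \in \flows_A$, so $w \in Z$. This contradicts the fact that all of $Z$ is stuck.

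With $Z$ thus shown to be empty, \autoref{th:flowsLocalFix} gives $A(x) = \mu F(x)$, and again by \autoref{th:flowsLocalFix} $(x,y) \notin \flows_A$ for every $y$, as required. If the minimality step fails in general, the fallback is to compare with the global case: \autoref{prop:trigdevsound} already handles the now relation, since there $(z,x) \in \cc_A$ forces $A \neq F_z(A)$. I would then try to transfer that argument through $\flows_A = \bigcup_\pi \cc_{F_\pi(A)}$.
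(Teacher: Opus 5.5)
Your argument is correct, and its skeleton matches the paper's. Both show that $x$ is unsolved at $A$, then produce a non-stuck $z$ with $(z,x)\in R$, which puts $(x,y)$ into $\mathit{loctrg}(V,A,R)$ whether or not $z\in V$.

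You differ at the step you call the main obstacle, and there the paper needs no new argument. \autoref{th:nowLocalFix} already contains the equivalence $A(x)=\mu F(x)\iff\forall z.\,(z,x)\notin\cc_A$. So $Z\neq\emptyset$ gives $A(x)\neq\mu F(x)$, hence some $z$ with $(z,x)\in\cc_A$. Membership in the now relation forces $A\neq F_z(A)$ (\autoref{char:xxINflows:xxINnow}), and $\cc_A\subseteq\flows_A\subseteq R$ gives $(z,x)\in R$. That is the whole proof.

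Your minimal-witness plan also goes through, so the hedging and the fallback are unnecessary. Take $z\in Z$ and a witness $F_{\pi\pi'}(A)(x)\neq F_{\pi z\pi'}(A)(x)$ with $|\pi|+|\pi'|$ minimal.
\begin{itemize}
\item If $\pi=\e$, this already says $(z,x)\in\cc_A$.
\item Otherwise $\pi=w\pi_2$, and minimality gives $F_{\pi_2\pi'}(A)(x)=F_{\pi_2z\pi'}(A)(x)$. The two sides differ once $w$ is prefixed, so $A$ and $F_w(A)$ are distinguished at $x$ by $\pi_2\pi'$ or by $\pi_2z\pi'$. That is, $(w,x)\in\cc_A$.
\end{itemize}
Either way you get a variable in $\cc_A$-relation to $x$, hence non-stuck.

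What this buys you is a direct combinatorial proof that flow into $x$ implies now-dependence of $x$. It avoids $\mu F$ and \autoref{th:partialKleene}\eqref{Kleene:intro:item2}, on which the paper's proof of \autoref{muF:via:flows:2} relies. The price is re-deriving a fact the paper already has in hand.

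\autoref{flows:anti-monotonic} plays no role. Your fallback through $\flows_A=\bigcup_\pi\cc_{F_\pi(A)}$ looks in the wrong place. A pair in $\cc_{F_\pi(A)}$ only certifies that $z$ is unstuck at $F_\pi(A)$, whereas $\mathit{loctrg}$ tests $A(z)\neq f_z(A)$ at the current $A$. So you need the witness in $\cc_A$ itself.
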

\begin{proof}%
Fix $A \in \A$ and $R \subseteq \V \times \V$ such that $\flows_A \subseteq R$. Consider $(x,y) \in \flows_A$, for which we check that $(x,y) \in \mathit{loctrg}(V,A,R)$.

In view of Theorem \ref{th:flowsLocalFix}, the assumption $(x,y) \in \flows_A$ yields that $A(x) \neq \mu F(x)$, and then \autoref{th:nowLocalFix} implies that there is $z \in \V$ such that $(z,x) \in \cc_A$. As $\cc_A \subseteq \flows_A \subseteq R$ we get that $(z,x) \in R$.

If $z\in \V \setminus V$, then $z$ witnesses that $(x,y) \in \mathit{loctrg}(V,A,R)$. If $z \in V$, the previously established $(z,x) \in \cc_A$ also implies that $A(z) \neq f_z(A)$ by definition of $\cc_A$, and so also in this case $z$ witnesses that $(x,y) \in \mathit{loctrg}(V,A,R)$.
\end{proof}

Similarly to its non-local counterpart, also $\mathit{loctrg}$ is better used in combinations with other oracles.

\subsubsection{Local Dependency and Simplified Maximality Oracles.}
\label{Local Dependency and Simplified Maximality Oracles.}
Here we present the local variant of the oracles $\mathit{dep}$ and $\mathit{smax}$.

\begin{definition}\label{def:weakly:anti-monotone}
For a poset $(X, \leq)$, call a function $\phi \colon \A \to X$ weakly anti-monotone if $\phi( F_\pi(A) ) \leq \phi(A)$ for every $A \in \A$ and $\pi \in \V^*$.
\end{definition} 
The terminology is explained by the fact that anti-monotone functions on $(\A , \sqsubseteq)$ are weakly anti-monotone, as $A  \sqsubseteq F_\pi(A)$ for every $A \in \A$ and $\pi \in \V^*$ by~\autoref{th:partialKleene}\eqref{Kleene:intro:item1}.
\begin{lemma}\label{easy:condition:sound:local}
Let $\ora$ be an oracle such that $\ora(\cdot, R)$ is weakly anti-monotone for every $R$. 
Define the local oracle $\ora'$ as 
\[\ora'(V,A,R) = \ora(A,R)\,. \] 
Then, if $\ora$ is sound, so is $\ora'$.
\end{lemma}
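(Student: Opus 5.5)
We have to show: if $\flows_A \subseteq R$, then $\flows_A \subseteq \ora(A,R)$. The plan is to go through the decomposition of $\flows_A$ into future now relations given by \autoref{flows:anti-monotonic}, namely $\flows_A = \bigcup_{\pi \in \V^*} \cc_{F_\pi(A)}$. Weak anti-monotonicity then transports the oracle's output at a future assignment back to the current assignment $A$.

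First, I establish the hypothesis needed to invoke soundness at every future assignment. Fix $\pi \in \V^*$ and set $B = F_\pi(A)$. Since $A \in \A$, we have $A = F_\sigma(\bot)$ for some $\sigma$, so $B = F_{\sigma\pi}(\bot) \in \A$. By \autoref{flows:anti-monotonic} (either the union characterization or iterating $\flows_{F_x(A)} \subseteq \flows_A$), $\cc_B \subseteq \flows_B \subseteq \flows_A \subseteq R$. Thus $R$ is a correct analysis relative to $B$ in the sense of \autoref{def:oracle}.

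Next, I apply soundness of $\ora$ at $B$ to get $\cc_B \subseteq \ora(B,R)$. Weak anti-monotonicity of $\ora(\cdot,R)$, with order $\subseteq$ on $\wp(\V\times\V)$, gives $\ora(F_\pi(A),R) \subseteq \ora(A,R)$. Hence $\cc_{F_\pi(A)} \subseteq \ora(A,R) = \ora'(V,A,R)$ for every $\pi$ and every $V$. Taking the union over $\pi$ and using \autoref{flows:anti-monotonic} once more yields $\flows_A \subseteq \ora'(V,A,R)$.

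The only point needing care is the first step: the soundness hypothesis of $\ora$ must be available at $F_\pi(A)$, not just at $A$. This requires the monotone decrease of $\flows$ along updates, namely $\cc_{F_\pi(A)} \subseteq \flows_{F_\pi(A)} \subseteq \flows_A$, together with closure of $\A$ under partial updates. Both are routine given the earlier results, so I expect no real obstacle.
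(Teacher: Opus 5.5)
Your proof is correct and follows the paper's own argument. You decompose $\flows_A = \bigcup_{\pi \in \V^*} \cc_{F_\pi(A)}$, apply soundness of $\ora$ at $F_\pi(A)$, and pull the result back to $A$ via weak anti-monotonicity of $\ora(\cdot,R)$; you also make explicit the hypothesis $\cc_{F_\pi(A)} \subseteq \flows_A \subseteq R$ needed to invoke soundness at $F_\pi(A)$, which the paper's proof uses without stating it.
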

\begin{proof}%
Fix $V \subseteq \V$, $A \in \A$, and $R \subseteq \V \times \V$, and assume that $\flows_A\subseteq R$. We have to verify that $\flows_A  \subseteq \ora'(V,A, R)$, \ie, that $\flows_A \subseteq \ora(A, R)$. 
By Proposition \ref{flows=unionOFfutureNows}, we have that $\flows_A = \bigcup_{\pi \in \V^*} \cc_{F_\pi(A)}$, so fix $\pi \in \V^*$, for which we have to check that $\cc_{F_\pi(A)} \subseteq \ora(A, R)$. 

As $\ora$ is sound we have $\cc_{F_\pi(A)} \subseteq \ora(F_\pi(A), R)$. Moreover, $A \sqsubseteq F_\pi(A)$ by~\autoref{th:partialKleene}\eqref{Kleene:intro:item1}, so as $\ora(\cdot,R)$ is weakly anti-monotone, $\ora(F_\pi(A), R) \subseteq \ora(A, R)$.
\end{proof}

We apply the previous result to obtain sound local variants of the oracles $\mathit{dep}$ and $\mathit{smax}$, as follows:
\begin{corollary}\label{locdep:locsmax:sound}
The following local oracles are sound:
\begin{align*}
    \mathit{locdep}(V, A, R) &= \mathit{dep}(A, R) = \{ (x,y) \mid (z,x)\in R, \text{ for some } z\in\V \},
    \\
    \mathit{locsmax}(V,A,R) &= \mathit{smax}(A, R) = \{ (x,y) \mid \text{$A(x)$ and $A(y)$ not maximal} \}.
\end{align*}
\end{corollary}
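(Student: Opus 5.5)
The plan is to apply \autoref{easy:condition:sound:local} to each oracle. The oracles $\mathit{dep}$ and $\mathit{smax}$ are already known to be sound: $\mathit{dep}$ by \autoref{prop:trigdevsound}, and $\mathit{smax}$ by \autoref{lem:soundnessMaxL/R} together with \autoref{prop:oracles:properties}\eqref{upclosure}, since $\mathit{max}_l \cap \mathit{max}_r \preceq \mathit{smax}$ and $\mathit{max}_l$, $\mathit{max}_r$ are sound. Hence it remains to check that, for each fixed $R$, the maps $A \mapsto \mathit{dep}(A,R)$ and $A \mapsto \mathit{smax}(A,R)$ are weakly anti-monotone into $(\wp(\V\times\V),\subseteq)$. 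That is, for every $A\in\A$ and $\pi\in\V^*$ we need $\ora(F_\pi(A),R) \subseteq \ora(A,R)$.

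For $\mathit{dep}$ this is immediate. Its definition does not depend on the assignment at all, so $\mathit{dep}(F_\pi(A),R) = \mathit{dep}(A,R)$ and the condition holds trivially.

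For $\mathit{smax}$ the key step is to show that "not maximal" is preserved downward along increasing assignments. First, by \autoref{th:partialKleene}\eqref{Kleene:intro:item1}, applied repeatedly (noting $A = F_{\rho}(\bot)$ for some $\rho$, so each intermediate assignment lies in $\A$), we get $A \sqsubseteq F_\pi(A)$, hence $A(v) \sqsubseteq F_\pi(A)(v)$ for every $v$. Now suppose $A(v)$ is maximal. Then maximality forces $F_\pi(A)(v) = A(v)$, so $F_\pi(A)(v)$ is maximal too. Contrapositively, if $F_\pi(A)(x)$ and $F_\pi(A)(y)$ are both not maximal, then neither is $A(x)$ nor $A(y)$. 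Therefore every pair in $\mathit{smax}(F_\pi(A),R)$ already lies in $\mathit{smax}(A,R)$.

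Both hypotheses of \autoref{easy:condition:sound:local} are then satisfied, and that lemma yields the soundness of $\mathit{locdep}$ and $\mathit{locsmax}$. The only step needing care is the monotonicity $A \sqsubseteq F_\pi(A)$ for $A\in\A$ rather than for $\bot$ itself. This is handled by writing $A = F_\rho(\bot)$ and inducting on $\pi$ using part \eqref{Kleene:intro:item1} of \autoref{th:partialKleene}. I expect no real obstacle here.
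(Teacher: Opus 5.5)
Your proposal is correct and is essentially the paper's proof. Both apply the lemma that a sound oracle which is weakly anti-monotone in the assignment gives a sound local oracle: $\mathit{dep}$ does not depend on $A$, and $\mathit{smax}(F_\pi(A),R) \subseteq \mathit{smax}(A,R)$ because $A \sqsubseteq F_\pi(A)$ passes non-maximality downward. You just spell out what the paper calls straightforward, namely where the soundness of $\mathit{dep}$ and $\mathit{smax}$ comes from and the maximality argument.
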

\begin{proof}%
The oracle $\mathit{dep}$ is sound and independent of the assignment, so \autoref{easy:condition:sound:local} applies immediately. It is also straightforward to verify that $\mathit{smax}$ satisfies the assumptions of \autoref{easy:condition:sound:local}. Thus $\mathit{locdep}$ and $\mathit{locsmax}$ are locally sound.
\end{proof}

\subsubsection{An account of locally sound extensions.}
Here we prove that all the extensions presented in this work are locally sound. As already anticipated in Section~\ref{sec:local:oracles}, this is a consequence of the fact that all these extensions satisfy the conditions of \autoref{extension:sound+anti-mon:inAssign:implies:locally:sound}.

\begin{proposition}\label{list:locally:sound:extensions}
The extensions $\mathit{vars}$, $\mathit{maxExt}$, and $\mathit{bool}$ are locally sound.
\end{proposition}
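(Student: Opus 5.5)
By \autoref{extension:sound+anti-mon:inAssign:implies:locally:sound}, it suffices to show two things for each of $\mathit{vars}$, $\mathit{maxExt}$ and $\mathit{bool}$. First, the extension is sound; this is \autoref{prop:vars:ext:sound}, \autoref{maximality:extension:sound} and \autoref{prop:Boolean:Ext:sound}, respectively. Second, it satisfies the weak anti-monotonicity condition $\lift(F_\pi(A),R) \subseteq \lift(A,R)$ whenever $\flows_A \subseteq R$. So the plan is to check this second condition for each extension separately.

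For $\mathit{vars}$ the check is immediate. The set $\mathit{vars}(A,R)(t)$ depends only on $R$ and $\var(t)$, never on $A$, so the inclusion holds as an equality.

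For $\mathit{maxExt}$ we argue by structural induction on $t$, using only $A \sqsubseteq F_\pi(A)$ from \autoref{th:partialKleene}\eqref{Kleene:intro:item1} together with monotonicity of term interpretations. If $A(x)$ is maximal and $A \sqsubseteq F_\pi(A)$, then $F_\pi(A)(x)=A(x)$ is also maximal. Contrapositively, "not maximal at $F_\pi(A)$" implies "not maximal at $A$". The same holds for $\sigma(t_1,\dots,t_n)(\cdot)$, since $t(A)\sqsubseteq t(F_\pi(A))$. Hence the base clause and the inductive clause can only shrink when passing from $A$ to $F_\pi(A)$, and the induction hypothesis handles the existential over the subterms $t_i$.

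For $\mathit{bool}$ we again need $\mathit{bool}(B,R)\subseteq\mathit{bool}(A,R)$ with $B=F_\pi(A)$. We prove it by induction on derivations in the inductive definition of $\mathit{bool}(B,R)$. The side conditions are $A(y)=\mathbf{ff}$, $t_i(\cdot)=\mathbf{ff}$ and $t_{\neg i}(\cdot)=\mathbf{tt}$. Every $\mathbf{ff}$ condition at $B$ transfers to $A$ by monotonicity, since $t(A)\sqsubseteq t(B)=\mathbf{ff}$. The base rule, the first conjunction rule and the disjunction rule therefore transfer directly.

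The main obstacle is the second conjunction rule, whose premise $t_{\neg i}(B)=\mathbf{tt}$ need not hold at $A$. Say $t_i(B)=\mathbf{ff}$, $t_{\neg i}(B)=\mathbf{tt}$ and $(x,t_i)\in\mathit{bool}(B,R)\subseteq\mathit{bool}(A,R)$. There are two cases at $A$:
\begin{itemize}
\item If $t_{\neg i}(A)=\mathbf{tt}$, the same rule applies at $A$.
\item If $t_{\neg i}(A)=\mathbf{ff}$, both conjuncts are $\mathbf{ff}$ at $A$, since $t_i(A)=\mathbf{ff}$ by monotonicity. We then want to use the first conjunction rule at $A$, which needs some $(z,t_{\neg i})\in\mathit{bool}(A,R)$.
\end{itemize}
In the second case we use the hypothesis $\flows_A\subseteq R$. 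Because $t_{\neg i}$ changes from $\mathbf{ff}$ at $A$ to $\mathbf{tt}$ at $F_\pi(A)$, some update along $\pi$ affects it, so $(z,t_{\neg i})\in\hflows_A$ for a suitable $z$. From this we would like to conclude $(z,t_{\neg i})\in\mathit{bool}(A,R)$, via local soundness of $\mathit{bool}$ at $A$. That appeal is circular. So I would instead prove directly, by an auxiliary induction on terms, the lemma: if $t(A)=\mathbf{ff}$ and $t(\mu F)=\mathbf{tt}$, then some $(z,t)\in\mathit{bool}(A,R)$ whenever $\flows_A\subseteq R$.
\begin{itemize}
\item Variable case: by \autoref{th:flowsLocalFix} there is $z$ with $(z,y)\in\flows_A\subseteq R$, and $A(y)=\mathbf{ff}$, so the base rule applies.
\item Disjunction: exactly one disjunct might be $\mathbf{tt}$ at $A$. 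But $t(A)=\mathbf{ff}$ forces both disjuncts to be $\mathbf{ff}$ at $A$, and one of them becomes $\mathbf{tt}$ at $\mu F$, so we recurse into it and apply the disjunction rule.
\item Conjunction: both conjuncts reach $\mathbf{tt}$ at $\mu F$, and we split on their values at $A$, using the induction hypothesis and the two conjunction rules.
\end{itemize}
This auxiliary lemma is where I expect the real work to lie.
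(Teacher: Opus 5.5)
Your proposal is correct and matches the paper's structure. That includes the reduction via \autoref{extension:sound+anti-mon:inAssign:implies:locally:sound}, the trivial case of $\mathit{vars}$, and the monotonicity argument for $\mathit{maxExt}$. For $\mathit{bool}$ you also run an induction whose only non-routine case is the $\wedge$-rule with premise $t_{\neg i}(F_\pi(A))=\mathbf{tt}$ when $t_{\neg i}(A)=\mathbf{ff}$. The paper inducts on $t$ rather than on derivations, which makes no difference.

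You diverge only in how you produce a witness $(z,t_{\neg i})\in\mathit{bool}(A,R)$. The circularity you noticed affects only \emph{local} soundness. The paper instead uses ordinary soundness of $\mathit{bool}$ (\autoref{prop:Boolean:Ext:sound}), which is already proved and applies because $\cc_A\subseteq\flows_A\subseteq R$; your reduction needs it anyway. If no such $z$ existed, soundness would give $(z,t_{\neg i})\notin\hc_A$ for all $z$. Then \autoref{improved:lemmaGR} yields $t_{\neg i}(\mu F)=t_{\neg i}(A)=\mathbf{ff}$, contradicting $t_{\neg i}(F_\pi(A))=\mathbf{tt}$ together with $F_\pi(A)\sqsubseteq\mu F$.

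So your auxiliary lemma is an immediate corollary of \autoref{improved:lemmaGR} plus soundness, and the ``real work'' you anticipate has already been done. Your direct term induction for it is still valid: variables go through \autoref{th:flowsLocalFix} and the base rule, disjunction through the $\vee$-rule, and conjunction by splitting into the two $\wedge$-rules. It simply re-derives this special case of soundness.

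One step to make explicit when you invoke the lemma: $t_{\neg i}(F_\pi(A))=\mathbf{tt}$ gives $t_{\neg i}(\mu F)=\mathbf{tt}$, because $F_\pi(A)\sqsubseteq\mu F$ and term interpretations are monotone.
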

\begin{proof}%

The reachable variables extension $\mathit{vars}$ does not depend on the assignment, it is sound by~\autoref{prop:vars:ext:sound}, so it is locally sound by~\autoref{extension:sound+anti-mon:inAssign:implies:locally:sound}.

Regarding $\mathit{maxExt}$, we claim that $\mathit{maxExt}( \cdot ,R)$ is weakly anti-monotone for every $R \subseteq \V \times \V$. Indeed, let $A \in \A$, $\pi\in \V^*$, $R \subseteq \V \times \V$, and $t \in \T$. To verify that $\mathit{maxExt}(F_\pi(A),R)(t) \subseteq \mathit{maxExt}(A,R)(t)$, use an easy inductive argument based on the fact that $A \sqsubseteq F_\pi(A)$ by~\autoref{th:partialKleene}\eqref{Kleene:intro:item1}, and so $t(A) \sqsubseteq t(F_\pi(A))$ as the interpretation of $t$ is monotone. Then if $t(F_\pi(A))$ is not maximal, neither $t(A)$ is maximal; this proves our claim. As $\mathit{maxExt}$ is sound by~\autoref{maximality:extension:sound}, it is locally sound by~\autoref{extension:sound+anti-mon:inAssign:implies:locally:sound}. 

To prove that $\mathit{bool}$ is locally sound, one could use an argument similar to the one used in the proof of~\autoref{prop:Boolean:Ext:sound}. We offer instead a different argument based on the weak anti-monotonicity property of $\mathit{bool}(\cdot,R)$ stated in~\autoref{bool:anti-monotone:in:assignment}.
Then, as $\mathit{bool}$ is sound by~\autoref{prop:Boolean:Ext:sound}, we deduce that $\mathit{bool}$ is locally sound using~\autoref{extension:sound+anti-mon:inAssign:implies:locally:sound}.
\end{proof}

\begin{lemma}\label{bool:anti-monotone:in:assignment}
Let $A \in \A$, $\pi\in \V^*$, and $t \in \T$.
\begin{itemize}
\item If $R \subseteq \V \times \V$ is such that $\cc_A \subseteq R$, then $\mathit{bool}(F_\pi(A),R)(t) \subseteq \mathit{bool}(A,R)(t)$.

\item 
If $R \subseteq R' \subseteq \V \times \V$, then $\mathit{bool}(A,R)(t) \subseteq \mathit{bool}(A,R')(t)$. %
\end{itemize}
\end{lemma}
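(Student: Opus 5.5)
The plan is to treat the two items separately, using in both that $\mathit{bool}(A,R)$ is the \emph{least} set closed under the four rules. The second item is the easy one and I would dispose of it first. The second item (monotonicity in $R$) is immediate from leastness. Only the first rule mentions $R$, and it does so positively. Hence every $S$ closed under the rules for $(A,R')$ is also closed under the rules for $(A,R)$, whenever $R \subseteq R'$. In particular $\mathit{bool}(A,R')$ is closed for $(A,R)$, so $\mathit{bool}(A,R) \subseteq \mathit{bool}(A,R')$.

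For the first item, put $B = F_\pi(A)$. By \autoref{th:partialKleene}\eqref{Kleene:intro:item1} and \eqref{Kleene:intro:item2} we have $A \sqsubseteq B \sqsubseteq \mu F$. Since every term has a monotone interpretation, for every term $s$:
\begin{itemize}
\item $s(B)=\mathbf{ff}$ implies $s(A)=\mathbf{ff}$;
\item $s(B)=\mathbf{tt}$ implies $s(\mu F)=\mathbf{tt}$.
\end{itemize}
I would show that $\mathit{bool}(A,R)$ is closed under the rules instantiated at $B$. Equivalently, I would argue by induction on derivations in $\mathit{bool}(B,R)$ that every derived pair lies in $\mathit{bool}(A,R)$. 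I take the rules one at a time.
\begin{itemize}
\item \emph{First rule.} From $(x,y)\in R$ and $B(y)=\mathbf{ff}$ we get $A(y)=\mathbf{ff}$, so the same rule fires at $A$.
\item \emph{Second rule (conjunction with both conjuncts false) and the disjunction rule.} The side conditions $t_j(B)=\mathbf{ff}$ transfer to $A$. The premises are in $\mathit{bool}(A,R)$ by the induction hypothesis, so the same rule fires at $A$.
\end{itemize}

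The main obstacle is the third rule. We have $(x,t_i)$ derived, with $t_i(B)=\mathbf{ff}$ and $t_{\neg i}(B)=\mathbf{tt}$. Again $t_i(A)=\mathbf{ff}$, but $t_{\neg i}(A)$ may have dropped to $\mathbf{ff}$. If $t_{\neg i}(A)=\mathbf{tt}$, the third rule applies at $A$ directly. Otherwise both conjuncts are false at $A$, and I would use the second rule instead. That rule needs a witness $(z,t_{\neg i})\in\mathit{bool}(A,R)$ for some $z$, and this is where the hypothesis $\cc_A\subseteq R$ enters. From $t_{\neg i}(A)=\mathbf{ff}\neq\mathbf{tt}=t_{\neg i}(\mu F)$, \autoref{improved:lemmaGR} gives some $z$ with $(z,t_{\neg i})\in\hc_A$. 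Soundness of $\mathit{bool}$ (\autoref{prop:Boolean:Ext:sound}), applied to $A$ and $R$ with $\cc_A\subseteq R$, then gives $(z,t_{\neg i})\in\mathit{bool}(A,R)$. Combining this with $(x,t_i)\in\mathit{bool}(A,R)$ from the induction hypothesis, the second rule yields $(x,t_0\wedge t_1)\in\mathit{bool}(A,R)$. This completes the induction, and reading off preimages at $t$ gives $\mathit{bool}(F_\pi(A),R)(t)\subseteq\mathit{bool}(A,R)(t)$.
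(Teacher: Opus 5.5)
Your proof is correct and follows the paper's argument. Side conditions transfer from $F_\pi(A)$ down to $A$ by monotonicity. In the one delicate case (third rule with $t_{\neg i}(A)=\mathbf{ff}$), you obtain the witness for the second rule exactly as the paper does, from \autoref{improved:lemmaGR} and the soundness of $\mathit{bool}$ under $\cc_A\subseteq R$; the only differences are presentational (leastness and induction on derivations instead of structural induction on $t$, and a direct argument instead of one by contradiction).
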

\begin{proof}
Fix $R \subseteq \V \times \V$ such that $\cc_A \subseteq R$. We first prove that $\mathit{bool}(F_\pi(A),R)(t) \subseteq \mathit{bool}(A,R)(t)$ by structural induction on $t \in \T$.
\begin{trivlist}
\item \textsc{Base Case} ($t = y \in \V$). 
As $A(y) \sqsubseteq F_\pi (A)(y)$ by \autoref{th:partialKleene}\eqref{Kleene:intro:item1}, if $F_\pi (A)(y) = \mathbf{ff}$, then $A(y) = \mathbf{ff}$.
As a consequence, 
\begin{align*}
\mathit{bool}(F_\pi(A),R)(y) &= \{ x \mid (x,y) \in R \text{ and } F_\pi (A)(y) = \mathbf{ff} \}\\
&\subseteq \{ x \mid (x,y) \in R \text{ and } A(y) = \mathbf{ff} \}=\mathit{bool}(A,R)(y).
\end{align*} 

\item \textsc{Base Case} (either $t = \mathbf{tt}$ or $t = \mathbf{ff}$).
Then the definition of $\mathit{bool}$ gives us $\mathit{bool}(A,R)(t) = \emptyset = \mathit{bool}(F_\pi(A),R)(t)$.

\item \textsc{Inductive Step} ($t = t_0 \vee t_1$). Let $x \in \mathit{bool}(F_\pi(A),R)(t_0 \vee t_1)$. By definition, there is $i \in\{0,1\}$ such that $x \in \mathit{bool}(F_\pi(A),R)(t_i)$, and $t_{0}(F_\pi(A)) = \mathbf{ff} = t_1(F_\pi(A))$.
The former and the inductive assumption give $x \in \mathit{bool}(A,R)(t_i)$.
\autoref{th:partialKleene}\eqref{Kleene:intro:item1} and the interpretations of $t_0$ and $t_1$ being monotone give $t_0(A) \sqsubseteq t_0( F_\pi A) = \mathbf{ff}$, and $t_1(A) \sqsubseteq t_1( F_\pi A) = \mathbf{ff}$, so $t_0(A) = \mathbf{ff} = t_1(A)$. Then $x \in \mathit{bool}(A,R)(t_0 \vee t_1)$.

\item \textsc{Inductive Step} ($t = t_0 \wedge t_1$). Fix $x \in \mathit{bool}( F_\pi(A),R)(t_0 \wedge t_1)$. This can happen according to one of the following cases.
\begin{itemize}
    \item There is $i \in\{0,1\}$ such that $x \in \mathit{bool}(F_\pi(A),R)(t_i)$, $\mathit{bool}( F_\pi(A),R)(t_{\neg i}) \neq \emptyset$, and $t_{0}(F_\pi(A)) = \mathbf{ff} = t_1(F_\pi(A))$.
    
    The inductive assumption on both $t_i$ and $t_{\neg i}$ gives us $\mathit{bool}(F_\pi(A),R)(t_i) \subseteq \mathit{bool}(A,R)(t_i)$ and $\mathit{bool}(F_\pi(A),R)(t_{\neg i}) \subseteq \mathit{bool}(A,R)(t_{\neg i})$, so we deduce that $x \in \mathit{bool}(A,R)(t_i)$ and $\mathit{bool}( A,R)(t_{\neg i}) \neq \emptyset$. Moreover, \autoref{th:partialKleene}\eqref{Kleene:intro:item1} and the interpretations of $t_0$ and $t_1$ being monotone give $t_0(A) \sqsubseteq t_0( F_\pi A) = \mathbf{ff}$, and $t_1(A) \sqsubseteq t_1( F_\pi A) = \mathbf{ff}$, so $t_0(A) = \mathbf{ff} = t_1(A)$. Then $x \in \mathit{bool}(A,R)(t_0 \wedge t_1)$.

    \item There is $i \in\{0,1\}$ such that $x \in \mathit{bool}(F_\pi(A),R)(t_i)$, $t_{i}(F_\pi(A)) = \mathbf{ff}$, and $t_{\neg i}(F_\pi(A)) = \mathbf{tt}$.
    
As above, the inductive assumption and \autoref{th:partialKleene}\eqref{Kleene:intro:item1} yield 
\begin{equation}\label{eq:bool:anti-monotone}
x \in \mathit{bool}(A,R)(t_i) \text{ and } t_{i}(A) = \mathbf{ff}.
\end{equation}
If additionally $t_{\neg i}(A) = \mathbf{tt}$, the latter and \eqref{eq:bool:anti-monotone} show that $x$ belongs to the first set in the union defining $\mathit{bool}(A,R)(t_0 \wedge t_1)$. Otherwise, $t_{\neg i}(A) = \mathbf{ff}$, so \eqref{eq:bool:anti-monotone} becomes
\begin{equation}\label{eq:bool:anti-monotone:2}
x \in \mathit{bool}(A,R)(t_i) \text{ and } t_{0}(A) = \mathbf{ff} = t_1(A).
\end{equation}
Our goal is to deduce that $x$ belongs to the second set in the union defining $\mathit{bool}(A,R)(t_0 \wedge t_1)$; to this end, it remains to show that $\mathit{bool}(A,R)(t_{\neg i}) \neq \emptyset$. 

Assume instead that $\mathit{bool}(A,R)(t_{\neg i}) = \emptyset$. As $\cc_A \subseteq R$ and $\mathit{bool}$ is sound by Proposition \ref{prop:Boolean:Ext:sound}, we deduce that $(z,t_{\neg i}) \notin \hc_A$ for all $z \in \V$. Then \autoref{improved:lemmaGR} implies that $t_{\neg i}(\mu F) = t_{\neg i} (A)$, so $t_{\neg i}(\mu F) = \mathbf{ff}$ by \eqref{eq:bool:anti-monotone:2}. But we were assuming that $t_{\neg i}(F_\pi(A)) = \mathbf{tt}$, so we obtain that $t_{\neg i}(\mu F) \sqsubset t_{\neg i}(F_\pi(A))$, that contradicts \autoref{th:partialKleene}\eqref{Kleene:intro:item1} and the interpretation of $t_{\neg i}$ being monotone. 
\end{itemize}
\end{trivlist}

Now fix $R \subseteq R' \subseteq \V \times \V$. We verify that $\mathit{bool}(A,R)(t) \subseteq \mathit{bool}(A,R')(t)$ by structural induction on $t \in \T$.
\begin{trivlist}
\item \textsc{Base Case} ($t = y \in \V$). 
Then by definition we have that $\mathit{bool}(A,R)(t) = \{ x \mid (x,y) \in R  \text{ and } A(y) = \mathbf{ff} \} \subseteq \{ x \mid (x,y) \in R'  \text{ and } A(y) = \mathbf{ff} \} = \mathit{bool}(A,R')(t)$.

\item \textsc{Base Case} (either $t = \mathbf{tt}$ or $t = \mathbf{ff}$).
Then the definition of $\mathit{bool}$ gives us $\mathit{bool}(A,R)(t) = \emptyset = \mathit{bool}(A,R')(t)$.

\item \textsc{Inductive Step} (either $t = t_0 \vee t_1$ or $t = t_0 \wedge t_1$). This step uses the inductive assumption that $\mathit{bool}(A,R)(t_i) \subseteq \mathit{bool}(A,R')(t_i)$, and so for example if $\mathit{bool}(A,R)(t_{\neg i}) \neq \emptyset$ then also $\mathit{bool}(A,R')(t_{\neg i}) \neq \emptyset$.%
\end{trivlist}
\end{proof}

\section{Missing proofs}
This appendix contains the proofs that were omitted from the main text.

\subsection{Missing proofs from Section~\ref{sec:prelim}.}
\begin{proof}[of Theorem~\ref{th:partialKleene}\eqref{Kleene:intro:item1}]
We have to prove that $F_\pi(\bot) \sqsubseteq F_x F_\pi(\bot)$ for all $\pi \in \V^*$ and $x \in V$, %
and we proceed by induction on $|\pi|$. If $|\pi| = 0$, then %
the inequality $\bot \sqsubseteq F_x(\bot)$ trivially holds.
For the induction step, fix $y \in \V$, %
for which our goal is proving that 
\begin{equation}\label{eq:double:ind:0}
F_y F_\pi (\bot) \sqsubseteq F_x F_y F_\pi (\bot).
\end{equation}

If $x = y$, then \eqref{eq:double:ind:0} becomes $F_x F_\pi (\bot) \sqsubseteq F_x F_x F_\pi (\bot)$. This follows from the inductive assumption $F_\pi (\bot) \sqsubseteq F_x F_\pi (\bot)$ and the fact that $F_x$ is monotone.

If $x \neq y$, then by definition of $F_x$, to prove \eqref{eq:double:ind:0} it is sufficient to compare the $x$-th coordinates of its left and right sides. Using the fact that $x \neq y$, we have $F_\pi (\bot) (x) = F_y F_\pi (\bot) (x)$, so \eqref{eq:double:ind:0} becomes 
\begin{equation}\label{eq:double:ind:1}
F_\pi (\bot) (x) \sqsubseteq F_x F_y F_\pi (\bot) (x).
\end{equation}

The inductive assumption gives $F_\pi(\bot) \sqsubseteq F_y F_\pi (\bot)$, and as $F_x$ is monotone we obtain $F_x F_\pi(\bot) \sqsubseteq F_x F_y F_\pi (\bot)$. Taking the $x$-th coordinates in the latter inequality, we obtain 
\begin{equation}\label{eq:double:ind:2}
F_x F_\pi(\bot)(x) \sqsubseteq F_x F_y F_\pi (\bot) (x ).
\end{equation}
Moreover, the inductive assumption also yields $F_\pi(\bot) \sqsubseteq F_x F_\pi (\bot)$, and so in particular
\begin{equation*}
F_\pi(\bot) (x) \sqsubseteq F_x F_\pi (\bot) (x).
\end{equation*}
The above and \eqref{eq:double:ind:2} imply \eqref{eq:double:ind:1}.
\end{proof}

\bigskip

Let us recall here that for an assignment $A\in \D^\V$ it holds that 
\begin{equation}\label{eq:fix-pointF:charact:viaFx}
A = F(A) \text{ if and only if } A = F_x (A) \text{ for every } x\in \V.
\end{equation}

\begin{proof}[of Theorem~\ref{th:partialKleene}\eqref{Kleene:intro:item2}]
We first show that $F_{\pi}(\bot) \sqsubseteq \mu F$ for every $\pi\in \V^*$, by induction on $|\pi|$.

If $|\pi|=0$, then $F_\pi$ is the identity, so $F_\pi (\bot) = \bot \sqsubseteq \mu F$.
For the inductive step, let $\pi = \pi' x \in \V^*$ for some $x \in \V$ and $\pi'\in \V^*$. 
The inductive assumption gives $F_{\pi'}(\bot) \sqsubseteq \mu F$, and the monotonicity of $F_x$ yields $F_x F_{\pi'}(\bot) \sqsubseteq F_x (\mu F)$. Now \eqref{eq:fix-pointF:charact:viaFx} implies $F_x (\mu F) = \mu F$, and so $F_x F_{\pi'}(\bot) \sqsubseteq \mu F$.

Fix $A \in \A$. For every $\pi\in \V^*$ we have $F_{\pi}(A) \in\A$, so for every $x \in \V$ Theorem~\ref{th:partialKleene}\eqref{Kleene:intro:item1} gives $F_{\pi}(A) \sqsubseteq F_{x} F_{\pi}(A)$.
As $\D^\V$ is Noetherian, there is $\pi\in \V^*$ such that $F_{\pi}(A) = F_x F_{\pi}(A)$ for every $x \in \V$. By \eqref{eq:fix-pointF:charact:viaFx}, we get that $F_{\pi}(A)$ is a fixed point of $F$, and so $\mu F\sqsubseteq F_{\pi}(A)$. By the first part of the proof, we conclude that $F_{\pi}(A) = \mu F$.
\end{proof}

\subsection{Missing proofs from Section~\ref{sec:Kleene:Iteration:with:Dependency Analysis}.}

\begin{proposition}\label{flows=unionOFfutureNows}
    For every $A \in \A$, we have that $\flows_A = \bigcup_{\pi \in \V^*} \cc_{F_\pi(A)}$.
\end{proposition}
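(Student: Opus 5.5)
This is a direct unfolding of the definitions. The key observation is that $F_\pi(A)\in\A$ and that $F_{x\pi'}(F_\pi(A)) = F_{\pi x\pi'}(A)$, because $F_{\pi\sigma} = F_\sigma F_\pi$ under the paper's composition convention. We prove the two inclusions separately.

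\emph{Inclusion $\flows_A \subseteq \bigcup_{\pi} \cc_{F_\pi(A)}$.} Let $(x,y)\in\flows_A$, witnessed by $\pi,\pi'$ with $F_{\pi\pi'}(A)(y)\neq F_{\pi x\pi'}(A)(y)$. Set $B = F_\pi(A)$; since $A = F_\sigma(\bot)$ for some $\sigma$, we get $B = F_{\sigma\pi}(\bot)\in\A$. Rewriting the witness gives
\[
F_{\pi\pi'}(A) = F_{\pi'}(B) \quad\text{and}\quad F_{\pi x\pi'}(A) = F_{x\pi'}(B).
\]
Hence $F_{\pi'}(B)(y)\neq F_{x\pi'}(B)(y)$, which means $(x,y)\in\cc_B = \cc_{F_\pi(A)}$.

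\emph{Inclusion $\bigcup_{\pi} \cc_{F_\pi(A)} \subseteq \flows_A$.} Let $(x,y)\in\cc_{F_\pi(A)}$, witnessed by $\pi'$. Reading the same equalities backwards yields $F_{\pi\pi'}(A)(y)\neq F_{\pi x\pi'}(A)(y)$, so $(x,y)\in\flows_A$.

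The only point needing care is the order of composition. We must check that $F_{\pi\pi'} = F_{\pi'}\circ F_\pi$, and in particular that $F_{x\pi'}(F_\pi(A)) = F_{\pi x \pi'}(A)$. Both follow directly from the footnote convention that updates are performed in order of appearance. We must also confirm that $F_\pi(A)\in\A$, so that $\cc_{F_\pi(A)}$ is defined; this holds because $\A$ is closed under partial updates.

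There is no real obstacle here. As an immediate consequence, taking $\pi=\e$ gives $\cc_A\subseteq\flows_A$.
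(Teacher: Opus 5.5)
Your proof is correct and matches the paper's approach: the paper proves this by direct unfolding of the definitions. Your explicit checks that $F_{\pi x\pi'}(A) = F_{x\pi'}(F_\pi(A))$ under the in-order composition convention and that $F_\pi(A)\in\A$ are the only points needing care, and you handle both correctly.
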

\begin{proof}
This is an easy consequence of the definitions.
\end{proof}

\begin{proof}[of \autoref{flows:anti-monotonic}]
This is a corollary of \autoref{flows=unionOFfutureNows}.
\end{proof}

The last part of \autoref{flows:anti-monotonic} states that $\flows_{F_{x}(A)} \subseteq \flows_{A}$ for every $A \in \A$ and $x \in \V$. This is what we term the \emph{anti-monotonicity} property of the flow relation (strictly speaking, 
the function $\A \to \big(\wp(\V \times \V), {\subseteq} \big)$, mapping $A \mapsto \flows_{A}$, is weakly anti-monotone, see \autoref{def:weakly:anti-monotone}). Note that $\cc_A$ is not anti-monotone (\cf~\autoref{ex:now}, where $(x,x) \notin \cc_\bot$, but $(x,x) \in \cc_{F_z(\bot)}$).
In these terms, \autoref{flows=unionOFfutureNows} establishes that $\flows_A$ is the least relation containing $\cc_A$ that is anti-monotone.

\begin{proposition}\label{muF:via:flows:1}
Let $z\in \V$ and $A\in \A$ such that $A(z) = \mu F(z)$. Then,
\begin{enumerate}
\item \label{muF:via:flows:1:item1}
$(x,z) \notin \flows_A$ for every $x \in \V$;
\item \label{muF:via:flows:1:item2}
$(z,y) \notin \flows_A$ for every $y \in \V$.
\end{enumerate}
\end{proposition}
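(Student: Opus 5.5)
Both items follow from one observation: once $z$ is solved, its value is frozen along every continuation. Every assignment reachable from $A$ has the form $F_\rho(A)$ for some $\rho\in\V^*$, and $F_\rho(A)\in\A$. By the first part of the proof of \autoref{th:partialKleene}\eqref{Kleene:intro:item2}, every element of $\A$ lies below $\mu F$. By \autoref{th:partialKleene}\eqref{Kleene:intro:item1}, iterated along $\rho$, we have $A\sqsubseteq F_\rho(A)$. Evaluating at $z$ gives
\[
\mu F(z)=A(z)\sqsubseteq F_\rho(A)(z)\sqsubseteq \mu F(z),
\]
so $F_\rho(A)(z)=\mu F(z)$ for every $\rho\in\V^*$. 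I will call this the freezing fact.

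Item \eqref{muF:via:flows:1:item1} is then immediate. Suppose $(x,z)\in\flows_A$, so $F_{\pi\pi'}(A)(z)\neq F_{\pi x\pi'}(A)(z)$ for some $\pi,\pi'$. Both sides have the form $F_\rho(A)(z)$, so both equal $\mu F(z)$ by the freezing fact, which is a contradiction.

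For item \eqref{muF:via:flows:1:item2}, the freezing fact applied with $\rho=\pi z$ gives $F_{\pi z}(A)=F_\pi(A)$ for all $\pi$. Indeed, $F_z$ changes only coordinate $z$, and on that coordinate both sides equal $\mu F(z)$. So $F_z$ acts as the identity on every assignment reachable from $A$. It follows that $F_{\pi z\pi'}(A)=F_{\pi'}(F_{\pi z}(A))=F_{\pi'}(F_\pi(A))=F_{\pi\pi'}(A)$, and evaluating at any $y$ shows that no witness for $(z,y)\in\flows_A$ can exist.

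The only point needing care is the claim that every element of $\A$ is below $\mu F$, and hence that $F_\rho(A)\sqsubseteq\mu F$. This is established at the start of the proof of \autoref{th:partialKleene}\eqref{Kleene:intro:item2}. It applies here because $A=F_\sigma(\bot)$ for some $\sigma$, so $F_\rho(A)=F_{\sigma\rho}(\bot)\in\A$.
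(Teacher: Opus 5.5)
Your proposal is correct and follows essentially the same route as the paper: establish that $F_\pi(A)(z)=A(z)$ for all $\pi$ via the sandwich $A\sqsubseteq F_\pi(A)\sqsubseteq\mu F$, then read off item 1 directly and item 2 from $F_zF_\pi(A)=F_\pi(A)$. The only cosmetic difference is that you get $F_\pi(A)\sqsubseteq\mu F$ from $F_\pi(A)\in\A$, whereas the paper uses monotonicity of $F_\pi$ together with $F_\pi(\mu F)=\mu F$; both amount to the same fact.
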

\begin{proof} 
Let us first prove that 
\begin{equation}\label{eq:Fpi(A)(z) = A(z)}
A(z) = F_\pi (A) (z) \text{ for every $\pi\in \V^*$.}
\end{equation} 
Indeed, as $\mu F$ is a fixed point of $F$, it is also a fixed point of $F_\pi$ for every $\pi\in \V^*$ by \eqref{eq:fix-pointF:charact:viaFx}, and so $F_\pi(A) \sqsubseteq F_\pi (\mu F) = \mu F$, by Theorem \ref{th:partialKleene}\eqref{Kleene:intro:item2}. Now Theorem \ref{th:partialKleene}\eqref{Kleene:intro:item1} gives $A \sqsubseteq F_\pi (A)$, and so $A \sqsubseteq F_\pi (A) \sqsubseteq \mu F$.
Since $A(z) = \mu F(z)$ by assumption, we deduce \eqref{eq:Fpi(A)(z) = A(z)}.
\begin{enumerate}
\item
Fix $x\in \V$, and arbitrary $\pi,\pi'\in \V^*$. Using \eqref{eq:Fpi(A)(z) = A(z)} twice we obtain
\[
F_{\pi'} F_\pi (A)(z) = A(z) = F_{\pi'} F_x F_\pi (A)(z),
\] hence $(x,z) \notin \flows_A$. %
\item
For an arbitrary $\pi\in \V^*$, applying \eqref{eq:Fpi(A)(z) = A(z)} twice again we get $F_\pi(A)(z) = A(z) = F_z F_\pi (A)(z)$. We also know that $F_\pi(A)(x) = F_z F_\pi (A)(x)$ for every other variable $x \in \V\setminus \{z\}$, and so $F_\pi(A) = F_z F_\pi (A)$. 
Then obviously $F_{\pi'} F_\pi (A) = F_{\pi'} F_z F_\pi(A)$ for every $\pi'\in \V^*$, showing that $(z,y) \notin \flows_A$.%
\end{enumerate}
\end{proof}

The following result essentially provides a converse to Proposition~\ref{muF:via:flows:1}.
\begin{proposition}\label{muF:via:flows:2}
Let $z\in \V$ and $A\in\A$ such that $A(z) \neq \mu F(z)$. Then,
\begin{enumerate}
\item \label{anticipated:lemmaGR:for:cc}
$(x,z) \in \cc_A$ for some $x \in \V$;
\item \label{(z,z):in:flows}
$(z,z) \in \flows_A$.
\end{enumerate}
\end{proposition}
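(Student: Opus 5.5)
We prove both items using Theorem~\ref{th:partialKleene}. By part~\eqref{Kleene:intro:item2}, fix $\pi\in\V^*$ with $F_\pi(A)=\mu F$; by part~\eqref{Kleene:intro:item1}, the assignments along any update sequence from $A$ form an ascending chain.

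For item~\ref{anticipated:lemmaGR:for:cc}, take the shortest prefix $\rho$ of $\pi$ with $F_\rho(A)(z)\neq A(z)$. Such a prefix exists because $F_\pi(A)(z)=\mu F(z)\neq A(z)$, and it is nonempty. Write $\rho=x\sigma$ with $x\in\V$ and $\sigma\in\V^*$. Then
\[
F_{x\sigma}(A)(z) = F_\rho(A)(z) \neq A(z).
\]
If $F_\sigma(A)(z)=A(z)$, then $(x,z)\in\cc_A$ directly from the definition, with witness $\sigma$: $F_\sigma(A)(z)\neq F_{x\sigma}(A)(z)$.

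If instead $F_\sigma(A)(z)\neq A(z)$, we cannot conclude immediately, so we argue by strong induction on the length of a sequence $\rho$ with $F_\rho(A)(z)\neq A(z)$. The claim is that some $x$ satisfies $(x,z)\in\cc_A$. Write $\rho=x\sigma$. Either $F_\sigma(A)(z)=A(z)$, which gives $(x,z)\in\cc_A$ as above, or $F_\sigma(A)(z)\neq A(z)$ with $|\sigma|<|\rho|$, and the inductive hypothesis applies. The base case is $|\rho|=1$: then $\sigma=\e$ and $F_\sigma(A)(z)=A(z)$ holds trivially. This peeling of the first letter is the only delicate point, and the induction handles it cleanly.

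For item~\ref{(z,z):in:flows}, it suffices to find $\pi_0,\pi_1$ with $F_{\pi_0\pi_1}(A)(z)\neq F_{\pi_0 z\pi_1}(A)(z)$; we take $\pi_1=\e$. Consider the chain $A_0=A$, $A_{i}=F_{\pi[1..i]}(A)$ along $\pi$, which ends at $\mu F$. Choose the least $i$ with $f_z(A_i)\neq A_i(z)$. Such an $i$ exists: otherwise every $A_i$ satisfies $A_i(z)=f_z(A_i)$, so no step of the chain changes $z$ (updates of $z$ are then trivial and updates of other variables leave $z$ untouched), giving $\mu F(z)=A(z)$, a contradiction. With $\pi_0=\pi[1..i]$ we get
\[
F_{\pi_0}(A)(z)=A_i(z)\neq f_z(A_i)=F_{\pi_0 z}(A)(z),
\]
so $(z,z)\in\flows_A$.

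Alternatively, item~\ref{(z,z):in:flows} follows from Proposition~\ref{flows=unionOFfutureNows} together with the argument above: the pair $(z,z)$ lies in $\cc_{A_i}$, witnessed by the empty sequence.
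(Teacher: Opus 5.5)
Your proof is correct and follows essentially the paper's route. Item~1 is the contrapositive of the paper's induction that peels off the first update, and item~2 locates, along a path from $A$ to $\mu F$, a point where updating $z$ changes its value. The initial ``shortest prefix'' choice in item~1 is superfluous, since the strong induction over arbitrary sequences does all the work. The paper picks a minimal-length path and the first occurrence of $z$, while you take any path and argue directly that some $z$-update must be non-trivial, which spares the minimality argument.
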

\begin{proof}
1. Assume instead that $\{x\in \V \mid (x,z)\in \cc_A \} = \emptyset$. We first show by induction on $|\pi|$ that 
\begin{equation}\label{anticipated:eq:lemmaGR}
A(z) = F_\pi (A) (z) \text{ for every } \pi \in \V^*.
\end{equation}
If $|\pi|=0$, then $F_\pi$ is the identity, so $F_\pi (A) (z) = A(z)$ and there is nothing to prove. 
For the inductive step, let $\pi = x \pi' \in \V^*$ for some $x \in \V$ and $\pi'\in \V^*$. 

The inductive assumption gives $A(z) = F_{\pi'} (A) (z)$, and since $(x,z)\notin \cc_A$, in particular we have that $F_{\pi'}(A) (z) = F_{\pi'} F_x (A) (z) = F_{x\pi'}(A) (z)$, so \eqref{anticipated:eq:lemmaGR} holds for $\pi = x \pi'$.

By Theorem \ref{th:partialKleene}\eqref{Kleene:intro:item2}, there is $\pi \in \V^*$ such that $\mu F = F_\pi (A)$, and then \eqref{anticipated:eq:lemmaGR} implies that $A(z) = F_\pi (A) (z) = \mu F (z)$, a contradiction.

2. Obviously, the variable $x$ provided by the first item satisfies $(x,z) \in \cc_A \subseteq \flows_A$. Now we prove that also $(z,z) \in \flows_A$.

By Theorem~\ref{th:partialKleene}\eqref{Kleene:intro:item2}, there is $\pi_0 \in \V^*$ such that $\mu F = F_{\pi_0} (A)$, and we can assume that $\pi_0$ has minimal length such that the latter equality holds. Moreover, $\pi_0$ must contain $z$, otherwise $A(z) = F_{\pi_0} (A)(z) = \mu F (z)$, contradicting the assumption.
Write $\pi_0 = \pi z \pi'$, where the prefix $\pi$ of $\pi_0$ does not contain $z$. Then
\[
\mu F = F_{\pi z \pi'} (A) = F_{\pi'} F_z F_\pi(A),
\]
and $F_\pi(A) \neq F_z F_\pi(A)$ as $\pi_0$ has minimal length (otherwise, also $\mu F = F_{\pi \pi'} (A)$). Obviously $F_\pi(A) \neq F_z F_\pi(A)$ means that $F_\pi(A)(z) \neq F_z F_\pi(A) (z)$, so $\pi$ witnesses that $(z,z) \in \cc_{ F_\pi(A) } \subseteq \flows_A$. 
\end{proof}

Using Proposition~\ref{muF:via:flows:1} and Proposition~\ref{muF:via:flows:2}, we immediately obtain the following characterization of $\mu F$ via the relations of the form $\flows_A$ and $\cc_A$, thus proving~\autoref{th:flowsLocalFix}.

\begin{proof}[of \autoref{th:flowsLocalFix}]%
For $z\in \V$ and $A\in\A$, we have to prove that the following conditions are equivalent:
\begin{enumerate}
\item
$A(z) \neq \mu F(z)$; 
\item
$(x,z) \in \cc_A$ for some $x \in \V$;
\item
$(x,z) \in \flows_A$ for some $x \in \V$;
\item
$(z,z) \in \flows_A$;
\item 
$(z,y) \in \flows_A$ for some $y \in \V$.
\end{enumerate}
1. implies 2. by Proposition~\ref{muF:via:flows:2}(\ref{anticipated:lemmaGR:for:cc}), 2. implies 3. as $\cc_A \subseteq \flows_A$, 3. implies 1. by Proposition~\ref{muF:via:flows:1}(\ref{muF:via:flows:1:item1}). 
Moreover, 1. implies 4. by Proposition~\ref{muF:via:flows:2}(\ref{(z,z):in:flows}), 4. implies 5. trivially, and 5. implies 1. by Proposition~\ref{muF:via:flows:1}(\ref{muF:via:flows:1:item2}).
\end{proof}

The next result shows another striking difference between the flow relation and the now relation.
\begin{proposition}\label{char:xxINflows:xxINnow}
For $x \in \V$ and $A\in\A$, the following hold
\begin{enumerate}
\item \label{char:xxINflows:xxINnow:item1}
$(x,x) \in \flows_A$ if and only if  $A(x) \neq \mu F(x)$;
\item \label{char:xxINflows:xxINnow:item2}
$(x,x) \in \cc_A$ if and only if $A \neq F_x(A)$.
\end{enumerate}
\end{proposition}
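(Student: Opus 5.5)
Item (\ref{char:xxINflows:xxINnow:item1}) is already contained in \autoref{th:flowsLocalFix}: the last equivalence there states that $A(x) = \mu F(x)$ holds exactly when $(x,x) \notin \flows_A$. Taking the contrapositive gives (\ref{char:xxINflows:xxINnow:item1}) directly. Explicitly, the forward direction is Proposition~\ref{muF:via:flows:1}(\ref{muF:via:flows:1:item2}) with $y = x$, and the converse is Proposition~\ref{muF:via:flows:2}(\ref{(z,z):in:flows}) with $z = x$.

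For item (\ref{char:xxINflows:xxINnow:item2}), I would work straight from the definition of $\cc_A$. If $A = F_x(A)$, then $F_\pi(A) = F_\pi F_x(A) = F_{x\pi}(A)$ for every $\pi \in \V^*$. Hence no $\pi$ witnesses $(x,y) \in \cc_A$ for any $y$, and in particular $(x,x) \notin \cc_A$.

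Conversely, suppose $A \neq F_x(A)$. Since $F_x$ changes only the $x$-coordinate, this means $A(x) \neq F_x(A)(x)$. Choosing $\pi = \e$ in the definition of $\cc_A$ gives $F_\e(A)(x) = A(x) \neq F_x(A)(x) = F_{x\e}(A)(x)$, so $(x,x) \in \cc_A$.

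The only point needing care is that $A \neq F_x(A)$ is equivalent to $A(x) \neq f_x(A)$, because $F_x$ is the identity on all coordinates other than $x$. I do not expect any real obstacle: the whole content reduces to \autoref{th:flowsLocalFix} for the first item and to the choice of the empty string for the second.
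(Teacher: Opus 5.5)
Your proof is correct and matches the paper's argument: item (1) is read off from the equivalence of $A(x)\neq\mu F(x)$ and $(x,x)\in\flows_A$ in \autoref{th:flowsLocalFix}, which rests on Propositions~\ref{muF:via:flows:1} and~\ref{muF:via:flows:2}. For item (2), you use the empty string as the witness in one direction, and in the other you observe that $A = F_x(A)$ forces $F_\pi(A) = F_\pi F_x(A)$ for every $\pi$; the paper states this second direction as its contrapositive, noting that $F_\pi$ is a function.
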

\begin{proof}%

\eqref{char:xxINflows:xxINnow:item1}. This is the equivalence of conditions 1. and 4. from the above statement of \autoref{th:flowsLocalFix}.

\eqref{char:xxINflows:xxINnow:item2}.
Here we prove that the following conditions are equivalent:
\begin{enumerate}
\item
$A \neq F_x(A)$; %
\item
$(x,x) \in \cc_A$;
\item 
$(x,y) \in \cc_A$ for some $y \in \V$.
\end{enumerate}
If $A \neq F_x(A)$ then the empty path witnesses that $(x,x) \in \cc_A$, while 2. implies 3. is trivial.

Assume that $(x,y) \in \cc_A$ for some $y \in \V$, so there is $\pi \in \V^*$ such that $F_\pi(A)(y) \neq F_\pi F_x (A)(y)$. This means that $F_\pi(A) \neq F_\pi F_x (A)$, and so that also $A \neq F_x (A)$ as $F_\pi$ is a function.
\end{proof}

Regarding the oracles listed in \autoref{ex:EasyOracles}, $\mathit{ns}$ is the only one whose soundness is not self-evident. 
The oracle $\mathit{ns}$ is always available for every system of equations, and it depends solely on the assignment $A$.
As the name suggests, $\mathit{ns}$ states that only variables not currently ``stuck'' should be considered for local updates. 

\begin{proposition} \label{prop:not-stuckSound}
    The oracle $\mathit{ns}$ is sound.
\end{proposition}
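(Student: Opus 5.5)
To establish soundness of $\mathit{ns}$ we must show that, for every $A \in \A$ and every $R \subseteq \V \times \V$, $\cc_A \subseteq \mathit{ns}(A,R)$. The hypothesis $\cc_A \subseteq R$ plays no role, because $\mathit{ns}(A,R)$ does not depend on $R$. We therefore prove the stronger, unconditional inclusion $\cc_A \subseteq \mathit{ns}(A,R)$.

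Take an arbitrary pair $(x,y) \in \cc_A$. By definition of $\mathit{ns}$, membership of $(x,y)$ depends only on $x$: we need $A(x) \neq f_x(A)$, and then every $y$ qualifies. We argue by contraposition. Suppose $A(x) = f_x(A)$. Then $F_x(A)(x) = f_x(A) = A(x)$. Moreover, $F_x(A)(z) = A(z)$ for every $z \neq x$ by the definition of the partial update. Hence $F_x(A) = A$.

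It follows that for every $\pi \in \V^*$ we have $F_{x\pi}(A) = F_\pi F_x(A) = F_\pi(A)$. In particular $F_\pi(A)(y) = F_{x\pi}(A)(y)$ for every $y$, so $(x,y) \notin \cc_A$. This is exactly the implication $(x,y) \in \cc_A \Rightarrow A \neq F_x(A)$ already recorded in \autoref{char:xxINflows:xxINnow}(\ref{char:xxINflows:xxINnow:item2}), and we may cite that result directly. Since $A \neq F_x(A)$ is equivalent to $A(x) \neq f_x(A)$, we conclude $(x,y) \in \mathit{ns}(A,R)$.

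There is no real obstacle here. The only point requiring care is the observation that $A$ and $F_x(A)$ can differ only at coordinate $x$, so that $A \neq F_x(A)$ is the same as $A(x) \neq f_x(A)$.
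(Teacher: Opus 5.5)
Your proof is correct and takes the same approach as the paper: $(x,y) \in \cc_A$ forces $A \neq F_x(A)$, because otherwise $F_{x\pi}(A) = F_\pi(A)$ for every $\pi$, and $A \neq F_x(A)$ is equivalent to $A(x) \neq f_x(A)$. The paper gives exactly this one-line observation, and you are right that the hypothesis $\cc_A \subseteq R$ is never used.
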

\begin{proof}%
    The soundness of $\mathit{ns}$ follows by the observation that, by definition of $\cc_A$, if $(x,y) \in \cc_A$ then $A \neq F_x(A)$, \ie, $A(x) \neq f_x(A)$.
\end{proof}

\begin{proof}[of \autoref{thm:depKleene:correct}]
In the following, we use $v^{(i)}$ to denote the value of the variable $v$ at the $i$-th iteration of the while-loop (that is, $v^{(0)}$ is the value before entering the loop). For example, $A^{(0)} = \bot$, and by definition of $\A$ we have that $A^{(i)} \in \A$ for all $i \geq 0$ (see line~\ref{line:updateassign}). 

To prove termination, assume for a contradiction that $\id{todo}^{(i)} \neq \emptyset$, for all $i \geq 0$. We prove the following invariant, for all $i \geq 0$:
\begin{equation} \label{Main:text:eq:invariant}
       A^{(i)} \sqsubset A^{(i+1)}
       \lor 
       |\id{todo}^{(i+1)}| < |\id{todo}^{(i)}| %
       \,.
\end{equation}

Let $i \geq 0$. If the condition in line~\ref{li:increase} holds,  \ie, $A^{(i)}(x) \neq f_x(A^{(i)})$, then, by \autoref{th:partialKleene}\eqref{Kleene:intro:item1} and $A^{(i)} \in \A$, we conclude that $A^{(i)} \sqsubset A^{(i+1)}$ (see line~\ref{line:updateassign}). 
Otherwise, $A^{(i)} = A^{(i+1)}$ and $\id{todo}^{(i+1)} = \id{todo}^{(i)} \setminus \{ x \}$ for some $x \in todo^{(i)}$ extracted in line~\ref{li:extract}. Then $|\id{todo}^{(i+1)}| = |\id{todo}^{(i)}| - 1$.

Since $\D^\V$ is Noetherian, there exists $k$ such that $A^{(i+1)} =  A^{(i)}$ for all $i \geq k$. 
Since $\id{todo}^{(i)} \subseteq \V$ is finite for all $i$, now~\eqref{Main:text:eq:invariant} implies that $\id{todo}^{(h)} = \emptyset$ for some $h\geq 0$, contradicting the  assumption that $\id{todo}^{(i)} \neq \emptyset$ for all $i \geq 0$.

We prove that \proc{GlobalK} returns $\mu F(\tilde{x})$. Assume \proc{GlobalK} terminates after $h$ iterations (\ie, $\id{todo}^{(h)} = \emptyset$). We establish the following loop invariant by induction on $i \leq h$
\begin{equation}
\{y \mid (y,\tilde{x}) \in \cc_{A^{(i)}} \} \subseteq \id{todo}^{(i)} \, .
\label{eq:loopInv1}
\end{equation}
\begin{trivlist}
\item \textsc{Base Case} ($i = 0$). By construction, $\id{todo}^{(0)} \gets \{y \mid (y, \tilde{x}) \in \ora(A^{(0)}, \V \times \V) \}$ (see line~\ref{li:initTodo}). Since $\ora$ is sound we have $\{y \mid (y,\tilde{x}) \in \cc_{A^{(0)}} \} \subseteq \id{todo}^{(0)}$.
        
\item \textsc{Inductive Step}. Assume the inductive hypothesis holds for $i < h$. Let $x$ be the element extracted from $\id{todo}^{(i)}$ in line~\ref{li:extract}. 
We consider two cases. 
\begin{trivlist}
    \item If $A^{(i)}(x) = f_x(A^{(i)})$, then by definition $(x,\tilde{x}) \notin \cc_{A^{(i)}}$. Combining this with the inductive hypothesis, we have  
    $\{y \mid (y,\tilde{x}) \in \cc_{A^{(i)}} \} \subseteq \id{todo}^{(i)} \setminus \{x\}$. Additionally, $\id{todo}^{(i+1)} = \id{todo}^{(i)} \setminus \{x\}$ and $A^{(i+1)} = A^{(i)}$, so $\{y \mid (y,\tilde{x}) \in \cc_{A^{(i+1)}} \} \subseteq \id{todo}^{(i+1)}$.
            
    \item If $A^{(i)}(x) \neq f_x(A^{(i)})$, then $A^{(i+1)} = F_x(A^{(i)})$ and 
    $\id{todo}^{(i+1)} \gets \{y \mid (y, \tilde{x}) \in R \}$ for $R = \ora(A^{(i+1)}, \V \times \V)$ (see line~\ref{li:updatetodo}). Since $\ora$ is sound, $\{y \mid (y,\tilde{x}) \in \cc_{A^{(i+1)}} \} \subseteq \id{todo}^{(i+1)}$.
    \end{trivlist}        
\end{trivlist}
    
Finally, since $\id{todo}^{(h)} = \emptyset$, combining the loop invariant \eqref{eq:loopInv1} and \autoref{th:nowLocalFix}, we conclude that $A^{(h)}(\tilde{x}) = \mu F(\tilde{x})$.
\end{proof}

\begin{proof}[of \autoref{local:version:thm:depKleene:correct}]
In the following, we use $v^{(i)}$ to denote the value of the variable $v$ at the $i$-th iteration of the while-loop (that is, $v^{(0)}$ is the value before entering the loop). For example, $A^{(0)} = \bot$ and $\id{D}^{(0)} = \{ \tilde x\}$. By definition of $\A$, we have that $A^{(i)} \in \A$ for all $i \geq 0$ (see line~\ref{li:LocalKleene:AssUpdate}). 

To prove termination, assume for a contradiction that $\id{todo}^{(i)} \neq \emptyset$, for all $i \geq 0$.     
We prove the following invariant, for all $i \geq 0$:
\begin{equation} \label{eq:termination:localKleene}
    A^{(i)} \sqsubset A^{(i+1)}
       \lor
       \id{D}^{(i)} \subset \id{D}^{(i+1)}
       \lor %
       |\id{todo}^{(i+1)}| < |\id{todo}^{(i)}| %
       \,.
\end{equation}

Let $i \geq 0$. If line~\ref{li:LocalKleene:Increase} holds, \ie, either $A^{(i)}(x) \neq f_x(A^{(i)})$ or $\args(f_x) \not\subseteq\id{D}^{(i)}$, then we conclude respectively that either $A^{(i)} \sqsubset A^{(i+1)}$ (by \autoref{th:partialKleene}\eqref{Kleene:intro:item1}), or $\id{D}^{(i)} \subset\id{D}^{(i+1)}$ (see lines~\ref{li:LocalKleene:AssUpdate}--\ref{li:LocalKleene:DiscoveredUpdate}). 
Otherwise, $\id{todo}^{(i+1)} = \id{todo}^{(i)} \setminus \{ x \}$ for some $x \in todo^{(i)}$ extracted in line~\ref{li:LocalKleene:extract}, and so $|\id{todo}^{(i+1)}| = |\id{todo}^{(i)}| - 1$.

Since $\V$ is finite and $\D^\V$ is Noetherian, there is $k$ such that $A^{(i)} = A^{(i+1)}$ and $\id{D}^{(i)} = \id{D}^{(i+1)}$ for all $i \geq k$. 
Combined with \eqref{eq:termination:localKleene}, this implies that $\id{todo}^{(h)} = \emptyset$ for some $h\geq 0$, contradicting the  assumption that $\id{todo}^{(i)} \neq \emptyset$ for all $i \geq 0$.

To prove that $\proc{LocalK}(\E, \tilde{x}, \ora)$ returns $\mu F(\tilde{x})$ we %
first establish two more invariants:
\begin{enumerate}[label=(\alph*), ref=\alph*]
    \item \label{tildeXDiscovered} 
    $\tilde x \in \id{D}^{(i)}$ for all $i\geq 0$;
    \item \label{RisSound}
    $\flows_{A^{(i)}} \subseteq \id{R}^{(i)}$ for all $i\geq 0$.
\end{enumerate}
\eqref{tildeXDiscovered} is clear as $\tilde x \in \id{D}^{(0)}$, and $\id{D}^{(i)} \subseteq \id{D}^{(i+1)}$ for all $i\geq 0$ (see lines~\ref{li:LocalKleene:initialization}--\ref{li:LocalKleene:DiscoveredUpdate}). 
\\
\eqref{RisSound} 
Obviously $\flows_{A^{(0)}} \subseteq \V \times \V$, so $\flows_{A^{(0)}} \subseteq \ora(\id{V}^{(0)}, A^{(0)}, \V \times \V) = \id{R}^{(0)}$ by the soundness of $\ora$. Let $i \geq 0$, and assume the inductive assumption $\flows_{A^{(i)}} \subseteq \id{R}^{(i)}$. As $\flows_{A^{(i+1)}} \subseteq \flows_{A^{(i)}}$ by \autoref{flows:anti-monotonic}, we get $\flows_{A^{(i+1)}} \subseteq \id{R}^{(i)}$, and so the soundness of $\ora$ gives $\flows_{A^{(i+1)}} \subseteq \ora(\id{V}^{(i+1)}, A^{(i+1)}, \id{R}^{(i)})$. As %
$\id{R}^{(i+1)} = \ora(\id{V}^{(i+1)}, A^{(i+1)}, \id{R}^{(i)})$ by line~\ref{li:LocalKleene:RelationUpdate}, we have verified that $\flows_{A^{(i+1)}} \subseteq \id{R}^{(i+1)}$.

By the first part of the proof, assume $\proc{LocalK}$ terminates after $h$ iterations, \ie, $\id{todo}^{(h)} = \emptyset$. This occurs in one of the following two cases.

\textbf{Case 1.}
Either line~\ref{li:LocalKleene:initTodo} or line~\ref{li:LocalKleene:todoUpdate} sets
\[
\id{todo}^{(h)} = \big\{y \mid (y, \tilde{x}) \in \id{R}^{(h)} \big\} \cap \id{D}^{(h)}.
\]
From~\eqref{RisSound} %
we obtain that $\flows_{A^{(h)}} \subseteq \id{R}^{(h)}$, and so
\[
\{y \mid (y,\tilde{x}) \in \flows_{A^{(h)}} \} \cap \id{D}^{(h)} \subseteq \id{todo}^{(h)} = \emptyset.
\]
As $\tilde x \in \id{D}^{(h)}$ by~\eqref{tildeXDiscovered}, the above gives $(\tilde x, \tilde x) \notin \flows_{A^{(h)}}$.
By \autoref{char:xxINflows:xxINnow}\eqref{char:xxINflows:xxINnow:item1}, we conclude that $A^{(h)}(\tilde{x}) = \mu F(\tilde{x})$.

\textbf{Case 2.}
Otherwise, $\id{todo}^{(h)} = \emptyset$ for the repeated extractions of elements in line~\ref{li:LocalKleene:extract}, \ie, no element in $\id{todo}$ satisfies the condition in line \ref{li:LocalKleene:Increase}. Say there have been exactly $n$ consecutive extractions, with $0 < n \leq h$. %

Call $A = A^{(h-n)}$, $D = \id{D}^{(h-n)}$, $V = \id{V}^{(h-n)}$, $R = \id{R}^{(h-n)}$, and $\rest = \id{todo}^{(h-n)}$. The set $\rest$ had been set by either line~\ref{li:LocalKleene:initTodo} or line~\ref{li:LocalKleene:todoUpdate}, and so $\rest = \big\{y \mid (y, \tilde{x}) \in R \big\} \cap D$.
From~\eqref{RisSound} %
we obtain that $\flows_{A} \subseteq R$, and so
\begin{equation}\label{eq:soundness:todo:(h-n)}
\{y \mid (y,\tilde{x}) \in \flows_{A} \} \cap D \subseteq \rest.
\end{equation}

During each of these last $n$ iterations, the extraction of elements from $\rest$ does not change neither the set $D$, nor the assignment $A$, so for every $x \in \rest$ we have
\begin{equation}\label{rest:does not:change}
A(x) = f_x(A) \text{ and } \args( f_{x} ) \subseteq D.
\end{equation}
Fix $x \in \rest$, and let us verify that 
\[
\{y \mid (y,\tilde{x}) \in \flows_{A} \} \cap \args( f_{x} ) \subseteq \rest.
\]
Let $y \in \args( f_{x} )$ be such that $(y,\tilde{x}) \in \flows_{A}$. Then $y\in D$ by \eqref{rest:does not:change}, hence $y \in \rest$ by~\eqref{eq:soundness:todo:(h-n)}. 

By contradiction, assume that $A(\tilde{x}) \neq \mu F(\tilde{x})$. Then $(\tilde x, \tilde x) \in \flows_{A}$ by~\autoref{char:xxINflows:xxINnow}\eqref{char:xxINflows:xxINnow:item1}; moreover,~\eqref{tildeXDiscovered} gives $\tilde{x}\in D$, so~\eqref{eq:soundness:todo:(h-n)} implies that $\tilde{x}\in \rest$. This shows that $\rest$ is almost self-closed with respect to $\tilde{x}$ and $A$, so \autoref{thm:almost:self-closed} yields $A(\tilde{x}) = \mu F(\tilde{x})$, a contradiction.
\end{proof}

\begin{proof}[of \autoref{thm:almost:self-closed}]
Let $X \subseteq \V$ be an almost self-closed set with respect to $\tilde{x}$ and $A$.

\textbf{Step 1.} We first claim that for every $\pi \in \big( \{ y \mid (y,\tilde{x}) \in \flows_{ A } \} \big)^*$ we have 
\[
F_\pi(A)|_{ X \cup \{ y \mid (y, \tilde{x}) \notin \flows_A \} } = A|_{ X \cup \{ y \mid (y, \tilde{x}) \notin \flows_A \} }.
\]
We prove our claim by induction on the length of $\pi$. If $|\pi| = 0$, then $F_\pi(A) = A$ and there is nothing to prove.

Fix $\pi \in \big( \{ y \mid (y,\tilde{x}) \in \flows_{ A } \} \big)^*$ with $|\pi| > 0$, and $x \in X \cup \{ y \mid (y, \tilde{x}) \notin \flows_A \}$, for which we have to verify that $F_\pi(A)(x) = A(x)$. If $(x, \tilde{x}) \notin \flows_A$, the fact that $\pi \in \big( \{ y \mid (y,\tilde{x}) \in \flows_{ A } \} \big)^*$ guarantees that $x$ does not appear in $\pi$, and so $F_{\pi}(A)(x) = A(x)$. From now on assume $x \in X$. 

As $|\pi| > 0$, let $\pi = \pi' z$, for $\pi' \in \big( \{ y \mid (y,\tilde{x}) \in \flows_{ A } \} \big)^*$, and $(z,\tilde{x}) \in \flows_{ A }$. We have to check that $F_z F_{\pi'}(A)(x) = A(x)$. If $z \neq x$, then $F_z F_{\pi'}(A)(x) = F_{\pi'}(A)(x)$. As $x \in X$, the inductive assumption on $\pi'$ yields $F_{\pi'}(A)(x)= A(x)$, that proves $F_z F_{\pi'}(A)(x) = A(x)$. 

The final case to consider is when $z = x$, and then we have to prove that $F_x F_{\pi'}(A)(x) = A(x)$. By definition, this means checking that $f_x ( F_{\pi'}(A) ) = A(x)$. As $x \in X$ and $X$ is almost self-closed, it holds that $\args(f_x) \subseteq X \cup \{ y \mid (y, \tilde{x}) \notin \flows_A \}$, so we can apply the inductive assumption to $\pi'$ to obtain in particular that $F_{\pi'}(A)|_{ \args(f_x) } = A|_{ \args(f_x) }$. By definition of $\args(f_x)$, let $g$ be a function such that $f_x(B) = g(B|_{\args(f_x)})$ for every $B \in \A$. Then
\begin{align*}
f_x( F_{\pi'}(A) ) = g( F_{\pi'}(A)|_{\args(f_x)} ) = g( A|_{\args(f_x)} ) = f_x(A).
\end{align*}
Moreover, $f_x(A) = A$ as $x \in X$ and $X$ is almost self-closed. This concludes the proof of our claim. 

\textbf{Step 2.} We verify that $A(\tilde{x}) = \mu F(\tilde{x})$. %
As $A \in \A$, Theorem~\ref{th:partialKleene}\eqref{Kleene:intro:item2} ensures that there is $\pi \in \V^*$ such that $F_\pi ( A )(\tilde{x}) = \mu F(\tilde{x})$, 
and without loss of generality we can assume that $\pi$ has minimal length. Our goal is to prove that $\pi$ is empty. 

We first show that $\pi \in \big(\{ x \mid (x, \tilde{x}) \in \flows_{A} \} \big)^*$. %
If $\pi = \pi_1 x \pi_2$ for $\pi_1, \pi_2\in \V^*$ and $x \in \V$ such that $(x, \tilde{x}) \notin \flows_{A}$, then by definition we have
\[
F_{\pi_2} F_{\pi_1}( A )(\tilde{x}) = F_{\pi_2} F_x F_{\pi_1} ( A )(\tilde{x}) = F_{\pi} ( A )(\tilde{x}) = \mu F(\tilde{x}).
\]
The coincidence of the first and last members of the above equalities contradicts the fact that $\pi$ has minimal length. %
 
As $\tilde{x} \in X$ and $X$ is almost self-closed, it holds that $\args(f_{\tilde{x}}) \subseteq X \cup \{ y \mid (y, \tilde{x}) \notin \flows_A \}$. Then we can apply the first part of the proof to $\pi$, %
to obtain $F_\pi(A)|_{ \args(f_{\tilde{x}}) } = A|_{\args(f_{\tilde{x}})}$. By definition of $\args(f_{\tilde{x}})$, let $g$ be a function such that $f_{\tilde{x}}(B) = g(B|_{\args(f_{\tilde{x}})})$ for every $B \in \A$. Then
\[
f_{\tilde{x}}( F_\pi(A) ) = g(F_\pi(A)|_{\args(f_{\tilde{x}})}) = g(A|_{\args(f_{\tilde{x}})}) = f_{\tilde{x}}( A ).
\]
As $\tilde{x} \in X$ and $X$ is almost self-closed we have $f_{\tilde{x}}( A ) = A(\tilde{x})$. On the other hand, $F_\pi ( A )(\tilde{x}) = \mu F(\tilde{x})$ gives $f_{\tilde{x}}( F_\pi(A) ) = \mu F(\tilde{x})$. These yield $A(\tilde{x}) = \mu F(\tilde{x})$. %
\end{proof}

\subsection{Missing proofs from Section~\ref{sec:oracles}.}

\begin{proof}[of \autoref{prop:oracles:properties}]
The first statement is immediate.

When $\ora,\ora'$ are sound, it is very easy to verify that $\ora \cap \ora'$ is sound.
To show that $\ora \circ \ora'$ is sound, fix $R \subseteq \V \times \V$ such that $\cc_A \subseteq R$. From the soundness of $\ora'$ and $\ora$, we obtain in turn that $\cc_A \subseteq \ora'(A,R)$ and $\cc_A \subseteq \ora(A,\ora'(A,R))$.
\end{proof}

For the proof of \autoref{lem:soundnessMaxL/R}, we need a technical result first.
\begin{lemma}\label{remark:finally:used}
Let $\pi \in \V^*$ of minimal length such that $F_{\pi}(A)(y) \neq F_{\pi} F_x (A) (y)$.
Then $x\neq y$ if and only if $\pi$ is non-empty, and in this case, $\pi$ ends with $y$.
\end{lemma}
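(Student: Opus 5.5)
The plan is to prove the two claims directly from the definitions of the partial updates $F_z$, with no appeal to the oracle machinery. Fix $A \in \A$ and let $\pi$ be of minimal length such that $F_\pi(A)(y) \neq F_\pi F_x(A)(y)$.

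\emph{Characterising the empty case.} If $\pi$ is empty, the hypothesis reads $A(y) \neq F_x(A)(y)$. Since $F_x$ changes only the $x$-coordinate, this forces $y = x$. Conversely, suppose $x = y$. Write $B = F_{\pi'}(A)$ and $B' = F_{\pi'}F_x(A)$, and assume $\pi$ is non-empty, with $\pi = \pi' z$.
\begin{itemize}
\item If $z \neq y$, then $F_z$ does not touch $y$. Hence $F_{\pi'}(A)(y) \neq F_{\pi'}F_x(A)(y)$, contradicting minimality.
\item So $z = y = x$. Then minimality gives $F_{\pi'}(A)(x) = F_{\pi'}F_x(A)(x)$, i.e.\ $B(x) = B'(x)$, while $f_x(B) \neq f_x(B')$.
\end{itemize}
Nothing yet forbids this, so the claim ``$x = y$ implies $\pi$ empty'' needs a monotonicity argument.

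\emph{Monotonicity argument.} Since $A \in \A$, Theorem~\ref{th:partialKleene}\eqref{Kleene:intro:item1} gives $A \sqsubseteq F_x(A)$. Because every $F_w$ is monotone, this yields $F_{\rho}(A) \sqsubseteq F_\rho F_x(A)$ for every $\rho$. If $\pi$ is non-empty and $x = y$, minimality gives $A(x) = F_x(A)(x)$, i.e.\ $A = F_x(A)$. Then $F_\pi(A) = F_\pi F_x(A)$, contradicting the choice of $\pi$. So the ``only if'' direction is in fact trivial, and I expect the writeup to run as follows:
\begin{itemize}
\item For $x = y$ with $\pi$ non-empty, minimality applied to the empty string gives $A(y) = F_x(A)(y)$, hence $A = F_x(A)$.
\item Then $F_\pi(A) = F_\pi F_x(A)$, a contradiction, so $\pi$ is empty.
\item For the other direction: $\pi$ empty forces $x = y$, as shown above.
\end{itemize}

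\emph{Last letter when $x \neq y$.} Here $\pi$ is non-empty; write $\pi = \pi' z$. If $z \neq y$, then $F_\pi(\cdot)(y) = F_{\pi'}(\cdot)(y)$, so $\pi'$ would be a shorter witness, contradicting minimality. Hence $z = y$, i.e.\ $\pi$ ends with $y$.

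The only mildly delicate point is spotting that minimality applied to the empty prefix already gives $A = F_x(A)$ when $x = y$. Once this is seen, no appeal to monotonicity is needed; everything else is coordinate bookkeeping for the partial updates $F_z$.
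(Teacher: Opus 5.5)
Your proof is correct and is essentially the paper's argument: coordinate bookkeeping for the partial updates plus minimality of $\pi$. The one small difference is that you peel off only the last letter of $\pi$ instead of first using $x\neq y$ to show that $y$ occurs in $\pi$ and then isolating its last occurrence, and your version works for any non-empty minimal witness. Your step ``if $x=y$ and $\pi$ is non-empty, the empty string is not a witness, so $A=F_x(A)$ and hence $F_\pi(A)=F_\pi F_x(A)$'' proves the direction $x=y\Rightarrow\pi$ empty. The paper's written proof never establishes this (both of its paragraphs prove $x\neq y\Rightarrow\pi$ non-empty), although \autoref{prop:pairsINflows} relies on it. In a final write-up, drop the inconclusive first case split and the unused monotonicity detour, and label that step as the ``if'' direction rather than the ``only if'' direction.
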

\begin{proof}
If $\pi$ is empty, then our assumption becomes $A(y) \neq F_x(A)(y)$, which implies $x=y$.

Now assume $x \neq y$. %
First, $\pi$ must contain $y$, otherwise 
$F_{\pi}(A)(y) = A(y) = F_{\pi} F_x (A) (y)$, that is excluded by our assumption. So, in particular $\pi$ is not empty. Let $\pi = \pi' y \pi''$, with $y$ not appearing in $\pi''$ (that is, we have isolated the last instance of $y$ in $\pi$, and so $F_{\pi''}$ does not change the $y$-th coordinate). Then the assumption gives 
\begin{align*}
F_y F_{\pi'} (A)(y) = F_{\pi''} F_y F_{\pi'} (A)(y) \neq F_{\pi''} F_y F_{\pi'} F_x(A)(y) = F_y F_{\pi'} F_x(A)(y).
\end{align*}
The first and the last members above show that $F_y F_{\pi'} (A)(y) \neq F_y F_{\pi'} F_x(A)(y)$. By minimality of $\pi$ we must have $\pi = \pi'y$.
\end{proof}

As an application of \autoref{remark:finally:used}, the following result more directly characterizes the membership $(x,y) \in \cc_A$ in terms of the right-hand side of the equation $y = f_y$ associated with the variable that gets influenced.
\begin{proposition}\label{prop:pairsINflows}
For every $A\in\A$, we have $(x,y) \in \cc_A$ if and only if either one of the following holds:
\begin{enumerate}
\item \label{prop:pairsINflows:=}
$x = y$ and $A(y) \neq f_y(A)$;
\item \label{prop:pairsINflows:neq}
$x \neq y$ and $f_y ( F_{\pi}(A) ) \neq f_y ( F_{\pi} F_x(A) )$ for some $\pi \in \V^*$.
\end{enumerate}
\end{proposition}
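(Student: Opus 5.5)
The plan is to prove the two implications separately, using \autoref{remark:finally:used} to move from an arbitrary witness to a witness of minimal length and then to peel off its last letter.

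For the forward direction, assume $(x,y) \in \cc_A$. Then there is some $\pi$ with $F_\pi(A)(y) \neq F_\pi F_x(A)(y)$, and among all such strings we choose one, $\pi_0$, of minimal length. We split on whether $x$ equals $y$.
\begin{itemize}
\item If $x = y$, then $(x,x)\in\cc_A$, so \autoref{char:xxINflows:xxINnow}(\ref{char:xxINflows:xxINnow:item2}) gives $A \neq F_x(A)$. Since $F_x$ only alters the $x$-coordinate, this is $A(y) \neq f_y(A)$, which is item~\ref{prop:pairsINflows:=}.
\item If $x \neq y$, then \autoref{remark:finally:used} says $\pi_0$ is non-empty and ends with $y$, so we write $\pi_0 = \pi y$. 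Evaluating both sides at $y$ gives
\[
F_y F_\pi(A)(y) = f_y(F_\pi(A)) \quad\text{and}\quad F_y F_\pi F_x(A)(y) = f_y(F_\pi F_x(A)).
\]
The defining inequality of $\pi_0$ therefore becomes $f_y(F_\pi(A)) \neq f_y(F_\pi F_x(A))$, which is item~\ref{prop:pairsINflows:neq}.
\end{itemize}

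For the backward direction, we again treat the two items separately.
\begin{itemize}
\item Under item~\ref{prop:pairsINflows:=}, we have $A(y) \neq f_y(A) = F_y(A)(y)$ with $x = y$. The empty string then witnesses $(x,y)\in\cc_A$ directly from the definition.
\item Under item~\ref{prop:pairsINflows:neq}, take the string $\pi y$. We compute
\[
F_{\pi y}(A)(y) = f_y(F_\pi(A)) \quad\text{and}\quad F_{x\pi y}(A)(y) = F_y F_\pi F_x(A)(y) = f_y(F_\pi F_x(A)).
\]
These two values differ by hypothesis, so $\pi y$ witnesses $(x,y)\in\cc_A$.
\end{itemize}

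The only step that needs care is reducing an arbitrary witness to one ending in $y$. This is exactly what \autoref{remark:finally:used} provides, once we have fixed a witness of minimal length. Everything else is unfolding the definitions of $F_x$ and of composition order. The one pitfall is to keep in mind that $F_{x\pi}$ means "apply $F_x$ first", i.e.\ $F_{x\pi}(A) = F_\pi F_x(A)$.
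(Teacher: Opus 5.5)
Your proof is correct and follows the paper's argument: take a witness of minimal length, use \autoref{remark:finally:used} to write it as $\pi y$ when $x \neq y$, and unfold $F_y$. The differences are that you handle $x = y$ through \autoref{char:xxINflows:xxINnow}(\ref{char:xxINflows:xxINnow:item2}) instead of the lemma's claim that the minimal witness is then empty, and that you explicitly verify the converse implication, which the paper leaves implicit; the first choice is slightly more robust, because the lemma's proof only argues the direction ``$\pi$ empty implies $x = y$''.
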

\begin{proof}
Let $\pi \in \V^*$ be such that $F_{\pi}(A)(y) \neq F_{\pi} F_x (A) (y)$. Without loss of generality, assume $\pi$ to be of minimal length satisfying the latter inequality.

If $x = y$, then Lemma \ref{remark:finally:used} yields that $\pi$ is empty, and so 
$A(y) \neq F_y(A)(y)$. As $F_y(A)(y) = f_y(A)$, we conclude this case.

If $x \neq y$, then Lemma \ref{remark:finally:used} yields that $\pi$ has the form $\pi = \pi'y$, and so $F_y F_{\pi'} (A)(y) \neq F_y F_{\pi'} F_x(A)(y)$. The latter means exactly that $f_y ( F_{\pi'}(A) ) \neq f_y ( F_{\pi'} F_x(A) )$.
\end{proof}

As their names suggest, $\mathit{max}$ is more accurate than $\mathit{smax}$.
\begin{lemma}\label{max:less:than:smax}
    $\mathit{max} \preceq \mathit{smax}$.
\end{lemma}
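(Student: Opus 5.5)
We need to show $\mathit{max}(A,R) \subseteq \mathit{smax}(A,R)$ for every $A \in \A$ and $R \subseteq \V \times \V$. The plan is a direct pointwise check: fix $(x,y) \in \mathit{max}(A,R)$ and show that both $A(x)$ and $A(y)$ are not maximal. The definition of $\mathit{max}$ gives two cases, according to which of its two components contains $(x,y)$.

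\emph{Case $x=y$ and $A(x) \neq f_x(A)$.} By \autoref{th:partialKleene}\eqref{Kleene:intro:item1}, since $A \in \A$ we have $A \sqsubseteq F_x(A)$, so $A(x) \sqsubseteq f_x(A)$. Together with $A(x) \neq f_x(A)$ this gives the strict inequality $A(x) \sqsubset f_x(A)$. Hence $A(x)$ is not maximal. Since $y=x$, the same holds for $A(y)$, so $(x,x) \in \mathit{smax}(A,R)$.

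\emph{Case $A(x)$ and $f_y(A)$ not maximal.} The condition on $A(x)$ is already what $\mathit{smax}$ requires, so it remains to show that $A(y)$ is not maximal. Again by \autoref{th:partialKleene}\eqref{Kleene:intro:item1}, $A(y) \sqsubseteq F_y(A)(y) = f_y(A)$. Suppose $A(y)$ were maximal. Then $A(y) \sqsubseteq f_y(A)$ would force $f_y(A) = A(y)$, making $f_y(A)$ maximal, which contradicts the hypothesis. So $A(y)$ is not maximal, and $(x,y) \in \mathit{smax}(A,R)$.

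No step is a real obstacle. The only point to get right is using the extensivity of partial updates on $\A$ (\autoref{th:partialKleene}\eqref{Kleene:intro:item1}) to transfer non-maximality from $f_y(A)$ back to $A(y)$, and from the strict increase to $A(x)$. This step depends on $A \in \A$, not on $A$ being an arbitrary assignment.
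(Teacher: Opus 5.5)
Your proposal is correct and follows essentially the same argument as the paper: the same two-case split on the definition of $\mathit{max}$, with extensivity of partial updates on $\A$ (\autoref{th:partialKleene}\eqref{Kleene:intro:item1}) used to get $A(x) \sqsubset f_x(A)$ in the first case and to transfer non-maximality from $f_y(A)$ back to $A(y)$ in the second. The only difference is that the paper phrases the first case as a contradiction, whereas you argue it directly.
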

\begin{proof}%
Let $A \in \A$ and $R \subseteq \V \times \V$. To show that $\mathit{max}(A, R) \subseteq \mathit{smax}(A, R)$, fix $(x,y) \in \mathit{max}(A,R)$. We distinguish two cases:
\begin{description}%
    \item[Case 1]: $y=x$ and $A(x) \neq f_x(A)$. To conclude that $(x,x) \in \mathit{smax}(A,R)$, we have to verify that $A(x)$ is not maximal. By contradiction, assume that 
    $A(x)$ is maximal. Then, by \autoref{th:partialKleene}\eqref{Kleene:intro:item1} 
    we have $A(x) \sqsubseteq F_x(A)(x)$. As $A(x)$ is maximal and $F_x(A)(x) = f_x(A)$, we get $A(x) = f_x(A)(x)$, a contradiction.

    \item[Case 2]: $A(x)$ and $f_y(A)$ are not maximal. To conclude
    $(x,y) \in \mathit{smax}(A,R)$, we verify that $A(y)$ is not maximal. 
    For the sake of finding a contradiction, assume $A(y)$ is maximal. Then, similarly to the previous case, we can deduce that $A(y) = F_y(A)(y)$. 
    By definition of $F$, we have $F_y(A)(y) = f_y(A)$. Thus also $f_y(A)$ is maximal, a contradiction.%
\end{description}
\end{proof}

\begin{proof}[of \autoref{lem:soundnessMaxL/R}]
The soundness of $\mathit{max}_r$ follows by showing that for every $A \in \A$ the following inclusions hold:
\begin{enumerate}[label=\emph{\alph*}., ref=\emph{\alph*}]
    \item $\{ (x,y) \in \cc_A \mid x = y \} \subseteq \{ (x,x) \mid A(x) \neq f_x(A) \}$; \label{1_MaxL/R}
    \item $\{ (x,y) \in \cc_A \mid x \neq y \} \subseteq \{ (x,y) \mid x \neq y \text{ and $f_y(A)$ not maximal} \}$. \label{3_MaxL/R}
\end{enumerate}

The inclusion \eqref{1_MaxL/R} follows directly from \autoref{prop:pairsINflows}\eqref{prop:pairsINflows:=}. 
For the inclusion \eqref{3_MaxL/R}, let $(x,y) \in \cc_A$ with $x \neq y$.
By \autoref{prop:pairsINflows}\eqref{prop:pairsINflows:neq} 
there exists a sequence $\pi \in \V^*$ such that $f_y(F_\pi(A)) \neq f_y(F_\pi F_x(A))$. By the monotonicity of $f_y$ and \autoref{th:partialKleene}\eqref{Kleene:intro:item1}, we have $f_y(A) \sqsubseteq f_y(F_\pi(A)) \sqsubset f_y(F_\pi F_x(A))$, so $f_y(A)$ is not maximal.

In view of \eqref{1_MaxL/R}, to establish the soundness of $\mathit{max}_l$ it suffices to show that for every $A \in \A$ the following inclusion holds:
\begin{enumerate}%
    \item[] $\{ (x,y) \in \cc_A \mid x \neq y \} \subseteq \{ (x,y) \mid x \neq y \text{ and $A(x)$ not maximal} \}$. \label{nuova3_MaxL/R}
\end{enumerate}
Assume $(x,y) \in \cc_A$ with $x \neq y$. By definition of $\cc_A$ we have $A \neq F_x(A)$. Then, by \autoref{th:partialKleene}\eqref{Kleene:intro:item1}, it follows $A(x) \sqsubset F_x(A)(x)$, which implies that $A(x)$ is not maximal.

Let $\mathit{max}' = \mathit{max}_l \cap \mathit{max}_r$. Then $\mathit{max}'$ is sound by \autoref{prop:oracles:properties}\eqref{operation-closure} and the first part of the proof. It is immediate to see that $\mathit{max}' \preceq \mathit{max}$, while $\mathit{max} \preceq \mathit{smax}$ by Lemma \ref{max:less:than:smax}. So $\mathit{max}$ and $\mathit{smax}$ are sound by Proposition \ref{prop:oracles:properties}\eqref{upclosure}.  
\end{proof}

\begin{example}\label{ex:max'<max}
    We remark that the oracle $\mathit{max}_l \cap \mathit{max}_r$
    is strictly more accurate than $\mathit{max}$.
    Indeed, consider the systems of equations from \autoref{ex:locality}.
    For the assignment $F_z(\bot)$ and an arbitrary $R \subseteq \V \times \V$ we have:
    \begin{equation*}
        (\mathit{max}_l \cap \mathit{max}_r)(F_z(\bot), R)
        = \{(x,x), (x,y), (w,y), (w,w) \} 
        \subsetneq 
        \mathit{max}(F_z(\bot), R) \,.
    \end{equation*}
\end{example}

\begin{proof}[of \autoref{improved:lemmaGR}]
We prove the two implications separately.

($\Leftarrow$) 
We first show by induction on $|\pi|$ that our assumption yields
\begin{equation}\label{eq:improved:lemmaGR}
t(A) = t ( F_\pi (A) ) \text{ for every } \pi \in \V^*.
\end{equation}
\begin{trivlist}
\item[\textsc{Base Case.}] $\pi=\varepsilon$, then $F_\pi$ is the identity and $t( F_\pi (A) ) = t (A)$.
\item[\textsc{Inductive Step.}] Let $\pi = y \pi' \in \V^*$. By inductive hypothesis $t(A) = t ( F_{\pi'}(A) )$, and since $(y,t)\notin \hc_A$, we also conclude $t ( F_{\pi'}(A) ) = t ( F_{\pi'} F_y (A) ) = t ( F_{y\pi'} (A) )$. Thus,~\eqref{eq:improved:lemmaGR} holds for $\pi = y \pi'$.
By~\autoref{th:partialKleene}\eqref{Kleene:intro:item2}, there is $\pi \in \V^*$ such that $\mu F = F_\pi (A)$, and then~\eqref{eq:improved:lemmaGR} implies that $t(A) = t ( F_\pi (A) ) = t ( \mu F)$.

($\Rightarrow$)
Let $A \in \A$ such that $t(A) = t(\mu F)$. 
By~\autoref{th:partialKleene}, for all $\pi \in \V^*$ we have $A \sqsubseteq F_\pi(A) \sqsubseteq \mu F$. Since $t$ has monotonic interpretation, 
 $   t(A) \sqsubseteq t(F_\pi(A)) \sqsubseteq t(\mu A)$.
In combination with $t(A) = t(\mu F)$, we get that $t(A) = t(F_\pi(A))$, for all $\pi\in\V^*$.
Consequently, $t(F_\pi(A)) = t(A) = t(F_{x\pi}(A)) = t(F_{\pi}F_x(A))$,
for all $x \in \V$ and $\pi\in\V^*$.
By definition of $\hc_A$ we conclude that $(x,t)\notin\hc_A$ for every $x\in\V$.%
\end{trivlist}
\end{proof}

\begin{proof}[of \autoref{prop:sound:oracles:from:liftings}]
Fix $A \in \A$ and $R \subseteq \V \times \V$ such that $\cc_A \subseteq R$. We verify that $\cc_A \subseteq \ora[\lift](A, R)$. To this end consider a pair $(x,y) \in \cc_A$.

If $x = y$, then $A \neq F_x(A)$ by~\autoref{char:xxINflows:xxINnow}\eqref{char:xxINflows:xxINnow:item2}, and so $(x,x) \in \ora[\lift](A, R)$.

If $x \neq y$, then from $(x,y) \in \cc_A$ and \autoref{prop:pairsINflows}\eqref{prop:pairsINflows:neq} there is $\pi \in \V^*$ such that  $t_y ( F_{\pi} (A) ) \neq t_y ( F_{\pi} F_x(A) )$, and so $(x,t_y) \in \hc_A$. By soundness of $\lift$, we have $x \in \lift(A, R)(t_y)$. Then $(x,y) \in \ora[\lift](A,R)$ by definition.
\end{proof}

The next result formalizes the idea that when a variable $x$ has an influence on a term $t$, it is because $x$ has an influence on some subterm of $t$ (and so, on some variable appearing in $t$).
\begin{lemma}\label{property:of:hc}
Let $A \in \A$. If $(x,t) \in \hc_A$, then $(x,y) \in \cc_A$ for some $y \in \var(t)$.
\end{lemma}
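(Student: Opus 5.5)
Since $(x,t)\in\hc_A$, we fix $\pi'\in\V^*$ with $t(F_{\pi'}(A))\neq t(F_{x\pi'}(A)) = t(F_{\pi'}F_x(A))$. The interpretation of $t$ depends only on the values of the variables in $\var(t)$: it is the homomorphic extension of the assignment, so two assignments agreeing on $\var(t)$ give the same value of $t$. This is a routine structural induction on $t$, or an instance of the fact that $\args(t)\subseteq\var(t)$. Therefore the two assignments $F_{\pi'}(A)$ and $F_{\pi'}F_x(A)$ must differ on some $y\in\var(t)$, that is,
\[
F_{\pi'}(A)(y)\neq F_{\pi'}F_x(A)(y).
\]

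This inequality is exactly the defining condition for $(x,y)\in\cc_A$ from \autoref{def:flow}, witnessed by $\pi=\pi'$. Hence $(x,y)\in\cc_A$ with $y\in\var(t)$, as claimed.

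Nothing here is a genuine obstacle. The only step needing care is the one stating that terms depend only on their variables. If one prefers to avoid it, one can instead argue by structural induction on $t$, in the same way as the proof of \autoref{prop:vars:ext:sound}:
\begin{itemize}
\item In the base case $t=y$, the pair $(x,y)$ lies in $\hc_A$, which coincides with $\cc_A$ on variables.
\item In the case $t=\sigma(t_1,\dots,t_n)$, if every $t_i$ took equal values on the two assignments, then so would $t$, since $\sigma$ is a function. So some $(x,t_i)\in\hc_A$ with the same witness $\pi'$, and the induction hypothesis gives the required $y\in\var(t_i)\subseteq\var(t)$.
\item For a constant, no pair $(x,t)$ lies in $\hc_A$, so the case is vacuous.
\end{itemize}
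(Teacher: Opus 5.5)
Your proof is correct, and your fallback structural induction is essentially the paper's proof: the base case uses that $\hc_A$ agrees with $\cc_A$ on variables, and the inductive step pushes the inequality down to some $t_i$ because $\sigma$ is a function. Your primary argument packages that same induction as the standard fact that a term's value depends only on the assignment restricted to $\var(t)$, after which the original witness $\pi'$ directly certifies $(x,y)\in\cc_A$; this is a slightly cleaner factoring of the same idea rather than a different route.
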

\begin{proof}
    We proceed by induction on the structure of the term $t$. 
    \begin{trivlist}
    \item \textsc{Base Case} ($t = y \in \V$). %
    It immediately follows from the fact that $\hc_A$ coincides with $\cc_A$ on the variables.
    \item \textsc{Inductive Step} ($t = \sigma(t_1,\twodots,t_n)$). 
    Assume $(x, t) \in \hc_A$ then the following hold:
    \begin{align*}
        &(x, t) \in \hc_A \iff \exists \pi'.\, t(F_{\pi'}(A)) \neq t(F_{x\pi'}(A)) \tag{def. $\hc_A$}\\
        &\iff \exists \pi'.\, \sigma_I(t_1(F_{\pi'}(A)),\twodots, t_n(F_{\pi'}(A))) \neq \sigma_I(t_1(F_{x\pi'}(A)),\twodots, t_n(F_{x\pi'}(A))) \\
        &\implies \exists i \in \{1\twodots n \} .\, \exists \pi'.\, t_i(F_{\pi'}(A)) \neq t_i(F_{x\pi'}(A)) \tag{$\sigma_I$ is a function} \\
        &\iff \exists i \in \{1\twodots n \} .\, (x, t_i) \in \hc_A \tag{def. $\hc_A$} \\
        &\implies \exists i \in \{1\twodots n \} .\, \exists y \in \var(t_i) .\, (x, y) \in \cc_A \tag{ind. hp} \\
        &\iff \exists y \in \var(t) .\, (x, y) \in \cc_A. \tag{as $\var(t) = \bigcup_{i = 1}^n \var(t_i)$}
    \end{align*}
    \end{trivlist}  
\end{proof}

\begin{proof}[of \autoref{lem:extension:closure:prop}]
Let $A \in \A$ and $R \subseteq \V \times \V$ be such that $\cc_A \subseteq R$. 

We first verify that $\lift \cap \lift'$ is sound. And indeed 
\begin{align*}
\hc_A 
&\subseteq \{ (x,t) \mid x \in \lift(A, R)(t) \}
        \cap \{ (x,t) \mid x \in \lift(A, R)(t) \}
\tag{$\lift,\lift'$ sound} \\
& = \{ (x,t) \mid x \in (\lift\cap\lift')(A, R)(t) \}\,,
\end{align*}
which corresponds to the statement of soundness for $\lift\cap\lift'$.

To check that $\argUnion{\lift}$ is sound, we fix $(x,t)\in\hc_A$, and we verify that $x \in \argUnion{\lift}(A, R)(t)$. By \autoref{property:of:hc}, there is $y \in \var(t)$ such that $(x,y) \in \cc_A$, and so $(x,y) \in \hc_A$. From the assumption that $\cc_A \subseteq R$ and that $\lift$ is sound we obtain that $x \in \lift(A, R)(y)$, and so $x \in \argUnion{\lift}(A, R)(t)$ by definition.
\end{proof}

\begin{proof}[of \autoref{prop:Boolean:Ext:sound}]
Let $A \in \A$ and $R \subseteq \V \times \V$ be such that $\cc_A \subseteq R$. To prove $\mathit{bool}$ is a sound extension we show $\hc_A \subseteq \{ (x,t) \mid x \in \mathit{bool}(A,R)(t) \}$. Writing down the definition of $\hc_A$, this entails checking that %
    \begin{equation}\label{bool:oracle:fundamental:prop}
        \forall t \in \T. \forall x \in \V. \,
        \big( \exists \pi \in \V^* .
        t(F_\pi (A)) \neq t(F_\pi F_x(A)) 
        \Rightarrow x \in \mathit{bool}(A,R)(t) \big).
    \end{equation}
We prove \eqref{bool:oracle:fundamental:prop} by structural induction on $t \in \T$.
\begin{trivlist}
\item \textsc{Base Case} ($t = y \in \V$). Fix $x \in \V$ and $\pi \in \V^*$ such that $t(F_\pi (A)) \neq t(F_\pi F_x(A))$, \ie, $F_\pi (A)(y) \neq F_\pi F_x(A)(y)$. This means that $(x,y) \in \cc_A \subseteq R$ and that $F_\pi (A)(y) = \mathbf{ff}$. As $A(y) \sqsubseteq F_\pi (A)(y)$ by \autoref{th:partialKleene}\eqref{Kleene:intro:item1}, we obtain that $A(y) = \mathbf{ff}$, so by definition $x \in \mathit{bool}(A,R)(y)$.

\item \textsc{Base Case} (either $t = \mathbf{tt}$ or $t = \mathbf{ff}$).
The implication \eqref{bool:oracle:fundamental:prop} holds trivially because $t$ is interpreted as a constant map.

\item \textsc{Inductive Step} ($t = t_0 \vee t_1$). Fix $x \in \V$ and $\pi \in \V^*$ such that
    \begin{equation*}
        (t_0 \vee t_1)(F_\pi (A)) \neq (t_0 \vee t_1)(F_\pi F_x(A)) \,.
    \end{equation*}
Then we obtain that $(t_0 \vee t_1)(F_\pi (A)) = \mathbf{ff}$ and $(t_0 \vee t_1)(F_\pi F_x(A)) = \mathbf{tt}$, so
\begin{gather}
t_0 (F_\pi(A)) \vee t_1(F_\pi(A)) = \mathbf{ff},		\label{bool:eq3}
\\
t_0 (F_\pi F_x(A)) \vee t_1(F_\pi F_x(A)) = \mathbf{tt}.		\label{bool:eq4}
\end{gather}
Now \eqref{bool:eq3} implies that $t_0 (F_\pi(A)) = t_1(F_\pi(A)) = \mathbf{ff}$. For $j \in \{0,1\}$, \autoref{th:partialKleene}\eqref{Kleene:intro:item1} and the interpretation of $t_j$ being monotone give $t_j(A) \sqsubseteq t_j( F_\pi(A) ) = \mathbf{ff}$, and so %
\begin{equation}\label{eq:bool:final}
t_0(A) = \mathbf{ff} = t_1(A).
\end{equation}

On the other hand, by \eqref{bool:eq4} there is $i \in \{0,1\}$ such that $t_i(F_\pi F_x(A)) = \mathbf{tt}$. So $t_i (F_\pi(A)) = \mathbf{ff} \neq \mathbf{tt} = t_i(F_\pi F_x(A))$ shows that $(x,t_i) \in \hc_A$, and the inductive assumption gives that 
\begin{equation*}
x \in \mathit{bool}(A,R)(t_i).
\end{equation*} 
The latter and \eqref{eq:bool:final} yield that $x \in \mathit{bool}(A,R)(t_0 \vee t_1)$ by definition.

\item \textsc{Inductive Step} ($t = t_0 \wedge t_1$). Fix $x \in \V$ and $\pi \in \V^*$ such that
\begin{gather}
t(F_\pi (A)) \neq t(F_\pi F_x(A)). \label{eq:BoolOracle:new}
\end{gather}
The above means that $t_0 (F_\pi (A))\wedge t_1(F_\pi (A)) \neq t_0(F_\pi F_x(A)) \wedge t_1(F_\pi F_x(A))$, so there is $i \in \{0,1\}$ such that $t_i(F_\pi(A)) \neq  t_i(F_\pi F_x(A))$, and then the inductive assumption gives 
\begin{equation}\label{eq:bool:0}
x \in \mathit{bool}(A,R)(t_i).
\end{equation}
Moreover, $t_i(F_\pi(A)) \neq  t_i(F_\pi F_x(A))$ also implies that $t_i(F_\pi(A)) = \mathbf{ff}$. \autoref{th:partialKleene}\eqref{Kleene:intro:item1} and the interpretation of $t_i$ being monotone give $t_i(A) \sqsubseteq t_i( F_\pi(A) ) = \mathbf{ff}$, so
\begin{equation}\label{eq:bool:feb:03}
t_i(A) = \mathbf{ff}.
\end{equation}

Let $j \neq i$ in $\{0,1\}$. If $t_j(A) = \mathbf{tt}$, then the latter, \eqref{eq:bool:0} and \eqref{eq:bool:feb:03} give that $x \in \mathit{bool}(A,R)(t_0 \wedge t_1)$ by definition and we conclude. Otherwise, $t_j(A) = \mathbf{ff}$. Again in view of \eqref{eq:bool:0} and \eqref{eq:bool:feb:03}, to deduce that $x \in \mathit{bool}(A,R)(t_0 \wedge t_1)$ by definition we only need to check that $\mathit{bool}(A,R)(t_j) \neq \emptyset$. 

By contradiction, assume instead that $\mathit{bool}(A,R)(t_j) = \emptyset$. Then the inductive assumption gives that $(z,t_j)\notin \hc_A$ for every $z\in\V$, so Theorem \ref{improved:lemmaGR} yields that $t_j( \mu F) = t_j(A) = \mathbf{ff}$. As $t = t_0 \wedge t_1$, also $t( \mu F ) = \mathbf{ff}$. 
\autoref{th:partialKleene}\eqref{Kleene:intro:item1} and the interpretation of $t$ being monotone give $t( F_\pi A) \sqsubseteq t( \mu F ) = \mathbf{ff}$, and so $t( F_\pi A) = \mathbf{ff}$. Similarly, also $t( F_\pi F_x A) \sqsubseteq t( \mu F ) = \mathbf{ff}$, and so $t( F_\pi F_x A) = \mathbf{ff} = t( F_\pi A)$, contradicting \eqref{eq:BoolOracle:new}.%
\end{trivlist}
\end{proof}

\begin{theorem}\label{improved:lemmaGR-flow}
For $t \in \T$ and $A\in\A$, %
$
t(A) = t(\mu F) \iff 
\forall x \in \V.\, (x,t)\notin \hflows_A
$.
\end{theorem}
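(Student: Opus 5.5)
The natural route is to reduce the statement to \autoref{improved:lemmaGR} via the term-level analogue of \autoref{flows=unionOFfutureNows}. Directly from the definitions, with $\pi$ split off as a prefix,
\[
\hflows_A = \bigcup_{\pi \in \V^*} \hc_{F_\pi(A)} \,,
\]
because $t(F_{\pi\pi'}(A)) \neq t(F_{\pi x\pi'}(A))$ says exactly that $\pi'$ witnesses $(x,t) \in \hc_{F_\pi(A)}$. Every $F_\pi(A)$ lies in $\A$, so \autoref{improved:lemmaGR} applies to each of these assignments.

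For ($\Rightarrow$), assume $t(A) = t(\mu F)$ and fix any $\pi \in \V^*$. By \autoref{th:partialKleene} we have $A \sqsubseteq F_\pi(A) \sqsubseteq \mu F$ (the upper bound is shown in the proof of part (b)). Monotonicity of $t$ then gives $t(A) \sqsubseteq t(F_\pi(A)) \sqsubseteq t(\mu F) = t(A)$, so $t(F_\pi(A)) = t(\mu F)$. Applying \autoref{improved:lemmaGR} at $F_\pi(A)$ yields $(x,t) \notin \hc_{F_\pi(A)}$ for every $x$. Taking the union over $\pi$ gives $(x,t) \notin \hflows_A$.

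For ($\Leftarrow$), we have $\hc_A \subseteq \hflows_A$ (take $\pi = \e$). Hence the hypothesis gives $(x,t) \notin \hc_A$ for every $x$, and \autoref{improved:lemmaGR} yields $t(A) = t(\mu F)$.

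The only mildly delicate step is the sandwich $A \sqsubseteq F_\pi(A) \sqsubseteq \mu F$ for an arbitrary $A \in \A$ rather than for $\bot$. It follows because $A = F_\sigma(\bot)$ for some $\sigma$, so $F_\pi(A) = F_{\sigma\pi}(\bot)$. Extensivity along the path comes from \autoref{th:partialKleene}(a) applied step by step, and the bound by $\mu F$ comes from the first half of the proof of \autoref{th:partialKleene}(b). With that in place, everything else is a direct unfolding of the definitions.
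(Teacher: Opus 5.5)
Your proof is correct and matches the paper's argument essentially step for step. Both proofs decompose $\hflows_A$ as $\bigcup_{\pi \in \V^*} \hc_{F_\pi(A)}$ and reduce each direction to \autoref{improved:lemmaGR}: via $\hc_A \subseteq \hflows_A$ for ($\Leftarrow$), and via the sandwich $t(A) \sqsubseteq t(F_\pi(A)) \sqsubseteq t(\mu F)$ for ($\Rightarrow$). Your explicit derivation of $A \sqsubseteq F_\pi(A) \sqsubseteq \mu F$ from $A = F_\sigma(\bot)$ simply spells out a step the paper leaves implicit.
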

\begin{proof}
Similarly to~\autoref{flows=unionOFfutureNows}, for every $A\in \A$ we have ${\hflows_A} = \bigcup_{\pi \in \V^*} \hc_{ F_{\pi}(A) }$.

($\Leftarrow$) If $(x,t)\notin\hflows_A$ for every $x\in\V$, then $(x,t)\notin\hc_A$ for every $x\in\V$, and so $t(A) = t ( \mu F)$ by \autoref{improved:lemmaGR}.

($\Rightarrow$) Assume that $t(A) = t ( \mu F)$. Fix $\pi \in \V^*$. Since $t$ has monotonic interpretation, we have that $t(A) \sqsubseteq t(F_\pi(A)) \sqsubseteq t(\mu F)$, and so also $t(F_\pi(A)) = t(\mu F)$. Now \autoref{improved:lemmaGR} gives $(x,t)\notin\hc_{F_\pi(A)}$ for every $x\in\V$. As the latter holds for an arbitrary $\pi \in \V^*$, we deduce that $(x,t)\notin \hflows_A$ for every $x \in \V$.
\end{proof}

\begin{proof}[of \autoref{LOCAL:prop:sound:oracles:from:liftings}]
Fix $A \in \A$ and $R \subseteq \V \times \V$ such that $\flows_A \subseteq R$. We have to verify that $\flows_A \subseteq \ora[\lift](V,A, R)$, and to this end consider a pair $(x,y) \in \flows_A$.
If $y \notin V$, then $(x,y) \in \ora[\lift](V,A, R)$ by definition, so from now on we let $y \in V$. 

Assume that $x \neq y$ and $y \in V$. From $(x,y) \in \flows_A$, let $\pi, \pi'\in \V^*$ be such that $F_{\pi'} F_{\pi}(A)(y) \neq F_{\pi'} F_x F_{\pi}(A) (y)$, and without loss of generality assume that $\pi'$ is minimal satisfying the latter inequality.
Then \autoref{remark:finally:used} yields that $\pi'$ has the form $\pi' = \pi''y$, and so 
$F_y F_{\pi''} F_{\pi}(A)(y) \neq F_y F_{\pi''} F_x F_{\pi}(A) (y)$.
The latter means exactly that $t_y ( F_{\pi''} F_{\pi}(A) ) \neq t_y ( F_{\pi''} F_x F_{\pi}(A) )$, and so $(x,t_y) \in \hflows_A$. 
By local soundness of $\lift$, we have $x \in \lift(A, R)(t_y)$. Then $(x,y) \in \ora[\lift](V,A,R)$ by definition.

Finally, consider the case when $x = y \in V$. If additionally $(z, t_x) \in \lift(A,R)$ for some $z$, then $(x,x) \in \ora[\lift](V,A, R)$ by definition. 
Otherwise, $x = y \in V$, and $(z, t_x) \notin \lift(A,R)$ for every $z$. As $\lift$ is locally sound, we obtain that $(z, t_x) \notin \hflows_A$ for every $z$, and then \autoref{improved:lemmaGR-flow} gives $t_x(A) = t_x(\mu F)$. 
On the other hand, the initial assumption $(x,x) \in \flows_A$ yields $A(x) \neq \mu F(x)$ by \autoref{char:xxINflows:xxINnow}, and so
\[
A(x) \neq \mu F(x) = t_x (\mu F) = t_x (A).
\]
This shows that $A(x) \neq t_x (A)$, and so $(x,x) \in \ora[\lift](V,A, R)$ by definition.
\end{proof}

\begin{proof}[of \autoref{extension:sound+anti-mon:inAssign:implies:locally:sound}]
Fix $A \in \A$ and $R \subseteq \V \times \V$ such that $\flows_A \subseteq R$. We have to verify that $\hflows_A \subseteq \{ (x,t) \mid x \in \lift(A, R)(t) \}$, and to this end consider a pair $(x,t) \in \hflows_A$. By definition, there is $\pi \in \V^*$ such that $(x,t) \in \hc_{ F_{\pi}(A) }$. As $\cc_{ F_{\pi}(A) } \subseteq \flows_A \subseteq R$ and $\lift$ is sound, we obtain that $x \in \lift(F_{\pi}(A),R)(t)$. Finally, the assumption on $\lift$ yields %
$\lift(F_\pi(A),R)(t) \subseteq \lift(A,R)(t)$, hence $x \in \lift(A,R)(t)$. 
\end{proof}

\end{document}